\documentclass[11pt,a4paper]{article}
\usepackage[margin=1in]{geometry}
\usepackage{microtype}
\usepackage{graphicx}
\usepackage{subcaption}
\usepackage{booktabs} 

\usepackage{hyperref}
\usepackage{algorithm}
\usepackage{algorithmic}
\usepackage{xcolor}
\usepackage{amsmath}
\usepackage{amssymb}
\usepackage{mathtools}
\usepackage{amsthm}
\usepackage{mleftright}
\usepackage{varwidth}
\usepackage[capitalize,noabbrev]{cleveref}

\usepackage{thmtools}
\usepackage{thm-restate}
\newtheorem{theorem}{Theorem}[section]

\newtheorem{lemma}[theorem]{Lemma}
\newtheorem{corollary}[theorem]{Corollary}
\theoremstyle{definition}
\newtheorem{definition}[theorem]{Definition}

\theoremstyle{remark}

\usepackage{authblk}

\title{Active Regression for Single-Index Models with Unknown Link Functions}

\author[1]{Chansophea Wathanak In\thanks{Supported by an NTU Research Scholarship.}}
\author[1]{Yi Li\thanks{Supported in part by the Singapore Ministry of Education AcRF Tier 1 grant RG21/25.}}
\author[2]{Wai Ming Tai}
\author[3]{Xuan Wu\thanks{Supported by the Singapore Ministry of Education AcRF Tier 1 grant RG21/25 while at Nanyang Technological University, where part of this work was completed.}}

\affil[1]{School of Physical and Mathematical Sciences, Nanyang Technological University, Singapore\\
\texttt{inch0002@e.ntu.edu.sg, yili@ntu.edu.sg}}
\affil[2]{Independent Researcher\\
\texttt{taiwaiming2003@gmail.com}}
\affil[3]{John Hopcroft Center for Computer Science, Shanghai Jiao Tong University, China\\
\texttt{wuxuan2026@sjtu.edu.cn}}

\usepackage{enumitem}

\newcommand{\eps}{\epsilon}

\newcommand{\eat}[1]{}

\DeclareMathOperator*{\argmin}{arg\,min}
\DeclareMathOperator{\sgn}{sgn}

\newcommand{\R}{\mathbb{R}}
\newcommand{\bS}{\mathbb{S}}
\DeclareMathOperator*{\E}{\mathbb{E}}

\DeclareMathOperator{\diam}{diam}

\newcommand{\cD}{\mathcal{D}}

\newcommand{\cN}{\mathcal{N}}

\newcommand{\OPT}{\ensuremath{\mathsf{OPT}}}

\newcommand{\lip}{\ensuremath{\mathsf{Lip}}}

\DeclareMathOperator{\poly}{poly}

\DeclareMathOperator{\diag}{diag}
\DeclareMathOperator{\Ber}{Ber}
\DeclareMathOperator{\Bin}{Bin}

\DeclarePairedDelimiter{\abs}{\lvert}{\rvert}

\DeclarePairedDelimiter{\norm}{\lVert}{\rVert}

\DeclarePairedDelimiterX{\inner}[2]{\langle}{\rangle}{#1 , #2}
\DeclarePairedDelimiterXPP\normp[1]{}{\lVert}{\rVert}{_p}{#1}

\newcounter{cases}
\newcounter{subcases}[cases]
\newenvironment{mycase}
{
	\setcounter{cases}{0}
	\setcounter{subcases}{0}
	\newcommand{\case}
	{
		\par\indent\stepcounter{cases}\textbf{Case \thecases.}
	}
	
}
{
	\par
}
\renewcommand*\thecases{\arabic{cases}}

\begin{document}

\date{}
\maketitle

\begin{abstract}
    This paper studies active regression for single-index models under general $\ell_p$-loss with an unknown $1$-Lipschitz link function $f$, formulated as $\min_{f,x} \|f(Ax)-b\|_p^p$ with full access to $A$ but coordinate-query access to $b$. 
    Prior work established upper bounds for known link functions for all $p\geq 1$ and for unknown link functions only in the $p=2$ case, together with lower bounds for $p\leq 2$.
    This work addresses the more challenging setting of unknown link functions and general $p \geq 1$.
    A non-adaptive sampling algorithm is presented that achieves a 
    $(1+\eps)$-approximation using $O(d^{p/2\vee 1}/\eps^{p\vee 2}\poly\log(n/\eps))$ queries. 
    Nearly tight lower bounds are also established for $p>2$. These results close much of the remaining gap in active $\ell_p$-regression for single-index models.
\end{abstract}

\section{Introduction}
The $\ell_p$-regression problem is a fundamental task in randomized numerical linear algebra (RandNLA). Given a data matrix $A\in\R^{n\times d}$ with $n\gg d$ and a label vector $b\in \R^n$, the goal is to find a vector $x$ that minimizes the residual $\norm{Ax-b}_p$. A standard paradigm to reduce the computational complexity is row sampling: one constructs a (randomized) row-sampling matrix $S$, meaning that each row has exactly one nonzero entry, such that the solution $\hat{x}$ to the sketched problem, i.e.\ $\hat{x} = \argmin_x \norm{S(Ax-b)}_p$, approximately solves the original problem in the sense that 
\[
\norm{A\hat{x}-b}_p \leq (1+\eps)\min_x \norm{Ax - b}_p
\]
with probability at least $0.9$. 
After over a decade of intensive research, the sampling complexity for $\ell_p$-regression is now well-understood for all values of $p\geq 1$; see, e.g.,~\cite{w14,taisuke:thesis}. In particular, $S$ can be constructed by sampling rows of $(A\ \ b)$ according to the Lewis weights (a generalization of leverage scores)~\cite{LT91,Cohen15}. The resulting sample complexity is $\tilde{O}(d/\eps^2)$ rows for $p=1$ and $\tilde{O}(d^{\max\{p/2,1\}}/\eps)$ rows for $p > 1$~\cite{taisuke:thesis}. 

Remarkably, this framework extends to \emph{active regression}, where the matrix $A$ is fully known but access to $b$ is restricted to coordinate queries. In this setting, one can form $S$ using only Lewis weights of $A$. The sample complexity to achieve the same $(1+\eps)$-approximation guarantee is $\tilde{O}(d/\eps^2)$ queries for $p=1$~\cite{price:regression} and $\tilde{O}(d^{\max\{p/2,1\}}/\eps^{p-1})$ queries for $p > 1$~\cite{musco22}. Both bounds are known to be tight up to logarithmic factors, even when the samples are allowed to be adaptive~\cite{price:regression,taisuke:thesis}.

Despite the mature understanding of linear $\ell_p$-regressions, linear models cannot capture common nonlinear behaviours, such as piecewise-linear activations (e.g.\ ReLU) arising from neural architectures. 
This limitation has motivated a shift to nonlinear regression problems in recent years, notably the \emph{single-index model}, which generalizes linear regression by composing the linear predictor $Ax$ with a link function $f$. 

In this paper, we consider Lipschitz link functions and study the single-index problem as a purely deterministic regression task:
\begin{equation}\label{eqn:problem_formulation}
    \min_{f\in\lip_1, x\in \R^d} \norm{f(Ax) - b}_p,
\end{equation}
where 
\[
    \lip_L = \{f\in C(\R): |f(x) - f(y)|\leq L|x-y| \text{ for all $x,y\in\R$ and }f(0) = 0\}
\]
and $f$ is applied entrywise, i.e., $f(z) = (f(z_1),\dots,f(z_n))$ for $z\in\R^n$. The goal is to approximately solve the regression problem in the active setting introduced earlier using as few queries to coordinate of $b$ as possible.

Gajjar, Hegde, and Musco~\cite{gajjar23a} first investigated the single-index regression problem from a RandNLA perspective in the case of $p=2$ with a \emph{known} link function $f$, i.e., the minimization in \eqref{eqn:problem_formulation} is only over $x\in\R^d$. With a query complexity of $O(d^2/\eps^4)$, their algorithm outputs $\hat{x}$ that satisfies a constant-factor error bound
\[
    \norm{f(A\hat{x})-b}_p^p \leq C\left(\norm{f(Ax^\ast) - b}_p^p + \eps\norm{Ax^\ast}_p^p\right),
\]
with probability at least $0.9$. Here, $C$ is an absolute constant and $x^\ast$ denotes an optimal solution. (If the optimal solution is not unique, we choose $x^\ast$ among the optimal solutions to minimize $\norm{Ax^\ast}_p^p$.) It was also shown in \cite{gajjar23a} that the additive error is necessary if one aims for query complexity polynomial in $d$.
This complexity was subsequently improved to $\tilde{O}(d/\eps^2)$ in~\cite{colt} and generalized to $\tilde{O}(d^{\max\{p/2,1\}}/\eps^4)$ for $p\geq 1$ in~\cite{ICLR24}. The current state-of-the-art result in~\cite{iclr2025} achieves a query complexity of $\tilde{O}(d^{\max\{p/2,1\}}/\eps^{\max\{p,2\}})$ for $p\geq 1$ and a $(1+\eps)$-approximation, i.e.\ the output $\hat{x}$ satisfies
\begin{equation}\label{eqn:known_f_obj}
    \norm{f(A\hat{x})-b}_p^p \leq (1+\eps)\norm{f(Ax^\ast) - b}_p^p + \eps\norm{Ax^\ast}_p^p
\end{equation}
with probability at least $0.9$. Moreover, Li and Tai~\cite{iclr2025} showed that this query complexity is tight up to logarithmic factors for $1 \leq p\leq 2$ and that the $1/\eps^p$ dependence is tight for $p > 2$, leaving open the question of a matching lower bound for $p > 2$.

Much less is known when the link function is \emph{unknown}, which is exactly \eqref{eqn:problem_formulation} and our focus in this paper. To date, the only result in this setting, from~\cite{colt}, gives a query complexity of $O(d/\eps^2\poly\log n)$ and a constant-factor approximation for the special case $p=2$ only. The algorithm there outputs a pair $(\hat{f},\hat{x})$ satisfying the guarantee:
\[
    \norm{\hat{f}(A\hat{x})-b}_p^p \leq C\left(\norm{f^\ast(Ax^\ast) - b}_p^p + \eps\norm{Ax^\ast}_p^p\right),
\]
where $C$ is an absolute constant and $(f^\ast, x^\ast)$ denote the optimal solution to \eqref{eqn:problem_formulation}.
However, their analysis is very specific for $p=2$ and does not extend to other values of $p$ easily. Consequently, a significant gap remains in the understanding of single-index regression with unknown link functions.

\subsection{Our Results}
Our first contribution is an upper bound for the active single-index regression problem \eqref{eqn:problem_formulation}, achieving a $(1+\eps)$-approximation for general $p \geq 1$.
This result strictly improves the result in~\cite{colt} in both the approximation guarantee and the range of $p$.
In particular, we show that when the link function $f$ is unknown, the query complexity 
is only $\log n$ factors larger than the case of known $f$, consistent with known behaviour for the special case $p=2$. 

\begin{theorem}[Informal version of \Cref{thm:1+eps approx}]
  Let $A \in \R^{n \times d}$, $b \in \R^n$ and $\eps > 0$.   
  There exists a randomized algorithm which, with probability at least $0.9$, makes $O(d^{1 \vee \frac{p}{2}}/\eps^{p \vee 2}\cdot \poly\ln n)$ non-adaptive queries to the entries of $b$ and returns $(\hat{f},\hat{x}) \in \lip_1 \times \mathbb{R}^d$ satisfying the ($1+\eps$)-approximation guarantee 
  \begin{equation}\label{eqn:main_obj}
    \norm{\hat{f}(A\hat{x})-b}_p^p \leq (1+\eps)\norm{f^\ast(Ax^\ast) - b}_p^p + \eps\norm{Ax^\ast}_p^p.
\end{equation}
\end{theorem}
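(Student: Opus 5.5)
Our plan is to sample rows non-adaptively according to the $\ell_p$ Lewis weights of $A$, which are computable from $A$ alone. We then solve the reweighted empirical problem $\min_{f\in\lip_1,x}\norm{S(f(Ax)-b)}_p^p$ and output its minimizer $(\hat f,\hat x)$. The sampling matrix $S$ keeps row $i$ with probability $q_i=\min\{1,\beta w_i\}$ and rescales it by $q_i^{-1/p}$, where $\beta = d^{(p/2\vee1)-1}\eps^{-(p\vee 2)}\poly\log(n/\eps)$.

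The analysis mirrors the known-$f$ argument of Li and Tai. First we restrict to a bounded region. By a constant-factor sampling bound (Lewis-weight subspace embedding plus Markov on $\norm{S(f^\ast(Ax^\ast)-b)}_p^p$), every candidate with small empirical cost satisfies $\norm{Ax}_p^p \lesssim \OPT+\norm{Ax^\ast}_p^p$. Second, we write the cost difference as $\norm{f(Ax)-b}_p^p-\norm{f^\ast(Ax^\ast)-b}_p^p$ and show that $S$ preserves it up to additive error $\eps(\OPT+\norm{Ax^\ast}_p^p)$, uniformly over the class $\mathcal F=\{(f,x): f\in\lip_1,\ \norm{Ax}_p^p\le R\}$. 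Following the standard route, we symmetrize, bound the resulting Rademacher/Bernstein process by Dudley's entropy integral, and then take a high-moment bound to get failure probability $0.1$. The key pointwise estimate is $|f(a_i^\top x)|\le |a_i^\top x|\le w_i^{1/p}\cdot(\text{const})\,d^{(1/2-1/p)\vee 0}\norm{Ax}_p$. This is what lets Lewis weights control the variance of each sampled term no matter which Lipschitz $f$ is used.

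The main obstacle is the entropy of the class now that $f$ ranges over all of $\lip_1$. The step to try first is to notice that only the values $f(z_i)$ at the $n$ points $z=Ax$ matter. We then discretize in two stages. We take a net over $x$ in the $\ell_p$ ball of $\operatorname{col}(A)$, as in the known-$f$ case, with metric entropy $\lesssim d\log(1/\delta)$ at scale $\delta$ in the Lewis-weighted $\ell_\infty$ norm. For each fixed net point $x$, we sort the coordinates $z_i$, and $f$ becomes a 1-Lipschitz sequence along the sorted points. Because the sampled terms are weighted, we only need to approximate $f$ up to $\delta$ on the bounded range $|z|\le \max_i|z_i|$. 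So we quantize $f$ on a grid of breakpoints of mesh $\delta$ in function value, which gives a piecewise-linear interpolant on $O(\text{range}/\delta)$ knots with increments in $\{-\delta,0,\delta\}$. A direct count is exponential in range$/\delta$, which is too large. We therefore want to reduce the effective range to $\poly(n/\eps)$ scales via a dyadic decomposition by the magnitude of $w_i$ and $|z_i|$. In each layer only $\tilde O(\beta)$ sampled points matter, so counting $f$ restricted to the sampled points gives $\log|\text{net}|\lesssim \beta$-independent $\poly\log(n/\eps)$ extra factors. We expect exactly this to cost the extra $\poly\log n$ relative to the known-$f$ bound. To avoid the circularity of a net that depends on the sample, we condition on the sample set and use a chaining argument that is conditional on $S$.

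Finally we combine the pieces: $\hat f(A\hat x)$ has empirical cost at most that of $(f^\ast,x^\ast)$, and the uniform deviation bound converts this into \eqref{eqn:main_obj} after rescaling $\eps$. The query count is $\sum_i q_i=\beta\sum_i w_i=\beta d$, which gives $O(d^{1\vee p/2}/\eps^{p\vee 2}\poly\ln n)$. The split $\eps^{-2}$ for $p\le2$ versus $\eps^{-p}$ for $p>2$ comes from the $p$-th moment in the bounded-differences step, exactly as in the known-$f$ analysis.
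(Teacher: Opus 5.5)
There is a genuine gap at the first step. You solve the \emph{unregularized} sketched problem $\min_{f,x}\norm{S(f(Ax)-b)}_p^p$. You then claim that a subspace embedding plus Markov forces every candidate with small empirical cost to satisfy $\norm{Ax}_p^p\lesssim\OPT+\norm{Ax^\ast}_p^p$. A subspace embedding controls $\norm{SAx}_p$, not $\norm{Sf(Ax)}_p$, and with $f$ ranging over $\lip_1$ the claim is false. Take generic $A$ and $x=tx_0$ with $t\to\infty$. The sampled values $(Ax)_i$ then become pairwise far apart and far from $0$, so some $f\in\lip_1$ interpolates $b$ exactly on the sampled coordinates while being off by $\Omega(t)$ on the unsampled ones. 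Hence the sketched problem has minimizers with zero empirical cost, unbounded $\norm{Ax}_p$, and unbounded true cost, and no uniform-deviation bound over a bounded region can certify your output. The paper's \Cref{alg:main} adds the regularizer $\eps\norm{\Lambda Ax}_p^p$ to the sketched objective. Comparing with $(f^\ast,x^\ast)$ then yields, on a Markov event, only $\norm{A\hat x}_p^p\le R_0\asymp\OPT/\eps+\norm{Ax^\ast}_p^p$. The regularizer's value at $x^\ast$ is also part of the additive term in \eqref{eqn:main_obj}.

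The factor $1/\eps$ in that radius is the second gap. The deviation error scales like $\Gamma\propto R^{1/(2\vee p)}$, so a one-shot application of the deviation bound with $R=R_0$ costs an extra factor $1/\eps$ in the query count: $d/\eps^3$ for $p\le 2$ and $d^{p/2}/\eps^{p+1}$ for $p>2$. The paper reaches $\eps^{-(p\vee 2)}$ by the Li--Tai bootstrapping, and your plan has nothing in its place. That argument applies the deviation bound on $T_i$ with radius $R_i\approx X_i\OPT+Y_i\norm{Ax^\ast}_p^p$. It then uses optimality of $(\hat f,\hat x)$ to get $X_{i+1}\asymp X_i^{1-\theta}$ with $\theta=(1-\frac1p)\vee\frac12$. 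Iterating $O(\ln\ln\frac1\eps)$ times, with the failure probability split across rounds, brings the radius down to $R\lesssim\OPT+\poly\log(\frac1\eps)\norm{Ax^\ast}_p^p$.

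Your entropy steps also need repair. For the $x$-net, the volumetric bound $d\log(1/\delta)$ loses a factor $\sqrt d$ in Dudley's integral (e.g.\ $d^2/\eps^2$ instead of $d/\eps^2$ for $p\le 2$), so you really need Li--Tai's refined estimate. For $f$, you do not need a net of polylogarithmic log-size, which is impossible at fine scales. It suffices that $\ln\cN\lesssim M\ln\kappa/\delta$, where $M=\max_i\Lambda_{ii}M_i$, $M_i$ is the range of $(Ax)_i$ over the region, and $\kappa=\max_i M_i/M$. This bound is integrable for $p>1$ and is exactly what the log-scale reparametrization in \Cref{thm:aux_L_infty_norm} provides. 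By contrast, counting $f$ on the sampled points makes the entropy grow with $\abs{I}\approx\alpha n$, which cancels the gain from sampling. Finally, for $p=1$ the integral diverges at $0$ and must be truncated, which you do not address.
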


We remark that our algorithm remains valid if, in the problem formulation~\eqref{eqn:problem_formulation}, the link function $f$ is restricted to a subset $\mathcal{F} \subset \lip_1$ instead of ranging over the entire $\lip_1$.

Our second contribution gives a lower bound for the case of a known link function when $p>2$. In the large-dimensional regime, the known upper bound $\widetilde O(d^{p/2}/\eps^p)$ is tight up to logarithmic factors, even when the queries may be adaptive.
\begin{theorem}[Informal version of \Cref{thm:large-d-code-lb}]
	Let $p>2$ be constant. Suppose that $\eps>0$ is sufficiently small and $d\gtrsim_p \log(d/\eps)$, there exist a deterministic link function $f\in\lip_1$ and a deterministic matrix $A$ such that every randomized algorithm that queries coordinates of $b$ and outputs $\hat{x}\in\R^d$ which satisfies (\ref{eqn:known_f_obj}) with probability at least $4/5$ must query	
	\[
		\Omega_p\mleft(\frac{d^{p/2}}{\eps^p(\log(d/\eps))^{p/2}}\mright)
	\]
	entries of $b$.
\end{theorem}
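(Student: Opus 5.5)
The plan is a reduction from a communication/coding-style hard distribution to approximate single-index regression with a fixed link function. I would use the known lower bound for known link functions (the $1/\eps^p$ dependence of \cite{iclr2025}) as a template. I would combine it with a large-dimensional block construction that forces $d^{p/2}$ hard instances to be solved at once.

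\textbf{The matrix.} Take $A$ to consist of $m \approx d/\log(d/\eps)$ independent blocks. Within each block, the rows are the codewords of a constant-rate binary code of length $k\approx \log(d/\eps)$ with good distance, scaled so that the $\ell_p$ Lewis weights are uniform. This gives $A$ about $N = d^{p/2}/(\eps^p \log^{p/2})$ "effective" rows on which an algorithm must gather information. The $\ell_p$-geometry for $p>2$ allows $\Theta(d^{p/2})$ rows with comparable sensitivity, which is where the $d^{p/2}$ factor comes from.

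\textbf{The link function and labels.} Choose $f$ to be a fixed ReLU-type (or truncated identity) function in $\lip_1$. Plant a hidden sign vector $\sigma\in\{\pm1\}^{m'}$ into $b$. On each row, $b_i$ is a spike of height roughly $\eps^{-1}$ times the natural scale, and it is present with small probability tied to the planted sign.

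The target property is the following. Any $\hat{x}$ satisfying (\ref{eqn:known_f_obj}) with the additive term $\eps\norm{Ax^\ast}_p^p$ must correctly recover a constant fraction of the signs $\sigma$. This is because each wrong sign costs an additive $\Omega(\eps/m')$ fraction of $\norm{f(Ax^\ast)-b}_p^p + \norm{Ax^\ast}_p^p$. The decoding step uses the code distance: misrecovered coordinates translate, through $f$, into a constant fraction of misfit rows.

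\textbf{Information bound.} Each sign is encoded only in a rare spike pattern. Detecting it with constant advantage requires about $1/\eps^p$ queries localized to its block (the one-dimensional $\eps^{-p}$ argument). Adaptivity does not help, since the blocks are independent. A standard direct-sum / Fano argument then gives a total of $\Omega(m' \cdot \eps^{-p})$ queries. Choosing $m' \approx (d/\log(d/\eps))^{p/2}$ yields the claimed bound. The condition $d\gtrsim_p\log(d/\eps)$ ensures the code length fits inside the dimension.

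\textbf{Main obstacle.} I expect the hardest step to be the embedding: making $(d/\log)^{p/2}$ nearly orthogonal "directions" live in $d$ dimensions so that they can be decoded from $f(A\hat{x})$ with only $O(\eps)$ loss per direction. This needs a careful $\ell_p$ norm computation for $p>2$, in the spirit of Khintchine-type lower bounds on sums of random sign codewords. I would first try random $\pm1$ rows with Hadamard-style tensoring, and check that the cross terms are controlled by the $\log(d/\eps)^{p/2}$ slack.
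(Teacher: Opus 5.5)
There is a genuine gap, and it sits where you place your ``main obstacle'': the construction is never specified, and the direct-sum/Fano framework you build around it cannot work in this geometry.

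\emph{The block geometry does not supply the subproblems.} Your blocks give only about $d/L$ independent pieces ($L=\log(d/\eps)$), yet you need $m'\approx(d/L)^{p/2}\gg d/L$ subproblems. So the subproblems must share coordinates, and your appeal to block independence to neutralize adaptivity is no longer available. Moreover, binary codewords of length $\approx L$ inside a block have constant coherence. What is actually needed is a single code of $N$ unit vectors in all of $\R^{d-1}$ with coherence $\tau\asymp\sqrt{L/d}$.

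\emph{The decoding premise fails for every $x$.} You claim that any good $\hat x$ recovers noticeably more than half of $\sigma\in\{\pm1\}^{m'}$. This is false for every $x\in\R^d$, including $x^\ast$. Whether a subproblem's rows are fitted is decided by the cell of $x$ in an arrangement of $O(n)$ affine hyperplanes, with $n=\poly(d/\eps)$. Hence a single $x$ realizes at most $n^{O(d)}=2^{O(dL)}$ patterns. A uniformly random $\sigma$ is then within Hamming distance $(1/2-c)m'$ of none of them once $m'\gg dL$, which happens whenever $d\ge L^{C_p}$. An output in $\R^d$ cannot carry $\Omega(m')$ bits about the subproblems. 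So ``each wrong sign costs $\Omega(\eps/m')$ relative to $\OPT$'' cannot hold, since $x^\ast$ itself gets about half the signs wrong, and the per-subproblem $\eps^{-p}$ bounds have nothing to sum over.

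\emph{What the paper does instead.} It replaces the direct sum by a single-needle search.
\begin{itemize}
\item \emph{Construction.} Rows $a_{j,\pm}=(\pm u_j^\top,-\tau/B)$ over a spherical code of coherence $\tau$, two anchor rows on the last coordinate, $f(t)=t^+$, $B=\Theta_p(1/\eps)$, and $b^{(i)}=B(e_{0,+}+e_{i,+})$.
\item \emph{Handling the cross terms.} The bias $-\tau/B$ is the device that removes the cross terms you worry about. At $x=(u_i,B)$, every pre-activation $(j,\pm)$ with $j\neq i$ equals $\pm\inner{u_j}{u_i}-\tau\le 0$ and is killed by the ReLU. The cross terms therefore enter only through the benchmark, $\eps\norm{Ax^\ast}_p^p\lesssim\eps(B^p+N\tau^p)$.
\item \emph{Short-list lemma.} Compare this slack with the $\asymp\eps^{1-p}$ loss from failing to fit the spike. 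As long as $N\tau^p\lesssim\eps^{-p}$, every good $\hat x$ has the planted index in $H(\hat x)$, and $\abs{H(\hat x)}=O(\eps^{-p})$. This constraint with $\tau\asymp\sqrt{L/d}$, not a sensitivity count, is what yields $N\asymp d^{p/2}/(\eps^pL^{p/2})$.
\item \emph{Adaptivity.} Run the deterministic algorithm on the spike-free $b^{(0)}$. For every unqueried $i$, the execution and output on $b^{(i)}$ are identical. Hence all successful unqueried indices lie in one list of size $O(\eps^{-p})$, giving $q\ge 4N/5-O(\eps^{-p})=\Omega_p(N)$. Yao's principle then transfers this to randomized algorithms.
\end{itemize}
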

We remark that when $d\lesssim_p \log(d/\eps)$, the existing lower bound $\Omega(1/\eps^p)$ from \cite{iclr2025} already matches the upper bound $\tilde{O}(d^{p/2}/\eps^p)$ up to logarithmic factors. Thus, the regime of interest is $d\gtrsim_p \log(d/\eps)$, as assumed in the preceding theorem.

\subsection{Technique Overview}\label{sec:techniques}

\paragraph{Upper Bound.} The algorithm is a simple, non-adaptive sampling scheme following the template in~\cite{iclr2025,colt}: 
We perform weighted sampling of the coordinates of $b$ where the weights are determined by the Lewis weights of $A$.
While the algorithm is conceptually simple, the main technical contribution lies in the analysis required to handle the unknown link function.

Our analysis builds on the framework introduced in~\cite{iclr2025}, incorporating the ideas for unknown link functions from~\cite{colt}. To simplify the exposition, we assume that $A$ has uniformly bounded Lewis weights $O(d/n)$ and let $\psi_1,\dots,\psi_n$ be i.i.d.\ Bernoulli variables such that $\E \psi_i = \alpha$. Let $T$ be a bounded region in $\lip_1\times \R^d$ and $v$ be a fixed vector in $\R^n$.
The key technical task is to control a uniform sampling error of the following form:
\begin{equation}\label{eqn:Psi_intro}
  \Psi = \sup_{(f,x)\in T} \abs*{ \sum_{i = 1}^n 
    \left(\frac{1}{\alpha}\psi_i  - 1\right) Z_i(f,x) },
\end{equation}
where 
\[
    Z_i(f,x) = \abs{(f(Ax)-v)_i}^p - \abs{(f^*(A x^*)-v)_i}^p
\]
denotes the residual difference relative to a fixed reference point $(f^*,x^*)\in \lip_1\times T$.

Following the framework in~\cite{iclr2025}, $\Psi$ is bounded by separating coordinates $i$ according to the magnitude of the residual at the reference point $(\bar{f},\bar{x})$ and the Lewis weights of $A$.
Contributions from coordinates with large residuals or negligible Lewis weights can be controlled using arguments similar to those in~\cite{iclr2025}.
The core technical challenge is to control the remaining contribution from the set of coordinates $J$ that simultaneously have small residuals and non-negligible Lewis weights.
This term, denoted $\Psi'$, is the focus of the subsequent analysis. After symmetrization, we have
\[
    \E \Psi'\lesssim \E_{\xi, \psi} \frac{1}{\alpha}\sup_{(f,x)\in T} \abs*{\sum_{i\in I} \xi_i Z_i(f,x)},
\]
where $\xi_i$ are independent Rademacher variables and $I\subseteq J$ is the random set of indices $i$ with $\E \psi_i = 1$. As in~\cite{colt}, this term can be controlled via Dudley's integral applied to an Rademacher process indexed by $T$. 

A natural approach is to decouple $f$ and $x$, reducing the Dudley's integral for $T$ to two  Dudley integrals, one over $\pi_1(T)$ and one over $\pi_2(T)$, where $\pi_1(T)$ and $\pi_2(T)$ denote the projections of $T$ onto $\lip_1$ and $\R^d$, respectively.
We note that the decoupling here is more delicate than in~\cite{colt}. In that work, the index set has the product structure $T=\lip_1\times B$ with $B$ a ball in $\R^d$, which allows for a clear separation of $f$ and $x$. However, their result only obtains a constant-factor approximation.

In contrast, when aiming for a $(1+\eps)$-approximation, the only existing argument is the one in~\cite{iclr2025} for the case of a known link function. Adopting this approach leads to index sets $T$ that are not necessarily rectangular; that is, there may not exist sets $X \subseteq \lip_1$ and $Y \subseteq \R^d$ such that $T = X \times Y$. As a result, the decoupling of $f$ and $x$ cannot be carried out at the level of the index set and will instead be introduced halfway through the chain of inequalities used to bound the associated Rademacher process.

The Dudley's integral corresponding to $\pi_2(T)$ has been handled in~\cite{iclr2025}, whereas controlling the $\pi_1(T)$ component is new to this work and constitutes the main technical challenge that arises from the combination of an unknown link function and a general $\ell_p$-norm. Existing arguments from~\cite{colt}, which are tailored to the $p=2$ case, do not generalize easily to other values of $p$.

To address this issue, we develop a simpler and more direct approach for controlling the metric entropy of $\lip_1$. Rather than explicitly discretizing Lipschitz functions as in~\cite{colt}, which leads to delicate and technically involved arguments, we abstract the core difficulty into a covering problem of $\lip_1$ under a family of nonstandard sup-norms. This perspective captures the underlying geometry of the problem while avoiding the technical overhead of explicit discretization and the complicated arguments for controlling set sizes in relation to the discretized functions. The key technical ingredient is the following covering lemma, which may be of independent interest.

\begin{lemma}[Informal version of \Cref{thm:aux_L_infty_norm}]\label{lem:aux_L_infty_norm_intro}
Consider the class $\lip_1$ equipped with a norm defined as the maximum of weighted $L^\infty$-norms over possibly different intervals, 
\[
    \|f\| := \max_{i\in I} \lambda_i \|f\|_{L^\infty([-M_i,M_i])},
\]
where each $\lambda_i\in(0,1)$ is a weight. It then holds that
\[
    \cN(\lip_1, \norm{\cdot}, \eps) \leq \cN\mleft(\lip_1, \norm{\cdot}_{L^\infty([-1,1])}, \frac{\eps}{CM\ln \kappa}\mright),
\]
where $C > 0$ is an absolute constant, $M = \max_{i\in I} \lambda_i M_i$ and $\kappa = (\max_{i\in I}M_i)/M$.
\end{lemma}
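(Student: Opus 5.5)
The plan is to reduce the weighted norm to a dyadic family of intervals. Then I would cover each dyadic annulus separately at its own natural precision, and finally compare the resulting metric entropy with the classical Kolmogorov--Tikhomirov estimate $\ln \cN(\lip_1,\norm{\cdot}_{L^\infty([-1,1])},\delta)\asymp 1/\delta$.

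\textbf{Step 1 (normalisation and dyadic grouping).} Rescaling $g(t)=f(Mt)/M$ maps $\lip_1$ to itself and multiplies every term of $\norm{\cdot}$ by $1/M$. So I may assume $M=1$, which gives $\lambda_i M_i\le 1$ for all $i$, and then $\kappa=\max_i M_i$. I may also assume $\kappa\ge e$, replacing $\kappa$ by $\max\{\kappa,e\}$, since a larger $\kappa$ only weakens the claim.

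I would then group the indices by scale. If $M_i\le 1$, then $\lambda_i\norm{h}_{L^\infty([-M_i,M_i])}\le \norm{h}_{L^\infty([-1,1])}$. If $M_i\in(2^{k},2^{k+1}]$, then $\lambda_i\le 2^{-k}$, and hence $\lambda_i\norm{h}_{L^\infty([-M_i,M_i])}\le 2^{-k}\norm{h}_{L^\infty([-2^{k+1},2^{k+1}])}$. Thus, with $K=\lceil\log_2\kappa\rceil$,
\[
\norm{h}\le \max_{0\le k\le K} 2^{-k}\norm{h}_{L^\infty([-2^{k+1},2^{k+1}])}=:\norm{h}_\ast,
\]
and it suffices to bound the covering number of $\lip_1$ under $\norm{\cdot}_\ast$. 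Because the relevant differences $h=f-g$ are not in $\lip_1$, I would use covers with arbitrary centres and then pass to centres in $\lip_1$ at the cost of a factor $2$ in the radius.

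\textbf{Step 2 (annulus-wise cover).} Set $R_0=[-1,1]$ and $R_k=[-2^{k+1},2^{k+1}]\setminus[-2^{k},2^{k}]$ for $1\le k\le K+1$. Since $2^{-k}\norm{h}_{L^\infty([-2^{k+1},2^{k+1}])}\le \max_{j\le k}2^{1-j}\norm{h}_{L^\infty(R_j)}$, it is enough to approximate each $f\in\lip_1$ on $R_k$ to within $\eps 2^{k-2}$, independently for each $k$.

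On one half of $R_k$, write $f(t)=f(2^k)+2^k\phi(t/2^k-1)$ with $\phi\in\lip_1([0,1])$ and $\phi(0)=0$; the other half is handled symmetrically using $f(-2^k)$. The offset satisfies $\abs{f(\pm2^k)}\le 2^k$, so it can be quantised to precision $\eps2^{k-3}$ with $O(1/\eps)$ choices. The shape $\phi$ needs a cover at precision $\eps/8$ in $L^\infty([0,1])$, which has size $\exp(O(1/\eps))$. This count is independent of $k$ precisely because of the $2^{-k}$ weights. Since the offsets are quantised absolutely rather than propagated from annulus to annulus, the errors do not accumulate.

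Multiplying over the $K+2$ annuli gives
\[
\ln\cN(\lip_1,\norm{\cdot},\eps)\le C_1 K\mleft(\tfrac1\eps+\ln\tfrac1\eps\mright)\le \frac{C_2\ln\kappa}{\eps}.
\]

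\textbf{Step 3 (comparison).} The lower bound $\ln\cN(\lip_1,\norm{\cdot}_{L^\infty([-1,1])},\delta)\ge c/\delta$ follows from the usual packing by $\pm$ zig-zag functions with $\lceil 1/\delta\rceil$ teeth of height $\delta$. Choosing $\delta=\eps/(C\ln\kappa)$ with $C\ge C_2/c$ gives $\ln\cN(\lip_1,\norm{\cdot}_{L^\infty([-1,1])},\delta)\ge C_2\ln\kappa/\eps$. This is at least the bound from Step 2, and undoing the rescaling reinstates the factor $M$.

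\textbf{Main obstacle.} The delicate point is Step 2: the pieces chosen on different annuli must assemble into a single approximant, the errors must not compound across scales, and the centres must be moved back into $\lip_1$ without losing more than constant factors. Quantising each offset $f(\pm2^k)$ independently should handle the consistency. If a genuinely Lipschitz centre is needed, I would take a nearest element of $\lip_1$ under $\norm{\cdot}_\ast$, which costs only a factor $2$.
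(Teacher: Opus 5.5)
Your proof is correct in substance, but it takes a genuinely different route from the paper. The paper argues structurally. It builds one linear map $T$ from functions on $[-\kappa M,\kappa M]$ to functions on $[-1,1]$:
\begin{itemize}
\item $x\in[-\beta,\beta]$ is identified with $t=Mx/\beta$, where $\beta=1/(1+\ln\kappa)$, and $T$ records $f(t)/M$;
\item $\beta<\abs{x}\le 1$ is identified logarithmically with $M<\abs{t}\le\kappa M$, and $T$ records $f(t)/\abs{t}$.
\end{itemize}
It then checks two facts: $Tf$ is $2(\ln\kappa+1)$-Lipschitz with $Tf(0)=0$, and $\norm{f}\le M\norm{Tf}_{L^\infty([-1,1])}$. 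The covering inequality follows for every $\eps$ by pulling back a cover of $\lip_{2(\ln\kappa+1)}$ on $[-1,1]$. This uses no knowledge of how large $\cN(\lip_1,\norm{\cdot}_{L^\infty([-1,1])},\cdot)$ actually is.

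Your dyadic shells are the discrete version of this map. In the paper's coordinates each shell $\{2^k<\abs{t}\le 2^{k+1}\}$ occupies an interval of length $\ln 2/(1+\ln\kappa)$, and dividing by $\abs{t}$ plays the role of your per-shell normalisation by $2^k$. Instead of an embedding, though, you count directly to get $\ln\cN(\lip_1,\norm{\cdot},\eps)\lesssim M\ln\kappa/\eps$. You then need the matching Kolmogorov--Tikhomirov lower bound to recast this as the stated comparison.

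\emph{What your route buys.} It makes visible that $\ln\kappa$ is just the number of scales. Moreover, your Step~2 alone is all the paper uses downstream, since the paper immediately composes the comparison with Lemma~\ref{lem:lipschitz_entropy_bound}.

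\emph{What it costs.} The comparison is obtained only indirectly, through two-sided entropy asymptotics for $\lip_1$, so $C$ depends on both the upper and the lower Kolmogorov--Tikhomirov constants. You must also treat separately the regime where $\delta=\eps/(C\ln\kappa)$ is not small, which is easy after normalising. For $\eps>1$ the left side equals $1$, because $\norm{f}\le\max_i\lambda_iM_i=1$. For $\eps\le 1$ one has $\delta\le 1/C$, so the zig-zag packing applies.

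Two small repairs are needed.
\begin{enumerate}
\item Your shells $R_k=[-2^{k+1},2^{k+1}]\setminus[-2^k,2^k]$, $k\ge 1$, miss the region $1<\abs{t}\le 2$. Either add that shell with weight $1$, or reindex the shells as $\{2^{k-1}<\abs{t}\le 2^k\}$ and let the maximum run over $j\le k+1$.
\item Replacing $\kappa$ by $\max\{\kappa,e\}$ is not a without-loss-of-generality step. It proves the weaker bound with $\ln\max\{\kappa,e\}\asymp\ln\kappa+1$, which is exactly what the formal Theorem~\ref{thm:aux_L_infty_norm} asserts. Some such change is unavoidable: with a bare $\ln\kappa$ the claim fails as $\kappa\to 1$. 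Take a single index with $M_1=1$ and $\lambda_1\to 1$. Then the right-hand radius $\eps/(C\lambda_1\ln(1/\lambda_1))$ blows up, while for $\eps<1/2$ the left-hand side stays at least $2$.
\end{enumerate}
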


We note that the compactness of $\lip_1$ on bounded domains and standard metric entropy bounds for Lipschitz functions ensure finiteness of covering numbers and, for $p>1$, the convergence of Dudley's integral. When $p=1$, the integral does not converge at $0^+$. We therefore control the supremum of the Rademacher process by the sum of a truncated entropy integral and the remaining local oscillation, which results in the same qualitative guarantees as $p>1$.

Finally, as in~\cite{iclr2025}, a bootstrapping argument is used to achieve a $(1+\eps)$-approximation.

\paragraph{Lower Bound.}
The hard instance hides a uniformly random index $I\in[N]$ among $N$ nearly orthogonal unit vectors with coherence at most $\tau$. The matrix $A$ contains a pair of rows for each vector, together with two anchor rows, so we index the rows of $A$ by $(i,+)$ and $(i,-)$, where $i=0,\dots,N$, with $i=0$ corresponding to the anchor rows. The link function $f$ is ReLU, i.e., $f(t) = t^+ = \max\{0,t\}$. For $B=\Theta_p(1/\eps)$, the response vectors are
\[
	b^{(0)}=Be_{0,+},
	\qquad
	b^{(i)}=B(e_{0,+}+e_{i,+}),\quad i\in[N].
\]
Notice that $b^{(i)}$ differs from the baseline $b^{(0)}$ only at the coordinate $(i,+)$.

The main structural step is a short-list guarantee. Given an approximate solution $\widehat x$, define
\[
	H(\widehat x)
	:=\left\{j\in[N]:((A\widehat x)_{j,+})^+\geq\frac{1-\tau}{2}\right\}.
\]
If $N\tau^p\leq\kappa_p\eps^{-p}$, then every solution satisfying the required approximation guarantee \eqref{eqn:known_f_obj} obeys
\[
	i\in H(\widehat x),
	\qquad
	|H(\widehat x)|\leq C_p\eps^{-p}
\]
on input $b^{(i)}$. The first conclusion follows because otherwise the planted coordinate has a large residual. The second follows because every incorrect index in $H(\widehat x)$ contributes a constant amount to the objective while the total objective value is $O(\eps^{-p})$. The anchor rows are used to control the additive term $\norm{Ax^\ast}_p^p$.

To prove the lower bound for randomized algorithms, Yao's minimax theorem reduces the problem to deterministic algorithms under a hard input distribution. We take $I$ uniform on $[N]$ and set $b=b^{(I)}$. Fix a deterministic algorithm with worst-case query complexity $q$ and run it on $b^{(0)}$. Let $Q$ be the set of indices whose coordinates $(j,+)$ it queries, and let $\widehat x^{(0)}$ denote its output. For every $i\notin Q$, the execution on $b^{(i)}$ is identical to the execution on $b^{(0)}$. Hence, if $S$ is the set of inputs on which the algorithm succeeds, then
\[
	S\setminus Q\subseteq H(\widehat x^{(0)}),
	\qquad
	|S|\leq q+C_p\eps^{-p}.
\]
Since success probability $4/5$ gives $|S|\geq4N/5$, taking $N\geq K_p\eps^{-p}$ yields $q=\Omega_p(N)$.

It remains to balance the spherical-code condition $\log N\lesssim d\tau^2$ with the recovery condition $N\tau^p\lesssim_p\eps^{-p}$. Setting $L=\log(2d/\eps)$ and choosing
\[
	\tau\asymp_p\sqrt{\frac{L}{d}},
	\qquad
	N\asymp_p\frac{d^{p/2}}{\eps^pL^{p/2}}
\]
satisfy both conditions when $d\gtrsim_pL$ and give the claimed lower bound.

\section{Preliminaries}

\paragraph{Notation.} 
We use the shorthand $a\vee b=\max\{a,b\}$ and $a\wedge b=\min\{a,b\}$ for any $a,b\in \R$. We also write $[n] = \{1,\dots,n\}$ for positive integers $n$.

We write $X\sim\mathcal{D}$ to denote that a random variable $X$ follows the distribution $\mathcal{D}$. For a scalar $c\in\R$, the notation $X\sim c\mathcal{D}$ means that $X$ has the same distribution as $cY$ for $Y\sim \mathcal{D}$. We use $\Ber(\alpha)$ to denote the Bernoulli distribution with parameter $\alpha$, i.e., the distribution of a random variable that equals $1$ with probability $\alpha$ and $0$ with probability $1-\alpha$. We use $\Bin(n,\alpha)$ to denote the binomial distribution corresponding to $n$ independent Bernoulli trials with success probability $\alpha$.

For nonnegative functions $f$ and $g$, we write $f \lesssim g$ if $f \leq C g$ and $f \gtrsim g$ if $f \geq c g$, where $C$ and $c$ are positive constants. When the constant depends on a variable $a$, we write $f \lesssim_a g$ and $f \gtrsim_a g$. We write $f \asymp g$ when both $f \lesssim g$ and $f \gtrsim g$ hold.

\paragraph{Lewis weights.} Now we recall the definition Lewis weights and some of their basic properties. The forms presented below are from~\cite{iclr2025}.

\begin{definition}[$\ell_p$-Lewis weights]
	Let $A\in \R^{n\times d}$ and $p\geq 1$. For each $i\in [n]$, the $\ell_p$-Lewis weight of $A$ for the $i$-th row is defined to be $w_i$ that satisfies
	\begin{align*}
		w_i(A)
		& =
		(a_i^\top (A^\top W^{1-\frac{2}{p}} A)^{\dagger} a_i)^{\frac{p}{2}}
	\end{align*}
	where $a_i$ is the $i$-th row of $A$ (as a column vector), $W = \diag\{w_1,\dots,w_n\}$ and $\dagger$ denotes the pseudoinverse.
\end{definition}

When the matrix $A$ is clear from context, we simply write $w_i$ instead of $w_i(A)$.

We remark that exact Lewis weights are not needed; constant-factor approximations suffice, increasing the sample complexity by only a constant factor. For simplicity of presentation, however, the algorithms will be stated as if exact Lewis weights were used.

When $p<4$, Cohen and Peng~\cite{Cohen15} show that constant-factor approximate Lewis weights can be computed in $O(\log\log n)$ rounds of leverage-score estimation, with each round taking $\tilde{O}(nd+\poly(d))$ time. When $p>4$, Apers, Gribling, and Sidford~\cite{AGS24} show that they can be computed in $\tilde{O}(d)$ rounds of leverage-score estimation. Consequently, the total running times for approximating the Lewis weights are $\tilde{O}(nd+\poly(d))$ and $\tilde{O}(nd^2+\poly(d))$ in the two regimes, respectively.

\begin{algorithm}[bt]
	\caption{$\texttt{GSM}(k_1,\dots,k_n,\alpha)$ (Generating Sampling Matrix)}\label{alg:sampling}
	\begin{algorithmic}[1]
		\REQUIRE{%
				$n$ integers $k_1,\dots,k_n\geq 0$; a sampling rate $\alpha<1$\strut%
		}%
		\STATE $S \gets$ an $n\times n$ diagonal matrix, initialized to a zero matrix
		\FOR{$i=1,\dots,n$}
		\IF{$k_i > 0$}
		\STATE Generate a binomial variable $N_i\sim \Bin(k_i,\alpha)$
		\STATE $S_{ii} \gets (\frac{N_i}{\alpha k_i})^{\frac{1}{p}}$ \label{step:row_split}
		\ENDIF
		\ENDFOR
		\STATE Return $S$
	\end{algorithmic}
\end{algorithm}

\begin{algorithm}[t]
	\caption{Algorithm for Active Learning}\label{alg:main}
	\begin{algorithmic}[1]
		\REQUIRE{
			\begin{varwidth}[t]{\linewidth}
				a matrix $A \in\mathbb{R}^{n \times d}$ \\
				a query access to the entries of the vector $b\in\mathbb{R}^n$ \\
				an error parameter $\eps$ \\
				a sampling rate $\alpha < 1$ \strut
		\end{varwidth}}
		\STATE Compute the Lewis weights $w_1,\dots,w_n$ of $A$
		\FOR{$i=1,\dots,n$}
		\STATE $k_i \gets \lceil \frac{n \cdot w_i}{d} \rceil$
		\ENDFOR
		\STATE $S \gets \texttt{GSM}(k_1,\dots,k_n,\alpha)$ from \Cref{alg:sampling}
		\STATE Solve the minimization problem 
        \[
        (\hat{f},\hat{x}) := \argmin_{f\in\lip_1,x\in \mathbb{R}^d} \normp{Sf(Ax) - Sb}^p + \eps\normp{Ax}^p
        \]
		\STATE Return $(\hat{f},\hat{x})$
	\end{algorithmic}
\end{algorithm}

\section{Upper Bound}

As previously noted, our approach adopts the framework in~\cite{iclr2025}. The algorithm is presented in \Cref{alg:main}. It is identical to that of~\cite{iclr2025}, with the only difference being that the sketched optimization problem includes minimization over $f$.
We briefly describe the algorithm below. 

For the $i$-th row of $A$, let $k_i = \lceil w_i(A)/(d/n)\rceil$ denote its multiplicity.
Instead of sampling rows directly with probabilities proportional to their Lewis weights $w_i(A)$, we conceptually ``split'' the $i$-th row into $k_i$ identical copies. These copies are sampled independently with a fixed global probability $\alpha = \alpha(d,\eps,p,n)$. The sampled copies are then re-assembled into a single row and rescaled appropriately, yielding a sketched instance that preserves the original objective value.
The algorithm then solves a regularized single-index regression problem of the sketched instance to obtain an approximate solution $(\hat{f},\hat{x})$.

For the purpose of analysis, it is convenient to work directly with the split instance. Specifically, we replace the $i$-th row of $A$ (and the $i$-th entry of $b$) by $k_i$ identical copies, forming an expanded matrix $A'$ (and vector $b'$). Let $m$ be the total number of rows of $A'$ and define an $m\times m$ diagonal matrix
\[
\Lambda = \diag\{\underbrace{k_1^{-1/p},\dots,k_1^{-1/p}}_{\text{$k_1$ times}},\dots,\underbrace{k_n^{-1/p},\dots,k_n^{-1/p}}_{\text{$k_n$ times}}\}.
\]
It is shown in~\cite{iclr2025} that this construction ensures
\begin{enumerate}[topsep=0pt,itemsep=0pt,label=(\roman*)]
\item $m\leq 2n$, 
\item (norm preserving) $\norm{\Lambda(f(A'x)-b')}_p^p = \norm{f(Ax)-b}_p^p$ for all $x\in\R^d$, and 
\item (uniformly bounded Lewis weights) $w_i(\Lambda A')\leq d/n\leq 2d/m$ for all $i\in [m]$. 
\end{enumerate}
Therefore, the single-index regression problem can be equivalently reformulated using $(A',b')$, without changing either the optimal solution or the objective value. For notational simplicity, in the remainder of the paper we continue to write $A$ and $b$, with the understanding that they have been preprocessed to satisfy the properties above.

Once uniformly bounded Lewis weights (of $\Lambda A$) are in place, the algorithm simply samples the coordinates of $b$ independently with probability $\alpha$ and rescales by $\alpha^{-1/p}$. This corresponds to multiplying $f(Ax)$ and $b$ by a diagonal matrix $S$ in which $S_{ii}$ are i.i.d.\ $\alpha^{-1/p}\Ber(\alpha)$ variables.

Analogous to~\cite{iclr2025}, our main theorem in the analysis is the following. 
\begin{restatable}{theorem}{MainErrorBound}  \label{thm:main bound}
Let $p > 1$ be a constant, $A \in \R^{n \times d}$, $\epsilon \in (0,1)$ be sufficiently small, and $\Lambda \in \R^{n \times n}$ be a diagonal matrix with entries $\Lambda_{ii} > 0$ such that $w_i(\Lambda A) \lesssim d/n$ for all $i \in [n]$. Let $v \in \R^n$ be a fixed vector, $(\bar{f}, \bar{x}) \in \lip_1 \times \R^d$ be a fixed reference point, and $V \coloneqq \|\Lambda(\bar{f}(A\bar{x}) - v)\|_p^p$. 

Let $R \geq \|\Lambda A \bar{x}\|_p^p$ and $F \geq V$ be fixed upper bounds. Suppose the set $T$ satisfies 
\[
	\{(\bar{f}, \bar{x})\} \subseteq T \subseteq \{ (f, x) \in \lip_1 \times \mathbb{R}^d : \|\Lambda Ax\|_p^p \leq R \text{ and } \|\Lambda(f(Ax) - \bar{f}(A\bar{x}))\|_p^p \leq F\}.
\]

Let $\alpha\in [0,1]$ and $S$ be an $n \times n$ random diagonal matrix with i.i.d.\ $\alpha^{-1/p}\Ber(\alpha)$ entries. Conditioned on the event that
\begin{gather*}
\norm{ S\Lambda (\bar{f}(A \bar{x})-v) }_{p}^p \lesssim V
\shortintertext{and}
\sup_{(f,x)\in T} \norm{ S\Lambda(f(Ax)-\bar{f}(A\bar{x})) }_{p}^p \leq F,
\end{gather*}
it holds with probability at least $1-\delta$ that
\begin{multline*}
    \sup_{(f,x)\in T} \Big| \left(\norm{S \Lambda(f(Ax)-v)}_p^p - \norm{\Lambda(f(Ax)-v)}_p^p \right) 
    - \left( \norm{S\Lambda(\bar{f}(A\bar{x})-v)}_p^p  - \norm{\Lambda(\bar{f}(A\bar{x})-v)}_p^p\right) \Big| \\
    \leq C \left( \epsilon V + \frac{d^{1 \vee \frac{p}{2}}}{\alpha n} R + \Gamma \ln \frac{n}{\epsilon d} \left( \ln^{\frac{5}{4}} d + \sqrt{\ln \frac{1}{\delta}} \right) \right),
\end{multline*}
where $C$ depends only on $p$, and $\Gamma$ is defined as
\begin{equation} \label{eqn:Gamma}
  \Gamma \coloneqq \
  \begin{cases}
    d^{\frac{1}{2}}(\alpha n)^{-\frac{1}{2}}
    F^{\frac{1}{2}}R^{\frac{1}{2}}, &\quad 1 \leq p \leq 2 \\
    d^{\frac{1}{2}}(\alpha n)^{-\frac{1}{p}}
    F^{1-\frac{1}{p}}R^{\frac{1}{p}}, &\quad p>2.
  \end{cases}
\end{equation} 
\end{restatable}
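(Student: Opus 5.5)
We follow the template of~\cite{iclr2025}. Write $u_i(f,x) := (\Lambda(f(Ax)-v))_i$ and $\bar u_i := (\Lambda(\bar f(A\bar x)-v))_i$. The quantity to bound is the supremum over $T$ of
\[
\sum_i(\alpha^{-1}\psi_i-1)\bigl(|u_i(f,x)|^p-|\bar u_i|^p\bigr),
\]
where $\psi_i\sim\Ber(\alpha)$ are i.i.d. We split $[n]$ into three groups of indices:
\begin{enumerate}[topsep=0pt,itemsep=0pt,label=(\alph*)]
\item $J_1$, the indices with large reference residual, $|\bar u_i|^p\ge \eps^{-c}\, d/(\alpha n)\cdot(\text{scale})$;
\item $J_2$, the indices where $\|\Lambda A\|$-row sensitivity is negligible;
\item $J$, the remaining indices, with small residual and non-negligible Lewis weight.
\end{enumerate}

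\textbf{Coordinates in $J_1\cup J_2$.} On $J_1$, the $1$-Lipschitz property of $f$ gives $|(\Lambda f(Ax))_i|\le |(\Lambda Ax)_i|$. Combined with the Lewis-weight bound $|(\Lambda Ax)_i|^p\lesssim (d/n)^{1\vee p/2}\|\Lambda Ax\|_p^p$, this shows $u_i(f,x)$ is a small relative perturbation of $\bar u_i$ there. Hence $\bigl||u_i|^p-|\bar u_i|^p\bigr|\le \eps|\bar u_i|^p+C_\eps|(\Lambda (f(Ax)-\bar f(A\bar x)))_i|^p$. Together with the two conditioning events, this yields the $\eps V$ term. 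The indices where $|(\Lambda A x)_i|$ is tiny contribute the $d^{1\vee p/2}R/(\alpha n)$ term. Both steps follow~\cite{iclr2025} verbatim, since they use only pointwise Lipschitz bounds, which hold uniformly in $f\in\lip_1$.

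\textbf{The main term on $J$.} After symmetrization (justified by the conditioning events, which bound the sampled norms), we need to control $\alpha^{-1}\E\sup_T|\sum_{i\in I}\xi_i Z_i(f,x)|$ with $I$ the sampled set. Conditioned on $I$, this is a Rademacher process, and we bound it by Dudley's integral (truncated at a small scale when $p=1$, with the remainder bounded trivially). The key step is decoupling the metric. For $p>1$ we write
\[
|Z_i(f,x)-Z_i(g,y)|\lesssim \bigl(|u_i(f,x)|^{p-1}+|u_i(g,y)|^{p-1}\bigr)\bigl(|(\Lambda(f(Ax)-f(Ay)))_i|+|(\Lambda(f-g)(Ay))_i|\bigr).
\]
This reduces the covering number of $T$ to a product of two covering numbers. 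The first, over $\pi_2(T)$ in a weighted $\ell_\infty$-type metric on $\Lambda Ax$ restricted to $I$, is exactly the one handled in~\cite{iclr2025}. The second is over $\lip_1$ in the norm $\max_{i\in I}\lambda_i|f((Ay)_i)|$, where the weights are $\lambda_i = \Lambda_{ii}|u_i|^{p-1}/(\text{normalization})$. Because $|(Ay)_i|\le M_i:=$ the Lewis-weight bound on $I$, this norm is dominated by $\max_i\lambda_i\|f\|_{L^\infty([-M_i,M_i])}$. We therefore apply \Cref{lem:aux_L_infty_norm_intro} (\Cref{thm:aux_L_infty_norm}) to reduce to the standard covering $\cN(\lip_1,L^\infty[-1,1],\eta)\le e^{C/\eta}$, losing only a $\ln\kappa\lesssim\ln(n/(\eps d))$ factor. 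The entropy integral of $\sqrt{1/\eta}$ converges, so the $f$-part contributes at the same scale $\Gamma$ as the $x$-part. The H\"older step splitting $|u_i|^{p-1}$ produces the $F^{1-1/p}R^{1/p}$ (or $F^{1/2}R^{1/2}$) dependence.

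\textbf{Concentration and expected obstacle.} Finally, we pass from expectation to a high-probability bound with $\sqrt{\ln(1/\delta)}$ dependence via a Talagrand/bounded-differences inequality for the supremum, as in~\cite{iclr2025}. The main obstacle is the decoupling step: $T$ is not a product set, so the cross-term must be split inside the chain. We also need to ensure the $\lip_1$ covering is taken at a scale compatible with the $\pi_2(T)$ covering, so that summing the two Dudley integrals, each with a $\ln(n/(\eps d))$ factor, stays within $\Gamma\ln\frac{n}{\eps d}\ln^{5/4}d$.
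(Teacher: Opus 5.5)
\textbf{Verdict: genuine gap in the central step.} Your architecture matches the paper's. It uses the three-way split of $[n]$ from \cite{iclr2025}, symmetrization, and Dudley's bound conditioned on the sampled set $I$. It then splits $f_1(Ax_1)-f_2(Ax_2)$ by the triangle inequality into an $f$-part and an $x$-part. The $\pi_2(T)$ integral is handled by \cite{iclr2025}, and the $\lip_1$ integral by \Cref{thm:aux_L_infty_norm} plus the $e^{C/\eta}$ entropy bound.

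The problem is the decoupling step as you describe it. You put the residual factors into the weights of the $\lip_1$ norm, $\lambda_i=\Lambda_{ii}|u_i|^{p-1}/(\cdot)$. But $u_i=u_i(f,x)$ varies over $T$, so this is not a fixed pseudometric on $\lip_1$. Consequently neither a product covering of $T$ nor \Cref{thm:aux_L_infty_norm}, which needs fixed $\lambda_i\in(0,1)$, can be applied. In addition, $D_2$ is an $\ell_2$-sum over $I$, so a $\max_i$-type norm does not follow from your pointwise inequality alone. You could freeze the weights at their uniform bound on $J$, where $|u_i|\lesssim\eps^{-1}d^{\frac12-\frac1{2\vee p}}w_i^{1/p}R^{1/p}$. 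That costs powers of $1/\eps$ and puts $R$ where $F$ belongs, so the bound is no longer $\Gamma$.

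\textbf{The missing idea: apply H\"older to the whole sum \emph{before} decoupling.} Since $\|z\|_{I,p}^p\le\alpha\|Sz\|_p^p$, the two conditioning events give $\sum_{i\in I}|u_i(f,x)|^p\lesssim\alpha F$ for every $(f,x)\in T$. Using this, the paper proves
$D_2((f_1,x_1),(f_2,x_2))\lesssim(\alpha F)^{\theta}\|\Lambda(f_1(Ax_1)-f_2(Ax_2))\|_{I,\infty}^{\phi}$, with $\theta=(1-\frac1p)\vee\frac12$ and $\phi=\frac p2\wedge1$. For $p\le2$ this writes $|\Delta_i|^2=|\Delta_i|^p|\Delta_i|^{2-p}$ and applies H\"older with exponents $\frac p{2-p}$ and $\frac p{2p-2}$.

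Only after this step is the triangle inequality applied. The $\lip_1$ norm is then $\max_{i\in I}\Lambda_{ii}\|f\|_{L^\infty([-M_i,M_i])}$, whose weights are the fixed $\Lambda_{ii}$ alone. \Cref{thm:aux_L_infty_norm} then gives $\ln\cN(\lip_1,\|\cdot\|^{\phi},\eta)\lesssim M(1+\ln\kappa)\,\eta^{-1/\phi}$. For $1<p\le2$ the Dudley integrand is therefore $\eta^{-1/p}$, not $\eta^{-1/2}$, and this is exactly where $p>1$ is used. The $\Gamma$-dependence comes out as $(\alpha F)^\theta M^\phi=\alpha\Gamma$ with $M=(d/n)^{1/p}d^{\frac12-\frac1{2\vee p}}R^{1/p}$. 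It does not come from a separate H\"older step on the weights.

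\textbf{Smaller points.}
\begin{itemize}
\item On $J_1$, do not sum your $C_\eps|\cdot|^p$ term against the second conditioning event; that yields $C_\eps F$, which is far larger than $\eps V$. Instead use $w_i$-dependent thresholds as the paper does. The set $G$ is defined by $|\bar u_i|\le\eps^{-1}d^{\frac12-\frac1{2\vee p}}w_i^{1/p}R^{1/p}$, and $J$ by $w_i>\eps^pd/n^2$. Then off $G$ you have pointwise $|(\Lambda(f(Ax)-\bar f(A\bar x)))_i|\le4\eps|\bar u_i|$, and only the first event is needed. On $G\setminus J$ every residual satisfies $|u_i|^p\lesssim d^{1\vee\frac p2}R/n^2$.
\item Your Lewis-weight bound should read $|(\Lambda Ax)_i|^p\le d^{(\frac p2\vee1)-1}w_i\|\Lambda Ax\|_p^p$.
\item For the tail bound, no Talagrand inequality is needed. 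The moment form of Dudley's bound contributes a $\sqrt\ell\,\mathrm{diam}(T,D_2)$ term. The diameter is controlled by the same decoupling, including $\mathrm{diam}(\pi_1(T),d_T^\phi)\lesssim(M\ln\kappa)^\phi$. Markov's inequality with $\ell=\log(1/\delta)$ then finishes.
\item A bounded-differences inequality in the $\psi_i$ would be lossy. Flipping one $\psi_i$ can change the supremum by up to $\alpha^{-1}\sup_T|Z_i(f,x)-Z_i(\bar f,\bar x)|$.
\end{itemize}
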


The proof of \Cref{thm:main bound} is lengthy and technically involved. As outlined in Section~\ref{sec:techniques}, our analysis builds upon the approach in~\cite{iclr2025}, but introduces a key innovation: we bound Dudley's integral over $\lip_1$ by deriving a new bound on the covering number of $\lip_1$ under the supremum of several $L^\infty$-norms over different intervals. The full proof of \Cref{thm:main bound} appears in Appendices~\ref{sec:error_bound_setup} to~\ref{sec:main_error_bound}, culminating in the proof of the theorem in \Cref{sec:main_error_bound}. 
When $p=1$, an analogous version of \Cref{thm:main bound} also holds, with $\log^{\frac{5}{4}}d$ replaced with $\log^{\frac{5}{4}}d + \log n$. The proof is provided in \Cref{sec:p=1}.

Given Theorem~\ref{thm:main bound} (together with its $p=1$ analogue), the mixed-error guarantee for the solution $(\hat{f},\hat{x})$ of the sketched single-index problem follows via a near-identical bootstrapping argument to that of~\cite{iclr2025}. This leads to the following theorem, proved in \Cref{sec:(1+eps)-approximation}.

\begin{restatable}{theorem}{MainUpperBound}  \label{thm:1+eps approx}
  Let $p\geq 1$ be a constant, $A \in \mathbb{R}^{n \times d}$, $b \in \mathbb{R}^n$, $\eps \in (0,1)$ be sufficiently small and $\Lambda$ be an $n \times n$ diagonal matrix satisfying $\Lambda_{ii}>0$ and $w_i(\Lambda A) \lesssim d/n$ for
  all $i$.   
  There exists a randomized algorithm which, with probability at least $0.9$, makes $O( d^{1 \vee \frac{p}{2}}/\eps^{p \vee 2} \poly\log n)$ queries to the entries of $b$ and returns $(\hat{f},\hat{x}) \in \lip_1 \times \mathbb{R}^d$ satisfying the mixed error guarantee \eqref{eqn:main_obj}.
  The hidden constants in the bounds on the number of queries depends on $p$ only.
\end{restatable}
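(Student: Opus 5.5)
We derive \Cref{thm:1+eps approx} from \Cref{thm:main bound} (and its $p=1$ analogue) by the bootstrapping argument of~\cite{iclr2025}, run on the split instance. We use \Cref{alg:main} with sampling rate $\alpha$ chosen so that the expected number of queries $\alpha m \lesssim \alpha n$ equals $d^{1\vee p/2}\eps^{-(p\vee 2)}\poly\log n$. A Markov/Chernoff bound then gives the query count with probability $0.99$. Let $(f^\ast,x^\ast)$ be the optimum, $\OPT=\norm{f^\ast(Ax^\ast)-b}_p^p$, and let $(\hat f,\hat x)$ minimize the regularized sketched objective $\normp{S(f(Ax)-b)}^p+\eps\normp{Ax}^p$. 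We work throughout with $\Lambda A$, $\Lambda b$, abbreviated $A$, $b$.

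\textbf{Step 1: constant-factor localization.} Comparing with $(f^\ast,x^\ast)$ gives
\[
\normp{S(\hat f(A\hat x)-b)}^p+\eps\normp{A\hat x}^p\le \normp{S(f^\ast(Ax^\ast)-b)}^p+\eps\normp{Ax^\ast}^p .
\]
By Markov, the right-hand side is $O(\OPT+\eps\normp{Ax^\ast}^p)$ with probability $0.99$. Hence $\normp{A\hat x}^p\lesssim R_0:=\OPT/\eps+\normp{Ax^\ast}^p$. For such $x$ and $1$-Lipschitz $f$ with $f(0)=0$, we have $|f(Ax)|\le|Ax|$ entrywise. Therefore the unsampled residual is bounded through the triangle inequality by a quantity $F_0\lesssim_p \OPT+R_0$. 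We also use a subspace-embedding bound from Lewis-weight sampling: for $\alpha n\gtrsim d^{1\vee p/2}\log$, we have $\normp{SAx}^p\lesssim\normp{Ax}^p$ uniformly. Together these give the two conditioning events of \Cref{thm:main bound} for $T_0=\{(f,x):\normp{Ax}^p\le R_0,\ \normp{f(Ax)-f^\ast(Ax^\ast)}^p\le F_0\}$, with reference point $(\bar f,\bar x)=(f^\ast,x^\ast)$ and $v=b$. Applying \Cref{thm:main bound} with $\delta=0.01$ then shows that the sketched and true objective differences agree up to
\[
\eps V+\frac{d^{1\vee p/2}}{\alpha n}R_0+\Gamma\cdot\mathrm{polylog}.
\]
With our choice of $\alpha$, this yields a first estimate of the form $\normp{\hat f(A\hat x)-b}^p\le(1+O(\sqrt{\eps}))\OPT+O(\eps)\normp{Ax^\ast}^p$, or at least a constant-factor bound.

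\textbf{Step 2: bootstrapping.} We feed the improved bounds back in. The current guarantee shrinks the admissible $F$: now $\normp{\hat f(A\hat x)-f^\ast(Ax^\ast)}^p\lesssim \OPT+\eps\normp{Ax^\ast}^p$ by the triangle inequality. We then reapply \Cref{thm:main bound} on the smaller set $T_1$ and get a smaller $\Gamma$. We iterate $O(\log\log(1/\eps))$ times, exactly as in~\cite{iclr2025}. Each round's error bound must dominate the previous localization, so $\hat f,\hat x$ stays in $T_k$. For $p\le2$, $\Gamma\propto (FR)^{1/2}(d/\alpha n)^{1/2}$. With $F\approx\eps\cdot$ scale and $\alpha n\approx d/\eps^2$, this gives $\Gamma\lesssim \eps(\OPT+\normp{Ax^\ast}^p)$ once $F$ is localized. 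For $p>2$ the exponents $F^{1-1/p}R^{1/p}(\alpha n)^{-1/p}d^{1/2}$ require $\alpha n\approx d^{p/2}/\eps^p$. A union bound over the $O(\log\log)$ rounds costs only a $\log\log$ factor in $\ln(1/\delta)$. Finally, the sketched optimality of $(\hat f,\hat x)$ combined with the difference bound gives \eqref{eqn:main_obj} after rescaling $\eps$ by a constant.

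\textbf{Main obstacle.} The delicate step is keeping the iterates inside the localized sets $T_k$ at every round. This requires the additive term $\eps\normp{A\hat x}^p$ to be controlled relative to $\eps\normp{Ax^\ast}^p$, which the regularizer in \Cref{alg:main} provides. The unknown $f$ enters only through the $F$-constraint, and \Cref{thm:main bound} already handles the supremum over $f$. So the bootstrapping should carry over verbatim from~\cite{iclr2025}, with the reference point $(\bar f,\bar x)$ replacing $\bar x$. For $p=1$, we use the analogue with the extra $\log n$ factor, absorbed into $\poly\log n$.
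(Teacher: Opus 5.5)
Your overall architecture matches the paper's. You use Markov for $\norm{S(f^\ast(Ax^\ast)-b)}_p^p$, the radius $R_0\asymp\OPT/\eps+\norm{Ax^\ast}_p^p$ from the regularizer, and \Cref{thm:main bound} with $v=b$ and reference point $(f^\ast,x^\ast)$, followed by an $O(\log\log(1/\eps))$-round bootstrap. Your Step~1 constant-factor bound is valid. The gap is in Step~2: \textbf{you bootstrap the wrong quantity.} After one round, $F$ is localized to $F_0\asymp\OPT+\eps\norm{Ax^\ast}_p^p$, which is its natural final scale, but $R$ is still $R_0$. Set $\theta=(1-\frac1p)\vee\frac12$ and $\beta=1-\theta$. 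At your choice of $\alpha$, \Cref{thm:main bound} only guarantees an error of order $\eps F_0^\theta R_0^\beta\asymp\eps^{\theta}(\OPT+\eps\norm{Ax^\ast}_p^p)$, e.g.\ $\sqrt{\eps}\,\OPT$ for $p\le2$. So your claim that $\Gamma\lesssim\eps(\OPT+\norm{Ax^\ast}_p^p)$ ``once $F$ is localized'' is false. As written, Step~2 yields only a $(1+O(\eps^{\theta}))$-approximation, and it gives no reason why further rounds would help.

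The missing idea is that the iteration runs on $R$, with $F_0$ held fixed. Regularized sketched optimality gives $\norm{A\hat x}_p^p\le\eps^{-1}\bigl(\norm{S(f^\ast(Ax^\ast)-b)}_p^p-\norm{S(\hat f(A\hat x)-b)}_p^p\bigr)+\norm{Ax^\ast}_p^p$. By \Cref{thm:main bound} on $T_i$, the sketched gap is at most the true gap plus $C\eps R_i^\beta F_0^\theta$, and the true gap is $\le 0$ by optimality of $(f^\ast,x^\ast)$. Hence $R_{i+1}\lesssim R_i^\beta F_0^\theta+\norm{Ax^\ast}_p^p$. Writing $R_i=X_i\OPT+Y_i\norm{Ax^\ast}_p^p$, this gives $X_{i+1}\asymp X_i^{\beta}$. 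So $X_i$ falls from $100/\eps$ to $O(1)$ in $O(\log\log(1/\eps))$ rounds, while $Y_i$ grows only to $\poly\log(1/\eps)$, which is absorbed by rescaling $\eps$.

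Separately, each reapplication on a set $T_i$ with $F=F_0$ needs the sketched event $\sup_{(f,x)\in T_i}\norm{S(f(Ax)-f^\ast(Ax^\ast))}_p^p\lesssim F_0$ for a deterministic $T_i$. You never verify this, and your linear subspace embedding only yields $\lesssim R_0$. It does follow from the uniformity over $T_0$ of your Step~1 bound combined with the triangle inequality. The paper instead gets it from Corollary~\ref{subspace embedding}, i.e.\ \Cref{thm:main bound} with $v=f^\ast(Ax^\ast)$ (so $V=0$) on $\{\norm{Ax}_p^p\le R_0\}$. That one application also gives the localization $\norm{\hat f(A\hat x)-f^\ast(Ax^\ast)}_p^p\le F_0$ directly.
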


We leave open the question of whether the $\log n$ factors can be removed. In~\cite{iclr2025}, those factors come from Dudley's integral and the authors circumvent them by first using a net argument that leads to a sub-optimal complexity of $\poly(d/\eps)$ queries, which effectively reduces $n$ to $\poly(d/\eps)$. In our case, however, the covering number for Lipschitz functions intrinsically introduces a $\log n$ factor (see \Cref{lem:aux_L_infty_norm_intro}), which a net argument in place of chaining cannot remove. 

\section{Lower Bound}\label{sec:lower}

\paragraph{Basic Setup.}
Let $p>2$, $\alpha = \alpha(p) > 0$ be a constant small enough that is to be determined later and $\eps \in (0, \alpha/2)$. Let $\tau\in(0,1/4)$ to be determined.
By a standard argument, there exist $u_1,\dots,u_N \in \bS^{d-2}$ which satisfy
$\abs{\inner{u_i}{u_j}}\le \tau$ for all $i\neq j$, where $\log N\lesssim \tau^2 d$.
The function $f$ is defined as $f(x) = x^+ := \max\{0, x\}$.  It is clear that $f$ is non-decreasing and $1$-Lipschitz. 

Let $B = \alpha/\eps$, so $B>2$. Define $2N$ (row) vectors in $\R^d$ 
\[
	a_{i,+} = \left( u_i^\top, -\frac{\tau}{B} \right),\quad a_{i,-} = \left(-u_i^\top, -\frac{\tau}{B} \right),\quad i=1,\dots,N
\]
and two additional (row) vectors in $\R^d$
\[
	a_{0,+} = (0,\dots,0,1),\quad a_{0,-} = (0,\dots,0,-1).
\]
Let the matrix $A\in \R^{(2N+2)\times d}$ consist of these $2N+2$ rows, indexed by $(i,\pm)$ for $i=0,\dots,N$.

We define $b^{(0)},\dots,b^{(N)}\in \R^{2N+2}$ as 
\[
	b^{(i)} = 
	\begin{cases}
		B e_{0,+}, & i = 0;\\
		B(e_{0,+} + e_{i,+}), & i > 0,
	\end{cases}
\]
where $e_{0,+}$ and $e_{i,+}$ are standard basis vectors with a single $1$ at the corresponding entry.

Furthermore, we define
\[
	F_i(x) = \norm{(Ax)^+ - b^{(i)}}_p^p,\qquad \OPT_i = \min_x F_i(x)
\]
and $x_i^\ast$ be an arbitrary minimizer of $\min_x F_i(x)$. Note that
\begin{equation}\label{eqn:opt-upper}
	\OPT_i\leq F_i(\begin{bmatrix}
	    u_i \\
        B
	\end{bmatrix}) = (B - (1-\tau))^p \qquad \text{for $i = 1,\dots,N$}.
\end{equation}

For $x\in\R^d$, define
\begin{equation}\label{eqn:H-def}
	H(x) = \left\{ j\in[N]: \langle a_{j,+}, x\rangle^+ \geq \frac{1-\tau}{2} \right\}.
\end{equation}

\begin{lemma}\label{lem:code-reduction}
	There are constants $\kappa_p,K_p,\eps_{0,p}>0$ such that the following holds. Suppose that
	\begin{equation}\label{eq:reduction-conditions}
		N\tau^p\le\kappa_p\eps^{-p},
		\qquad
		N\ge K_p\eps^{-p},
		\qquad
		0<\eps\le\eps_{0,p}.
	\end{equation}
	Let $I$ be uniform on $[N]$ and let $b=b^{(I)}$. Consider a deterministic adaptive algorithm with global worst-case query budget $q$, meaning that it reads at most $q$ entries on every input in $\R^{2N+2}$. If its output $\hat{x}$ satisfies
	\[
		F_I(\hat x)\leq \OPT_I+\eps\bigl(\OPT_I+\norm{Ax_I^\ast}_p^p\bigr)
	\]
	with probability at least $4/5$ over $I$, then $q=\Omega_p(N)$.
\end{lemma}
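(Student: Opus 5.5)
The proof has two parts: a deterministic short-list statement about every near-optimal $\hat x$, followed by a counting argument over the hidden index $I$, as sketched in the technique overview.

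\textbf{Step 1: short list.} Fix $i\in[N]$ and let $\hat x$ satisfy $F_i(\hat x)\le \OPT_i+\eps(\OPT_i+\norm{Ax_i^\ast}_p^p)$. We show $i\in H(\hat x)$ and $|H(\hat x)|\le C_p\eps^{-p}$.

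First we bound the right-hand side. By \eqref{eqn:opt-upper}, $\OPT_i\le B^p$. For $\norm{Ax_i^\ast}_p^p$, write $x=(y,t)$.
\begin{itemize}
\item The anchor row $(0,+)$ contributes $|t^+-B|^p$ and the row $(0,-)$ contributes $((-t)^+)^p$. Since $F_i(x_i^\ast)\le B^p$, this gives $|t|\lesssim B$.
\item Each row $(j,\pm)$ with $j\ne i$ contributes $(\pm\langle u_j,y\rangle-\tau t/B)^{+p}$ to $F_i$. Since $\langle a_{j,+},x\rangle$ and $\langle a_{j,-},x\rangle$ sum to $-2\tau t/B$, the row with the negative part is bounded by the one with the positive part. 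Hence $|\langle a_{j,\pm},x_i^\ast\rangle|^p\lesssim_p (\langle a_{j,\pm},x_i^\ast\rangle^+)^p+\tau^p$.
\item For rows $(i,\pm)$ the bound is $O_p(B^p)$.
\end{itemize}
Summing, $\norm{Ax_i^\ast}_p^p\lesssim_p B^p+N\tau^p+\OPT_i\lesssim_p B^p+\kappa_p\eps^{-p}$. So the slack is $\eps\cdot O_p(\alpha^p\eps^{-p}+\kappa_p\eps^{-p})$, which is at most $c\,\eps^{1-p}$ with $c$ small once $\alpha,\kappa_p$ are small.

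Next we need a matching lower bound on the loss if $i\notin H(\hat x)$. In that case the residual at $(i,+)$ is at least $B-(1-\tau)/2$. To compare this with $\OPT_i$, we also need a lower bound on $\OPT_i$, namely that $\OPT_i$ is close to $(B-(1-\tau))^p$. The reason is that $\langle a_{i,+},x\rangle$ can only exceed about $1$ by taking $\norm{y}$ large. That forces large positive values $\langle a_{i,+},x\rangle$ roughly equal to $\norm y$, and, through the $(i,-)$ row and the anchor rows, this costs roughly as much as it saves.

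This is the step I expect to be the main obstacle: a careful one-dimensional convexity computation. One reduces to $\OPT_i\ge (B-s)^p+(\text{cost of achieving } s)$ and shows, using that $p>2$, that the derivative gain $pB^{p-1}$ per unit of $s$ is outweighed. If that fails, I would instead compare $F_i(\hat x)$ directly against $F_i$ at the projected point.

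Given this, the loss difference is at least $(B-\tfrac{1-\tau}{2})^p-(B-1+\tau)^p\gtrsim_p B^{p-1}=\alpha^{p-1}\eps^{1-p}$. This beats the slack $c\,\eps^{1-p}$ for suitable constants, so $i\in H(\hat x)$. For the size bound, each $j\in H(\hat x)\setminus\{i\}$ contributes at least $((1-\tau)/2)^p\gtrsim_p 1$ to $F_i(\hat x)$, while $F_i(\hat x)\lesssim_p B^p\lesssim \eps^{-p}$. Hence $|H(\hat x)|\le C_p\eps^{-p}$.

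\textbf{Step 2: counting.} Run the deterministic algorithm on $b^{(0)}$. Let $Q$ be the set of indices $j$ whose coordinate $(j,+)$ is queried, so $|Q|\le q$, and let $\hat x^{(0)}$ be the output. For $i\notin Q$, the inputs $b^{(i)}$ and $b^{(0)}$ agree on every queried coordinate. By induction on the adaptive query sequence, the transcripts coincide and the output is $\hat x^{(0)}$. Let $S$ be the set of $i$ on which the algorithm succeeds. Step 1 gives $S\setminus Q\subseteq H(\hat x^{(0)})$, so $|S|\le q+C_p\eps^{-p}$. Since $|S|\ge 4N/5$ and $N\ge K_p\eps^{-p}$ with $K_p=5C_p$, we get $q\ge 4N/5-N/5=3N/5$, which is $\Omega_p(N)$.
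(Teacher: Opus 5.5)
Your argument is the paper's: the bound $\norm{Ax_i^\ast}_p^p\lesssim_p B^p+N\tau^p$ obtained by pairing rows $(j,+)$ and $(j,-)$ (the paper's Lemma~\ref{lem:A-norm}), then the short-list lemma, then the transcript comparison with the run on $b^{(0)}$ giving $|S|\le q+C_p\eps^{-p}$. The one thing to fix is the paragraph you flag as the main obstacle; it should simply be deleted.

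\textbf{No lower bound on $\OPT_i$ is needed.} If $i\notin H(\hat x)$, the $(i,+)$ residual alone gives $F_i(\hat x)\ge(B-\frac{1-\tau}{2})^p$. The guarantee together with the \emph{upper} bound \eqref{eqn:opt-upper} gives $F_i(\hat x)\le(B-(1-\tau))^p+O_p\bigl((\alpha^p+\kappa_p)\eps^{1-p}\bigr)$. A smaller $\OPT_i$ therefore only strengthens the contradiction. This is exactly the comparison your next sentence makes. It closes once $\alpha$ is chosen small (so that $\alpha^p\ll\alpha^{p-1}$) and then $\kappa_p\ll\alpha^{p-1}$.

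\textbf{The lower bound you wanted is false in general, and your heuristic for it is wrong.} Row $(i,-)$ only becomes more negative as $\langle u_i,y\rangle$ grows, and the anchor rows depend only on $t$, so neither penalizes moving $y$ along $u_i$. Suppose the code happens to have coherence at most $\tau/2$, which the hypotheses allow. Then at the point $x=(2u_i,B)$ every row $(j,\pm)$ with $j\neq i$ is nonpositive. This yields $\OPT_i\le(B-2+\tau)^p$, which lies below $(B-(1-\tau))^p$ by $\Omega_p(B^{p-1})$.
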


The proof of Lemma~\ref{lem:code-reduction} is deferred to Appendix~\ref{sec:lower_appendix}.

\begin{theorem}[Large-$d$ lower bound]\label{thm:large-d-code-lb}
	There exist constants $C_{0,p},d_{0,p},\eps_{1,p}>0$ such that the following holds. 
	Suppose that $d\ge d_{0,p}$, $0<\eps\le\eps_{1,p}$, and
	\begin{equation}\label{eqn:large-d-regime}
		d-1 \geq C_{0,p}\log(2d/\eps).
	\end{equation}
	There exist a deterministic function $f\in \lip_1$, an integer $n\geq 1$, a deterministic matrix $A\in\R^{n\times d}$ and a distribution over $b\in\R^n$ such that every deterministic algorithm that outputs $\hat{x}\in\R^d$ which with probability at least $4/5$ over the randomness of $b$ satisfies (\ref{eqn:known_f_obj}) must make 
	\[
		\Omega_p\mleft(\frac{d^{p/2}}{\eps^p (\log(d/\eps))^{p/2}}\mright)
	\]
	to the entries of $b$ in the worst case.
\end{theorem}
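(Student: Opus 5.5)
The theorem follows from \Cref{lem:code-reduction}; all that remains is to choose the parameters $\tau$ and $N$ so that the spherical-code construction exists and \eqref{eq:reduction-conditions} holds. We also have to check that the objective in the lemma is \eqref{eqn:known_f_obj} with $f(t)=t^+$ and the instance $(A,b^{(I)})$ from the Basic Setup. The lemma's guarantee $F_I(\hat x)\le \OPT_I+\eps(\OPT_I+\norm{Ax_I^\ast}_p^p)$ is exactly \eqref{eqn:known_f_obj}. Minimizers $x_I^\ast$ in the lemma are arbitrary, so they may be taken to be the minimum-norm ones. The matrix has $n=2N+2$ rows and $d$ columns, and the distribution of $b$ is $b^{(I)}$ with $I$ uniform on $[N]$.

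Set $L=\log(2d/\eps)$. Choose $\tau=c_1\sqrt{L/d}$, with $c_1$ a suitable constant depending on $p$, so that $\tau<1/4$ by \eqref{eqn:large-d-regime} once $C_{0,p}$ is large. Then set
\[
N=\Bigl\lfloor \kappa_p\,\eps^{-p}\tau^{-p}\Bigr\rfloor \asymp_p \frac{d^{p/2}}{\eps^pL^{p/2}}.
\]
This gives $N\tau^p\le\kappa_p\eps^{-p}$ directly. For $N\ge K_p\eps^{-p}$ we need $\kappa_p\tau^{-p}\ge 2K_p$, that is, $\tau\le(\kappa_p/2K_p)^{1/p}$. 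This holds when $d/L$ is large, which we arrange by taking $C_{0,p}$ large in \eqref{eqn:large-d-regime}. Requiring $\eps\le\eps_{1,p}\le\eps_{0,p}$ gives the third condition.

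Next we check that the code exists. We need $N$ unit vectors in $\bS^{d-2}$ with pairwise coherence at most $\tau$. The standard random-vector construction provides them whenever $\log N\le c_2\tau^2(d-1)$. Since $\log N\le p\log(1/\eps)+(p/2)\log d+O_p(1)\lesssim_p L$ and $\tau^2(d-1)\ge c_1^2L/2$, this holds once $c_1$ is large relative to $p$. The constraint on $c_1$ is compatible with the previous step, because there the requirement on $\tau$ was only that it be a small constant times something, and \eqref{eqn:large-d-regime} with large $C_{0,p}$ makes $\sqrt{L/d}$ small enough to absorb any fixed $c_1$. The step that needs the most care is this bookkeeping of constants: $c_1$ is fixed first in terms of $p$ and $c_2$, and only then is $C_{0,p}$ chosen.

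Finally, \Cref{lem:code-reduction} gives $q=\Omega_p(N)=\Omega_p\bigl(d^{p/2}\eps^{-p}L^{-p/2}\bigr)$. Since $L\asymp\log(d/\eps)$ for $d\ge d_{0,p}$ and small $\eps$, this is the claimed bound. An algorithm whose query count is only bounded in the worst case over inputs is precisely the global budget required by the lemma.
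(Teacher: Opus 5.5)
Your proposal is correct and follows the paper's proof essentially step for step. Your choice $N=\lfloor\kappa_p\eps^{-p}\tau^{-p}\rfloor$ with $\tau=c_1\sqrt{L/d}$ is the paper's $N=\lfloor\eta_p m^{p/2}\eps^{-p}L^{-p/2}\rfloor$ with $\eta_p=\kappa_p/c_1^p$ (up to $d$ versus $m=d-1$), and your ordering of constants matches the paper's ($c_1$ fixed first for the spherical code, then $C_{0,p}$ taken large so that $\tau$ is small); the paper additionally takes $d_{0,p}$ large so that $N\ge d-1$ and the code spans $\R^{d-1}$, which the statement as written does not require.
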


\begin{proof}
	Let $\kappa_p,K_p,\eps_{0,p}$ be the constants from Lemma~\ref{lem:code-reduction}.
	Choose constants in the following order: first choose $C_{1,p}$ large, then choose $\eta_p>0$ small, and finally increase $C_{0,p}$ and $d_{0,p}$ and choose $\eps_{1,p}\le\eps_{0,p}$ small enough. Set $m := d-1$,
	\begin{equation}\label{eqn:tau-N-choice}
		L:=\log\frac{2d}{\eps},
		\qquad
		\tau:=C_{1,p}\sqrt{\frac{L}{m}},
		\qquad
		N:=\left\lfloor
		\eta_p\frac{m^{p/2}}{\eps^pL^{p/2}}
		\right\rfloor.
	\end{equation}
	With $C_{0,p}$ large enough, $\tau\le1/4$. Also, $\log N\leq C_{2,p}L$.
	
	Since $m\tau^2=C_{1,p}^2L$, choosing $C_{1,p}$ large enough ensures $\log N\lesssim m\tau^2$. The standard spherical-code argument from the basic setup then supplies $u_1,\dots,u_N\in \bS^{m-1}$ with coherence at most $\tau$. Increasing $d_{0,p}$ if necessary ensures $N\ge m$, so the code may be chosen to span $\R^m$. By construction,
	\[
		N=\Omega_p\!\left(\frac{d^{p/2}}{\eps^pL^{p/2}}\right).
	\]

	Finally,
	\[
		N\tau^p\le\eta_pC_{1,p}^p\eps^{-p}.
	\]
	After $C_{1,p}$ is fixed, choose $\eta_p$ so that $\eta_pC_{1,p}^p\le\kappa_p$. This gives $N\tau^p\le\kappa_p\eps^{-p}$.
	It is also easy to verify that $N\geq K_p\eps^{-p}$ by increasing $C_{0,p}$ if needed. Since $\eps_{1,p}\le\eps_{0,p}$, all the conditions of Lemma~\ref{lem:code-reduction} hold.

	Let $f(t)=t^+$, let $A\in\R^{(2N+2)\times d}$ be the matrix from the basic setup formed from this code, and let $\cD$ be the uniform distribution on $\{b^{(i)}:i\in[N]\}$. For this fixed $f$, the approximation guarantee in \eqref{eqn:known_f_obj} is precisely the event in Lemma~\ref{lem:code-reduction}. That lemma therefore shows that every deterministic algorithm succeeding with probability at least $4/5$ over $b\sim\cD$ makes $\Omega_p(N)$ queries.
\end{proof}

\section*{Acknowledgements}
We thank the anonymous reviewers of ICML 2026 for their suggestions regarding the presentation of the paper.

Following the publication of the ICML proceedings, the authors used ChatGPT 5.6 Sol while developing a new lower bound that also applies to adaptive queries, primarily to explore proof strategies and formulate intermediate results. All mathematical claims and references were independently verified by the authors, who assume full responsibility for the content of this article.

\bibliography{literature}
\bibliographystyle{plain}

\appendix

\section{Additional Preliminaries}

\paragraph{Covering Number and Dudley's Integral.} Suppose that $(X,d)$ is a pseudometric space. The $\eps$-covering number of $X$, denoted by $\cN(X,d,\eps)$, is the minimum $m$ such that there exist $m$ points $x_1,\dots,x_m\in X$ such that for every $x\in X$, there exist $j\in [m]$ such that $d(x,x_j) < \eps$.

Let $(X_t)_{t\in T}$ be a zero-mean subgaussian process with metric $d$ on a set $T$. It is a classical result (see, e.g.~\cite{vershynin}) that 
\begin{equation}\label{eqn:sup_moment_bound1}
\left[\E \left(\sup_{s,t\in T} \abs{X_t - X_s}\right)^\ell\right]^{\frac{1}{\ell}} \lesssim \int_0^\infty \sqrt{\ln \cN(T,d,\eps)} d\eps + \sqrt{\ell} \diam(T,d).
\end{equation}
When the covering number $\sqrt{\ln \cN(T,d,\eps)}$ increases too rapidly as $\eps\to 0^+$, we have 
\begin{equation}\label{eqn:sup_moment_bound2}    
\left[\E \left(\sup_{s,t\in T} \abs{X_t - X_s}\right)^\ell\right]^{\frac{1}{\ell}} 
\lesssim 
\left[\E \left(\sup_{\substack{s,t\in T\\ d(s,t)\leq \eps_0}} \abs{X_t - X_s}\right)^\ell\right]^{\frac{1}{\ell}} +
\int_{\eps_0/2}^\infty \sqrt{\ln \cN(T,d,\eps)} d\eps + \sqrt{\ell} \diam(T,d).
\end{equation}
We shall use the former form for $p>1$ and the latter one for $p=1$.

\paragraph{Rademacher Processes.}
Let $\{\xi_i\}_{i=1}^n$ be independent Rademacher random variables, and let
$T \subset \R^n$. The associated Rademacher process $\{X_t\}_{t \in T}$ on $T$ is defined as
\[
X_t \coloneqq \sum_{i=1}^n \xi_i t_i = \langle \xi, t \rangle .
\]
The canonical (pseudo-)metric of this process is
\[
d(s,t) \coloneqq ( \E(X_s - X_t)^2 )^{1/2} = \| s - t \|_2 , \qquad s,t \in T.
\]
Moreover, the process $\{X_t\}_{t \in T}$ is a subgaussian process with respect to the metric $d$.

\paragraph{Lewis weights.} First we present a lemma on the basic properties of Lewis weights from~\cite{iclr2025}.

\begin{lemma}\label{lem: lewis weight properties}
    Suppose that $A \in \mathbb{R}^{n \times d}$ has full column rank and Lewis weights $w_1,\dots,w_n$. Let $W = \diag\{{w_1,\dots,w_n}\}$. The following properties hold.

\begin{enumerate}[label=(\arabic*),topsep=0pt,itemsep=0pt]
    \item $\sum_i w_i = d$;
    \item There exists a matrix $U \in \R^{n \times d}$ such that
    \begin{enumerate}[topsep=0pt,itemsep=0pt,parsep=0pt]
        \item the column space of $U$ is the same as that of $A$;
        \item $w_i = \norm{U_{i,\ast}}_2^p$, where $U_{i,\ast}$ denotes the $i$-th row of $U$;
        \item $W^{\frac{1}{2}- \frac{1}{p}}U$ has orthonormal columns;
    \end{enumerate}
    \item It holds for all vectors $u$ in the column space of $A$ that $\norm{W^{\frac{1}{2}- \frac{1}{p}}u}_2 \leq d^{\frac{1}{2}- \frac{1}{2 \vee p}} \norm{u}_p$.
    \item It holds for all vectors $u$ in the column space of $A$ that $\abs{u_i} \leq d^{\frac{1}{2}- \frac{1}{2 \vee p}} w_i^{\frac{1}{p}}\norm{u}_p$.
\end{enumerate}
\end{lemma}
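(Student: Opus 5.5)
The plan is to build everything from the single matrix $M := A^\top W^{1-\frac{2}{p}} A$, taking as given the existence of Lewis weights (Lewis' theorem, as in~\cite{LT91,Cohen15}). First I reduce to the case where every row $a_i\neq 0$. A zero row has $w_i=0$, contributes nothing to any of the statements, and the corresponding coordinate of every $u$ in the column space is $0$. With all rows nonzero, every $w_i>0$, so $W^{1-\frac{2}{p}}$ is well defined and positive definite even when $p<2$. Since $A$ has full column rank, $M$ is then positive definite and $M^\dagger=M^{-1}$. The defining identity reads $w_i^{2/p} = a_i^\top M^{-1} a_i$.

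\textbf{Property (2).} Set $U := A M^{-1/2}$.
\begin{itemize}
\item Since $M^{-1/2}$ is invertible, the column space of $U$ equals that of $A$, which gives (a).
\item The $i$-th row is $M^{-1/2}a_i$, so $\norm{U_{i,\ast}}_2^2 = a_i^\top M^{-1}a_i = w_i^{2/p}$, which gives (b).
\item For (c), compute
\[
(W^{\frac12-\frac1p}U)^\top (W^{\frac12-\frac1p}U) = M^{-1/2}A^\top W^{1-\frac2p}A M^{-1/2} = I.
\]
\end{itemize}

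\textbf{Property (1).} Write $w_i = w_i^{1-\frac2p}\, w_i^{\frac2p} = w_i^{1-\frac2p}\, a_i^\top M^{-1}a_i$. Summing over $i$ gives
\[
\sum_i w_i = \operatorname{tr}\Bigl(M^{-1}\sum_i w_i^{1-\frac2p}a_i a_i^\top\Bigr) = \operatorname{tr}(M^{-1}M) = d.
\]

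\textbf{Properties (3) and (4).} Any $u$ in the column space can be written $u=Uy$.
\begin{itemize}
\item By (2c), $\norm{W^{\frac12-\frac1p}u}_2=\norm{y}_2$.
\item By Cauchy--Schwarz and (2b), $|u_i| = |\langle U_{i,\ast},y\rangle| \le w_i^{1/p}\norm{y}_2$.
\end{itemize}
So (4) follows immediately from (3), and it remains to prove (3) by splitting on $p$.

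For $p\ge 2$, apply Hölder with exponents $\frac{p}{p-2}$ and $\frac p2$, then use (1):
\[
\norm{y}_2^2=\sum_i w_i^{1-\frac2p}u_i^2 \le \Bigl(\sum_i w_i\Bigr)^{1-\frac2p}\Bigl(\sum_i|u_i|^p\Bigr)^{\frac2p} = d^{1-\frac2p}\norm{u}_p^2.
\]

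For $1\le p<2$, the target exponent $\frac12-\frac1{2\vee p}$ is $0$, so we need $\norm{y}_2\le\norm{u}_p$. Split $u_i^2=|u_i|^{2-p}|u_i|^p$ and bound $|u_i|^{2-p}\le w_i^{\frac{2-p}{p}}\norm{y}_2^{2-p}$ using the pointwise bound above. This gives
\[
\norm{y}_2^2 \le \norm{y}_2^{2-p}\sum_i w_i^{1-\frac2p+\frac2p-1}|u_i|^p = \norm{y}_2^{2-p}\norm{u}_p^p.
\]
Hence $\norm{y}_2^p\le\norm{u}_p^p$, which is (3) in this range.

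\textbf{Main obstacle.} The only delicate point is the case $p<2$. Here the weight exponent $1-\frac2p$ is negative, so the argument needs both the reduction to nonzero rows (all $w_i>0$) and the pointwise bound on $|u_i|$ in place of Hölder. The rest is direct algebra.
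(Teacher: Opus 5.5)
Your proof is correct. The paper does not prove this lemma; it imports it from Li and Tai~\cite{iclr2025}, so there is no in-paper argument to compare against.

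What you give is the standard derivation:
\begin{itemize}
\item whitening via $U = AM^{-1/2}$ with $M = A^\top W^{1-2/p}A$;
\item the trace identity $\sum_i w_i = \operatorname{tr}(M^{-1}M) = d$;
\item H\"older with exponents $\frac{p}{p-2}$ and $\frac{p}{2}$ when $p \ge 2$;
\item when $p<2$, the self-bounding step that feeds the pointwise estimate $|u_i| \le w_i^{1/p}\norm{y}_2$ back into $\norm{y}_2^2$.
\end{itemize}

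One point needs a sharper justification than ``with all rows nonzero, every $w_i>0$.'' Positivity does not follow from the fixed-point equation in the paper's definition. Take $p>2$ and $A = I_2$. Then $w=(1,0)$ gives $M = \diag\{1,0\} = M^\dagger$, and it satisfies $w_i = (a_i^\top M^\dagger a_i)^{p/2}$ for both $i$. Yet $\sum_i w_i = 1 \neq 2$, so (1) fails for this solution.

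So the lemma is really a statement about the canonical positive solution that Lewis's theorem guarantees, and positivity must be taken from that theorem rather than derived from nonzero rows. You do invoke Lewis's theorem at the outset, so your argument is fine once this is stated explicitly.
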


We present some properties regarding $\Lambda_{ii}(f_1(Ax_1) - f_2(Ax_2))_i$.

\begin{lemma} \label{lem: bound for (lambda Ax)_i}
    For any $R \geq \norm{\Lambda A \bar{x}}_p^p$, let $T = \{ x \in \mathbb{R}^d : \norm{\Lambda A x}_p^p \leq R\}$. Also, let $w_1,\ldots,w_n$ be Lewis weights of $\Lambda A$. It holds for all $x, \bar{x} \in T$, $f,\bar{f}\in \lip_1$ and $i \in [n]$ that
    \begin{gather*}
        \abs{\Lambda_{ii} (f(Ax) - f(A\bar{x}))_i} \leq 2d^{\frac{1}{2}- \frac{1}{2 \vee p}} w_i^{\frac{1}{p}} R^{\frac{1}{p}}, \\
        \abs{\Lambda_{ii} (f(Ax) - \bar{f}(Ax))_i} \leq 2d^{\frac{1}{2}- \frac{1}{2 \vee p}} w_i^{\frac{1}{p}} R^{\frac{1}{p}}.
    \end{gather*}
    Consequently, by the triangle inequality,
    \[
        \abs{\Lambda_{ii} (f(Ax) - \bar{f}(A\bar{x}))_i} \leq 4d^{\frac{1}{2}- \frac{1}{2 \vee p}} w_i^{\frac{1}{p}} R^{\frac{1}{p}}.
    \]
\end{lemma}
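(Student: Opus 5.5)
The plan is to reduce all three bounds to the Lipschitz property of the link functions, the normalization $f(0)=0$, and property (4) of \Cref{lem: lewis weight properties} applied to the matrix $\Lambda A$. We assume $\Lambda A$ has full column rank, or restrict to its column space, so that the lemma applies. The key observation is that for any $x\in T$ the vector $u=\Lambda A x$ lies in the column space of $\Lambda A$. Property (4) then gives
\[
\abs{(\Lambda A x)_i}\le d^{\frac12-\frac1{2\vee p}} w_i^{1/p}\norm{\Lambda A x}_p\le d^{\frac12-\frac1{2\vee p}} w_i^{1/p} R^{1/p}.
\]
The same bound holds for $\bar x$, since $\bar x\in T$ because $R\ge \norm{\Lambda A\bar x}_p^p$.

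For the first inequality, $\Lambda_{ii}>0$ is a scalar and $f$ is $1$-Lipschitz, so
\[
\abs{\Lambda_{ii}(f(Ax)-f(A\bar x))_i}\le \Lambda_{ii}\abs{(Ax)_i-(A\bar x)_i}\le \abs{(\Lambda Ax)_i}+\abs{(\Lambda A\bar x)_i}.
\]
Applying the pointwise bound to each term gives the constant $2$.

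For the second inequality, we use that both $f$ and $\bar f$ vanish at $0$ and are $1$-Lipschitz. This gives $\abs{f(t)}\le\abs{t}$ and $\abs{\bar f(t)}\le \abs{t}$, hence $\abs{f(t)-\bar f(t)}\le 2\abs{t}$. With $t=(Ax)_i$ and after multiplying by $\Lambda_{ii}$, we get $\abs{\Lambda_{ii}(f(Ax)-\bar f(Ax))_i}\le 2\abs{(\Lambda A x)_i}$, and the pointwise bound finishes the argument.

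The consequence follows by inserting $f(A\bar x)$, or equivalently $\bar f(Ax)$, and applying the triangle inequality: $\abs{\Lambda_{ii}(f(Ax)-\bar f(A\bar x))_i}\le \abs{\Lambda_{ii}(f(Ax)-f(A\bar x))_i}+\abs{\Lambda_{ii}(f(A\bar x)-\bar f(A\bar x))_i}$. The second bound is applied at the point $\bar x\in T$, which gives the total $4$. There is no real obstacle here. The only points needing care are that $\Lambda$ is diagonal, so it commutes with the entrywise action of $f$ coordinate by coordinate, and the rank caveat that lets property (4) be applied to $\Lambda A$.
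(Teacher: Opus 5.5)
Your proposal is correct and matches the paper's proof: both use the $1$-Lipschitz property together with $f(0)=0$, which the paper phrases for the second bound as $f-\bar f\in\lip_2$, and then invoke property (4) of the Lewis-weight lemma for $\Lambda A$. The only cosmetic difference is in the first bound, where the paper applies property (4) to the difference vector $\Lambda A(x-\bar x)$ and then uses $\norm{\Lambda A(x-\bar x)}_p\le 2R^{1/p}$, whereas you split the coordinate with the triangle inequality first; both give the constant $2$.
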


\begin{proof}
    By the Lipschitz condition and by Lemma~\ref{lem: lewis weight properties}(4),
    \[
        \abs{\Lambda_{ii} (f(Ax) - f(A\bar{x}))_i} \leq \abs{\Lambda_{ii}(Ax - A \bar{x})_i} 
        \leq d^{\frac{1}{2}- \frac{1}{2 \vee p}} w_i^{\frac{1}{p}} \norm{\Lambda Ax - \Lambda A \bar{x}}_p.
    \]
    Since both $x, \bar{x}\in T$, we have
    \[
        \norm{\Lambda Ax - \Lambda A \bar{x}}_p \leq \norm{\Lambda Ax}_p + \norm{\Lambda A \bar{x}}_p \leq 2 R^{\frac{1}{p}}.
    \]
    This proves the first inequality.

    For the second inequality, since $f,\bar{f}\in \lip_1$, we know that $f-\bar{f}\in \lip_2$ and, by Lemma~\ref{lem: lewis weight properties}(4) again,
    \[
        \abs{\Lambda_{ii} (f(Ax) - \bar{f}(Ax))_i} \leq 2\abs{\Lambda_{ii}(Ax)_i} \leq 2d^{\frac{1}{2}- \frac{1}{2 \vee p}} w_i^{\frac{1}{p}}\norm{\Lambda A x}_p \leq 2d^{\frac{1}{2}- \frac{1}{2 \vee p}} w_i^{\frac{1}{p}}R^{\frac{1}{p}}.
    \]
    This proves the second inequality.
\end{proof}

\paragraph{Moment bounds for the binomial variables} We shall need an upper bound on higher moments of binomial variables.
\begin{lemma}[Ahle~\cite{ahle2022binomial}]
\label{lem:binomial_moment}
Let $X \sim \Bin(n,p)$ be a binomial random variable with mean $\mu = np$.
Then for every integer $\ell \ge 1$,
\[
\E X^\ell \leq \left(\mu + \frac{\ell}{2} \right)^\ell.
\]
\end{lemma}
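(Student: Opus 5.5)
The plan is a Chernoff-style moment bound: dominate the polynomial $x^\ell$ by an exponential, use the binomial moment generating function, and then pick the exponential parameter well. There is no need to pass through Poisson moments or Stirling numbers.

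\emph{Step 1 (dominating $x^\ell$ by an exponential).} For $x\ge 0$ and $t>0$, the function $x\mapsto x^\ell e^{-tx}$ is maximized at $x=\ell/t$. Its maximum value is $(\ell/(et))^\ell$, so
\[
x^\ell\le \Bigl(\frac{\ell}{et}\Bigr)^{\ell} e^{tx}.
\]
Since $X\ge 0$, taking expectations gives $\E X^\ell\le (\ell/(et))^\ell\,\E e^{tX}$.

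\emph{Step 2 (MGF bound).} Because $X$ is a sum of $n$ independent $\Ber(p)$ variables,
\[
\E e^{tX}=(1+p(e^t-1))^n .
\]
Using $1+y\le e^y$, this is at most $e^{\mu(e^t-1)}$. Combining with Step 1,
\[
\E X^\ell\le \Bigl(\frac{\ell}{et}\Bigr)^{\ell}e^{\mu(e^t-1)}\quad\text{for every } t>0 .
\]

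\emph{Step 3 (choice of $t$).} If $\mu=0$ then $X=0$ almost surely and the claim is trivial, so assume $\mu>0$. Take $t=\ln(1+\ell/\mu)$. Then $\mu(e^t-1)=\ell$, so the factor $e^{\ell}$ cancels the $e^{-\ell}$ coming from Step 1, and
\[
\E X^\ell\le\Bigl(\frac{\ell}{\ln(1+\ell/\mu)}\Bigr)^{\ell}.
\]

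\emph{Step 4 (elementary log inequality).} The only remaining point is the inequality $\ln(1+y)\ge \frac{2y}{2+y}$ for $y\ge 0$. It holds because both sides vanish at $y=0$, and the derivative of the left side, $1/(1+y)$, is at least the derivative of the right side, $4/(2+y)^2$, since $(2+y)^2\ge 4(1+y)$. Applying it with $y=\ell/\mu$ gives
\[
\frac{\ell}{\ln(1+\ell/\mu)}\le \frac{\ell(2+\ell/\mu)}{2\ell/\mu}=\mu+\frac{\ell}{2}.
\]
This yields $\E X^\ell\le(\mu+\ell/2)^\ell$.

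I expect no real obstacle. The only delicate choice is $t$ in Step 3: a cruder choice such as $t=1$ loses the sharp constant $\tfrac12$, whereas the choice above makes the exponential term exactly $e^\ell$. The constant $\tfrac12$ is then supplied by the Padé-type lower bound on the logarithm in Step 4.
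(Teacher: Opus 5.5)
Your proof is correct and complete. The paper gives no proof of this lemma; it simply imports it from Ahle, and your argument is essentially the one behind the cited bound: dominate $x^\ell$ by $(\ell/(et))^\ell e^{tx}$, bound the binomial MGF by $e^{\mu(e^t-1)}$, set $t=\ln(1+\ell/\mu)$, and finish with $\ln(1+y)\ge 2y/(2+y)$.

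Note also that Step 1 never uses that $\ell$ is an integer, so your argument proves the bound for every real $\ell>0$. This covers the paper's later application of the lemma with the possibly non-integer exponent $\ell/2$, when it bounds $\E|I|^{\ell/2}$.
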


\section{Formulation of the Error Bound and Analysis for Upper Bound}\label{sec:error_bound_setup}

Suppose that $T$ is a subset of $\lip_1\times \R^d$. Let $\pi_1$ denote the projection on $\lip_1$ and $\pi_2$ the projection on $\R^d$.

Recall that we need to bound the following error:
\begin{equation}
  \label{error bound}
  \Psi := \sup_{(f,x)\in T} \abs[\big]{ ( \norm{ S
    \Lambda(f(Ax)-v) }_p^p - \norm{ \Lambda(f(Ax)-v) }_p^p ) 
     - (\norm{S \Lambda(\bar{f}(A\bar{x})-v) }_p^p  - \norm{\Lambda(\bar{f}(A\bar{x})-v) }_p^p) },
\end{equation}
which can be rewritten as
\begin{equation}\label{eqn:Psi_second}
  \Psi = \sup_{(f,x)\in T} \abs*{ \sum_{i = 1}^n 
    (S_{ii}^p - 1) (\abs{(f(Ax)-v)_i}^p - \abs{(\bar{f}(A\bar{x})-v)_i}^p) }.
\end{equation}

Similar to the approach in~\cite{iclr2025}, we partition the index set $[n]$ into three disjoint sets $J$, $G\setminus J$ and $G^c$, where
\begin{gather*}
    G \coloneqq \left\{i\in [n]:
    |\Lambda_{ii}(\bar{f}(A\bar{x})-v)_i| \leq \frac{d^{\frac{1}{2}-
    \frac{1}{2 \vee p}}w_i^{\frac{1}{p}}R^{\frac{1}{p}}}{\eps} \right\},
    \quad 
    J \coloneqq \left\{i\in G: w_i > \frac{\eps^p d}{n^2}\right\}.
\end{gather*}
Correspondingly, we split the summation over $[n]$ in \eqref{eqn:Psi_second} into three sums, and consequently, 
\[
    \Psi\leq \Psi_1 + \Psi_2 + \Psi_3,
\]
where
\begin{align*}
    \Psi_1 &=  \sup_{(f,x)\in T} \abs*{ \sum_{i\not\in G} (S_{ii}^p - 1) \Lambda_{ii}^p (\abs{(f(Ax)-v)_i}^p - \abs{(\bar{f}(A\bar{x})-v)_i}^p) } \\
    \Psi_2 &=  \sup_{(f,x)\in T} \abs*{ \sum_{i\in G\setminus J} (S_{ii}^p - 1) \Lambda_{ii}^p (\abs{(f(Ax)-v)_i}^p - \abs{(\bar{f}(A\bar{x})-v)_i}^p) } \\
    \Psi_3 &=  \sup_{(f,x)\in T} \abs*{ \sum_{i\in J} (S_{ii}^p - 1) \Lambda_{ii}^p (\abs{(f(Ax)-v)_i}^p - \abs{(\bar{f}(A\bar{x})-v)_i}^p) }
\end{align*}

Bounding $\Psi_1$ and $\Psi_2$ is relatively straightforward; the arguments follow those in~\cite{iclr2025}, though they are slightly more involved due to the varying functions $f$. 

\begin{lemma}[Bounding $\Psi_1$]
  For any $R \geq \norm{\Lambda A\bar{x}}_p^p$ and $\eps >
  0$, let $T$ be a set such that $\{ \bar{x}\} \subseteq T \subseteq \{x
  \in \mathbb{R}^d : \norm{\Lambda A x}_p^p \leq R \}$. Suppose that $S$ is an $n$-by-$n$ diagonal matrix with
  non-negative diagonal entries and
  \[
    \norm{S\Lambda(\bar{f}(A\bar{x})-v)}_p^p \lesssim V,
  \]
  where $V = \norm{\Lambda \bar{f}(A\bar{x})-v}_p^p$. Then we have
  \[
    \Psi_1 \lesssim \eps V.
  \]
\end{lemma}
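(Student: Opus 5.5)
For every $i\notin G$ the residual at the reference point is large compared with how far any $(f,x)\in T$ can move that coordinate. So the plan is to bound the per-coordinate difference of $p$-th powers by $\eps$ times the reference term, and then sum using the two $\ell_p$ bounds on the reference point.

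Fix $i\notin G$ and write $r_i := \Lambda_{ii}(\bar f(A\bar x)-v)_i$ and $\Delta_i := \Lambda_{ii}(f(Ax)-\bar f(A\bar x))_i$. Then $\Lambda_{ii}(f(Ax)-v)_i = r_i+\Delta_i$.

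\textbf{Bounding $\Delta_i$.} By Lemma~\ref{lem: bound for (lambda Ax)_i} we have $|\Delta_i|\le 4d^{\frac12-\frac1{2\vee p}}w_i^{1/p}R^{1/p}$. Since $i\notin G$, we also have $|r_i| > d^{\frac12-\frac1{2\vee p}}w_i^{1/p}R^{1/p}/\eps$. Together these give $|\Delta_i|\le 4\eps|r_i|$. This holds uniformly over $(f,x)\in T$, because the lemma only uses $\|\Lambda Ax\|_p^p\le R$, $\|\Lambda A\bar x\|_p^p\le R$ and $f,\bar f\in\lip_1$.

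\textbf{Per-coordinate estimate.} Apply the mean value theorem to $t\mapsto |t|^p$:
\[
\bigl||r_i+\Delta_i|^p-|r_i|^p\bigr| \le p(|r_i|+|\Delta_i|)^{p-1}|\Delta_i| \le p(1+4\eps)^{p-1}\cdot 4\eps|r_i|^p \lesssim_p \eps|r_i|^p,
\]
using that $\eps$ is small.

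\textbf{Summing.} Since $S_{ii}\ge 0$, we have $|S_{ii}^p-1|\le S_{ii}^p+1$. Hence, uniformly over $T$,
\[
\Psi_1 \lesssim \eps\sum_{i\notin G}(S_{ii}^p+1)|r_i|^p \le \eps\bigl(\|S\Lambda(\bar f(A\bar x)-v)\|_p^p+\|\Lambda(\bar f(A\bar x)-v)\|_p^p\bigr) \lesssim \eps V,
\]
where the last step uses the hypothesis $\|S\Lambda(\bar f(A\bar x)-v)\|_p^p\lesssim V$ together with the definition of $V$.

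I expect no real obstacle. The one point that needs care is the first step: the displacement bound has to hold uniformly over $T$ even though $f$ varies. This is exactly why the bound on $\Delta_i$ goes through the triangle-inequality form of Lemma~\ref{lem: bound for (lambda Ax)_i}, which controls both the change in $x$ and the change in the link function.
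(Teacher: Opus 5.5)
Your proposal is correct and follows essentially the same argument as the paper. You use Lemma~\ref{lem: bound for (lambda Ax)_i} together with $i\notin G$ to get $|\Delta_i|\le 4\eps|r_i|$ uniformly over $T$, turn this into a per-coordinate bound $\lesssim \eps|r_i|^p$, and sum using $|S_{ii}^p-1|\le S_{ii}^p+1$ and the hypothesis on $\|S\Lambda(\bar f(A\bar x)-v)\|_p^p$; the only cosmetic difference is that you invoke the mean value theorem where the paper uses $\bigl||a|^p-|b|^p\bigr|\le p|a-b|(|a|^{p-1}+|b|^{p-1})$.
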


\begin{proof}
  By the triangle inequality,
  \begin{align*}
    &\abs*{ \sum_{i\not\in G} (S_{ii}^p - 1) \Lambda_{ii}^p (\abs{(f(Ax)-v)_i}^p - \abs{(\bar{f}(A\bar{x})-v)_i}^p)}\\
    \leq\,& \sum_{i \notin G} (S_{ii}^p + 1) \Lambda_{ii}^p\,
    \abs[\big]{\, |(f(Ax)-v)_i|^p - |(\bar{f}(A\bar{x})-v)_i|^p \,}.
  \end{align*}
  Using the inequality $\abs{ |a|^p - |b|^p } \leq p |a-b|\,
  (|a|^{p-1} + |b|^{p-1})$, we have
  \begin{align*}
    & \Lambda_{ii}^p \abs[\big]{\, |(f(Ax)-v)_i|^p -
    |(\bar{f}(A\bar{x})-v)_i|^p \, } \\
    \leq\, & p\, \abs{\Lambda_{ii}(f(Ax) - \bar{f}(A\bar{x}))_i } \, (|\Lambda_{ii}(f(Ax) - v)_i|^{p-1} +
    |\Lambda_{ii}(\bar{f}(A\bar{x}) - v)_i|^{p-1}).
  \end{align*}
  Hence, for $i \notin G$, by Lemma~\ref{lem: bound for (lambda Ax)_i} and the definition of $G$, it holds that
  \[
    \abs{ \Lambda_{ii}(f(Ax) - \bar{f}(A\bar{x}))_i } 
    \leq 4d^{\frac{1}{2}- \frac{1}{2 \vee p}} w_i^{\frac{1}{p}} R^{\frac{1}{p}}
    \leq 4\eps \abs{\Lambda_{ii}(\bar{f}(A\bar{x})-v)_i }
  \]
  and so
  \[
    \abs{\Lambda_{ii}(f(Ax) - v)_i } 
    \leq \abs{ \Lambda_{ii}(f(Ax) - \bar{f}(A\bar{x}))_i } +
    \abs{\Lambda_{ii}(\bar{f}(A\bar{x}) - v)_i  } 
    \leq (1 + 4\eps)\abs{\Lambda_{ii}(\bar{f}(A\bar{x}) - v)_i}.
  \]
  It follows that
  \[
    \Lambda_{ii}^p \abs{ (f(Ax)-v)_i|^p -
    |(\bar{f}(A\bar{x})-v)_i|^p } \lesssim \eps
    \abs{\Lambda_{ii}(\bar{f}(A\bar{x}) - v)_i  }^p.
  \]
  Therefore,
  \begin{align*}
    \abs*{ \sum_{i\not\in G} (S_{ii}^p - 1) \Lambda_{ii}^p (\abs{(f(Ax)-v)_i}^p - \abs{(\bar{f}(A\bar{x})-v)_i}^p)} &\lesssim \eps \sum_{i \notin G} (S_{ii}^p + 1)\abs{\Lambda_{ii}(\bar{f}(A\bar{x}) - v)_i  }^p \\
    &\leq \eps \sum_{i = 1}^n (S_{ii}^p + 1)\abs{\Lambda_{ii}(\bar{f}(A\bar{x}) - v)_i}^p 
    \lesssim \eps V. \qedhere
  \end{align*}
\end{proof}

\begin{lemma}[Bounding $\Psi_2$]
  For any $R \geq \norm{\Lambda A\bar{x}}_p^p$ and $\eps >
  0$, let $T$ be a set that $\{ \bar{x}\} \subseteq T \subseteq
  \{ x \in \mathbb{R}^d : \norm{\Lambda Ax}_p^p \leq R\}$. 
  Suppose $S \in \mathbb{R}^{n \times n}$ is a diagonal
  matrix whose entries satisfy $0 \leq S_{ii}^p \leq
  \frac{1}{\alpha}$ for any $\alpha > 0$. Then
  \[
    \Psi_2 \lesssim \frac{d^{1 \vee \frac{p}{2}}}{\alpha n} R.
  \]
\end{lemma}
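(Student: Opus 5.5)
The plan is a crude termwise bound that uses only $S_{ii}^p\le 1/\alpha$ and the fact that $G\setminus J$ contains only rows of tiny Lewis weight; no probability is needed. By the triangle inequality, for each $(f,x)\in T$,
\[
\Bigl|\sum_{i\in G\setminus J}(S_{ii}^p-1)\Lambda_{ii}^p(\cdots)\Bigr|
\le \Bigl(\frac{1}{\alpha}+1\Bigr)\sum_{i\in G\setminus J}\Lambda_{ii}^p\,\bigl|\,|(f(Ax)-v)_i|^p-|(\bar f(A\bar x)-v)_i|^p\bigr|.
\]
We may take $\alpha\le 1$, so the prefactor is at most $2/\alpha$.

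Next I bound each summand uniformly. As in the proof for $\Psi_1$, I apply $\abs{|a|^p-|b|^p}\le p|a-b|(|a|^{p-1}+|b|^{p-1})$ with $a=\Lambda_{ii}(f(Ax)-v)_i$ and $b=\Lambda_{ii}(\bar f(A\bar x)-v)_i$. Write $D:=d^{\frac12-\frac1{2\vee p}}w_i^{1/p}R^{1/p}$. Lemma~\ref{lem: bound for (lambda Ax)_i} gives $|a-b|\le 4D$. Since $i\in G$, we have $|b|\le D/\eps$, and hence $|a|\le (4+1/\eps)D$. With $\eps<1$ this gives
\[
\Lambda_{ii}^p\bigl|\,|(f(Ax)-v)_i|^p-|(\bar f(A\bar x)-v)_i|^p\bigr|\lesssim_p \eps^{-(p-1)}D^p=\eps^{1-p}\,d^{\frac p2-\frac p{2\vee p}}\,w_i\,R .
\]

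Then I sum over $i\in G\setminus J$. There $w_i\le \eps^p d/n^2$, and there are at most $n$ such indices, so $\sum_{i\in G\setminus J} w_i\le \eps^p d/n$. Combining,
\[
\Psi_2\lesssim \frac1\alpha\,\eps^{1-p}\,d^{\frac p2-\frac p{2\vee p}}\cdot\frac{\eps^p d}{n}\,R=\frac{\eps\, d^{1+\frac p2-\frac p{2\vee p}}}{\alpha n}R .
\]
It remains to check the exponent of $d$. For $p\le2$ it is $1+\frac p2-\frac p2=1$. For $p>2$ it is $1+\frac p2-1=\frac p2$. In both cases it equals $1\vee\frac p2$, and the leftover factor $\eps\le 1$ is dropped, giving the claim.

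The only delicate point is the blow-up of $|b|$ by $1/\eps$ on $G$, which raises $|a|^{p-1}$ to order $\eps^{1-p}$. This is exactly why the threshold in $J$ was chosen as $\eps^p d/n^2$: the factor $\eps^p$ from the small Lewis weights absorbs the $\eps^{1-p}$ loss, and the $1/n^2$ (rather than $1/n$) absorbs the count of at most $n$ indices.
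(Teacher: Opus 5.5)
Your proof is correct and is essentially the paper's argument: the triangle inequality with $S_{ii}^p\le 1/\alpha$, then a pointwise bound for $i\in G$ from the definition of $G$ together with $|\Lambda_{ii}(f(Ax)-\bar f(A\bar x))_i|\le 4d^{\frac12-\frac1{2\vee p}}w_i^{1/p}R^{1/p}$, and finally $w_i\le\eps^p d/n^2$ off $J$, summed over at most $n$ indices. The only difference is that you use the mean-value inequality and so keep a spare factor $\eps$, whereas the paper bounds each $|\Lambda_{ii}(f(Ax)-v)_i|^p$ directly by $(4+1/\eps)^p d^{\frac p2-\frac p{2\vee p}}w_iR$ and lets the $\eps^p$ in the threshold of $J$ cancel the resulting $\eps^{-p}$ exactly.
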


\begin{proof}
  By the triangle inequality,
  \begin{align*}
    & \abs*{ \sum_{i\in G\setminus J} (S_{ii}^p - 1) \Lambda_{ii}^p (\abs{(f(Ax)-v)_i}^p - \abs{(\bar{f}(A\bar{x})-v)_i}^p) }\\
    \leq\, & \sum_{i \in G \setminus J} (S_{ii}^p + 1)\Lambda_{ii}^p\, \abs[\big]{|(f(Ax)-v)_i|^p - |(\bar{f}(A\bar{x})-v)_i|^p }
  \end{align*}
  It holds for $i \in G$ and $x\in T$ that
  \begin{align*}
    |\Lambda_{ii}(f(Ax)-v)_i| &\leq
    |\Lambda_{ii}(f(Ax)-\bar{f}(A\bar{x}))_i| +
    |\Lambda_{ii}(\bar{f}(A\bar{x})-v)_i|  \\
    &\leq \biggl(4 + \frac{1}{\eps} \biggl) d^{\frac{1}{2}-\frac{1}{2 \vee p}}w_i^{\frac{1}{p}}R^{\frac{1}{p}}, & \text{(by \Cref{lem: bound for (lambda Ax)_i})}
  \end{align*}
  When one further has $i\not\in J$,
  \[
    \abs{\Lambda_{ii}(f(Ax)-v)_i}^p 
    \lesssim 
    \frac{d^{\frac{p}{2}\vee 1 - 1}}{\eps^p}w_i R \lesssim \frac{d^{1 \vee \frac{p}{2}}}{n^2}R.
    \]
  Since $S_{ii}^p \leq \frac{1}{\alpha}$, we have
  \begin{align*}
  &\sup_{(f,x)\in T} \abs*{\sum_{i \in G \setminus J}
    (S_{ii}^p - 1)\Lambda_{ii}^p \biggl(|(f(Ax)-v)_i|^p -
  |(\bar{f}(A\bar{x})-v)_i|^p \biggl) } \\
  \lesssim\, & \sum_{i \in G \setminus J} \left( \frac{1}{\alpha} +1
  \right)\frac{d^{1 \vee \frac{p}{2}}}{n^2}R \\
  \lesssim\, &\frac{d^{1 \vee \frac{p}{2}}}{\alpha n} R. \qedhere
  \end{align*}
\end{proof}

Next we bound $\Psi_3$, which is the core technical difficulty. Recall that
\begin{align*}
\Psi_3 = \sup_{(f,x)\in T} \abs*{
\sum_{i \in J} (S_{ii}^p - 1)\Lambda_{ii}^p \, (|(f(Ax)-v)_i|^p - |(\bar{f}(A\bar{x})-v)_i|^p) }.
\end{align*}
By the symmetrization trick, we have
\begin{equation}\label{eqn:Psi_3_moment_original}
\E_S \Psi_3^\ell \leq 2^\ell \E_{\xi,S} \left(
  \sup_{(f,x)\in T} \abs*{ 
    \sum_{i \in J} \xi_i \cdot S_{ii}^p \Lambda_{ii}^p (|(f(Ax)-v)_i|^p -
|(\bar{f}(A\bar{x})-v)_i|^p ) } \right)^\ell.
\end{equation}
where $\xi$ has the same dimension as $J$ and whose entries are
independent Rademacher random variables.

For every $x \in \mathbb{R}^d$ and $f\in \lip_1$, we define $Z(f,x)\in\R^n$ coordinatewise as
\[
Z_i(f,x) \coloneqq \Lambda_{ii}^p |(f(Ax)-v)_i|^p.
\]
Let $I \subseteq J$ be the set of indices $i$ such
that $S_{ii}^p = \frac{1}{\alpha}$. We can then write \eqref{eqn:Psi_3_moment_original} as
\begin{equation}\label{eqn:Psi_3_moment_rewritten}
\E_S \Psi_3^\ell 
\leq 
2^\ell \E_{\xi,S} \left(\sup_{(f,x)\in T}\abs*{\frac{1}{\alpha} \inner{\xi_I}{Z_I(f,x)} - \inner{\xi_I}{Z_I(\bar{f},\bar{x})} }\right)^\ell.
\end{equation}

Next we condition on $S$.  Define for $(f_1,x_1), (f_2,x_2)\in \lip_1\times \R^d$ the following pseudometric.
\begin{align*}
D_2((f_1,x_1),(f_2,x_2)) &\coloneqq 
\left( \sum_{i \in I}
  \abs[\big]{Z_i(f_1, x_1) - Z_i(f_2,x_2)}^2 \right)^{\frac{1}{2}} \\
&=\left( \sum_{i \in I}
  \abs[\big]{|\Lambda_{ii}(f_1(Ax_1)-v)_i|^p - |\Lambda_{ii}(f_2(Ax_2)
-v)_i|^p }^2 \right)^{\frac{1}{2}}. 
\end{align*}

Since $(\bar{f},\bar{x})\in T$, the $\ell$-th moment on the right-hand side of \eqref{eqn:Psi_3_moment_rewritten} can be upper bounded using Dudley's integral \eqref{eqn:sup_moment_bound1} as
\begin{equation}\label{eqn:Psi_3_moment_dudley}
    \E_\xi \left(\sup_{(f,x)\in T}\abs[\big]{ \inner{\xi_I}{Z_I(f,x)} - \inner{\xi_I}{Z_I(\bar{f},\bar{x})} }\right)^\ell \\
    \leq 
    C^\ell \left(\int_0^\infty \sqrt{\ln\cN(T, D_2, \eps)} d\eps + \sqrt\ell \diam(T, D_2)\right)^\ell
\end{equation}
for some absolute constant $C > 0$. We shall bound the Dudley's integral and the diameter.

\section{Upper Bounding $D_2$}
We shall upper bound $D_2$ on $T$ using simpler pseudometrics. Define for $f_1,f_2\in\lip_1$ and $x,x'\in\R^d$ the following pseudometrics.
\begin{gather*}
d_T(f_1,f_2) \coloneqq \sup_{x\in \pi_2(T)} \norm{f_1(Ax) - f_2(Ax)}_{I,\infty} \\
\rho_T(x_1,x_2) \coloneqq \sup_{f\in \pi_1(T)} \norm{f(Ax_1) - f(Ax_2)}_{I,\infty}
\end{gather*}

\begin{lemma}\label{lem:D_2_bound}
Let $\{(\bar{f}, \bar{x})\} \subseteq T \subseteq \{(f, x) \in \lip_1\times \R^d: \norm{\Lambda Ax}_p^p \leq R \text{ and } \norm{\Lambda (\bar{f}(A\bar{x})-v)}_p^p\leq F\}$. 
If $I \subseteq J$ satisfies
\[  
    \norm{\Lambda (\bar{f}(A \bar{x})-v)}_{I,p}^p \lesssim \alpha
    \norm{\Lambda (\bar{f}(A\bar{x})-v)}_p^p
    \quad\text{and}\quad
    \sup_{(f,x)\in T} \norm{\Lambda(f(Ax)-\bar{f}(A\bar{x})}_{I,p}^p \lesssim \alpha F,
\]
then
\[
D_2((f_1, x_1), (f_2, x_2)) \lesssim (\alpha F)^\theta\left( (d_T(f_1, f_2))^\phi + (\rho_T(x_1, x_2))^\phi\right),
\]
where
\[
\theta = \left(1-\frac{1}{p}\right)\vee \frac{1}{2},\quad \phi = \frac{p}{2}\wedge 1.
\]
\end{lemma}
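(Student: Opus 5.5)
Fix $(f_1,x_1),(f_2,x_2)\in T$ and write $y_k = \Lambda(f_k(Ax_k)-v)$ for $k=1,2$. Everything is coordinatewise on $I$, so I will bound each term $\abs{|y_{1,i}|^p-|y_{2,i}|^p}$ and then sum.

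\textbf{Decoupling step.} Insert the intermediate point $f_1(Ax_2)$:
\[
\abs{(f_1(Ax_1)-f_2(Ax_2))_i}\le \abs{(f_1(Ax_1)-f_1(Ax_2))_i}+\abs{(f_1(Ax_2)-f_2(Ax_2))_i}.
\]
Since $x_2\in\pi_2(T)$ and $f_1\in\pi_1(T)$, the right-hand side is at most $\rho_T(x_1,x_2)+d_T(f_1,f_2)$. This holds for every $i\in I$ (with the weights $\Lambda_{ii}$ absorbed as in the definition of $\norm{\cdot}_{I,\infty}$). So $\abs{y_{1,i}-y_{2,i}}\le \Delta:=d_T+\rho_T$ uniformly over $i\in I$. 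This is the only point where the non-rectangular structure of $T$ matters, and the intermediate point $(f_1,x_2)$ is never required to lie in $T$; only its projections are used.

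\textbf{Case $p\ge2$.} Use the bound
\[
\abs{|a|^p-|b|^p}\le p\abs{a-b}\bigl(|a|^{p-1}+|b|^{p-1}\bigr).
\]
Squaring and summing over $I$ gives
\[
D_2^2\lesssim \sum_{i\in I}\abs{y_{1,i}-y_{2,i}}^2\bigl(|y_{1,i}|^{2p-2}+|y_{2,i}|^{2p-2}\bigr).
\]
Bound one factor of $\abs{y_{1,i}-y_{2,i}}^{2}$ by $\Delta^2$. It remains to control $\sum_{i\in I}|y_{k,i}|^{2p-2}$. Since $2p-2\ge p$, this is at most
\[
\Bigl(\max_i|y_{k,i}|\Bigr)^{p-2}\sum_{i\in I}|y_{k,i}|^p .
\]
The hypotheses give $\norm{y_k}_{I,p}^p\lesssim \alpha F$, via the triangle inequality together with $\norm{\Lambda(\bar f(A\bar x)-v)}_{I,p}^p\lesssim\alpha V\le \alpha F$. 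They also give $\max_i|y_{k,i}|\le\norm{y_k}_{I,p}\lesssim(\alpha F)^{1/p}$. Hence $\sum_i|y_{k,i}|^{2p-2}\lesssim(\alpha F)^{2-2/p}$, and so $D_2\lesssim(\alpha F)^{1-1/p}\Delta$. This is the claimed bound with $\theta=1-1/p$ and $\phi=1$; the last step uses $(d_T+\rho_T)\le d_T+\rho_T$ trivially.

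\textbf{Case $1\le p<2$.} The previous argument loses, because $\sum|y|^{2p-2}$ is not controlled by an $\ell_p$ norm. Instead use
\[
\abs{|a|^p-|b|^p}^2\le \abs{|a|^p-|b|^p}^{\,2-1}\cdot\abs{|a|^p-|b|^p}
\]
and the Hölder-type estimate $\abs{|a|^p-|b|^p}\le |a-b|^{p}$ for $p\le1$. That estimate does not apply here, so for $1\le p\le2$ I would use
\[
\abs{|a|^p-|b|^p}\lesssim |a-b|^{p/2}\bigl(|a|^{p/2}+|b|^{p/2}\bigr),
\]
which holds since $t\mapsto t^{p/2}$ is concave with exponent $\le1$. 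Write $|a|^p-|b|^p=(|a|^{p/2}-|b|^{p/2})(|a|^{p/2}+|b|^{p/2})$ and note $\abs{|a|^{p/2}-|b|^{p/2}}\le|a-b|^{p/2}$. Squaring and summing gives
\[
D_2^2\lesssim \Delta^{p}\sum_{i\in I}\bigl(|y_{1,i}|^p+|y_{2,i}|^p\bigr)\lesssim \Delta^p\,\alpha F,
\]
so $D_2\lesssim(\alpha F)^{1/2}\Delta^{p/2}$. Finally use $(d_T+\rho_T)^{p/2}\le d_T^{p/2}+\rho_T^{p/2}$, which holds by subadditivity since $p/2\le1$. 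This matches $\theta=1/2$ and $\phi=p/2$.

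\textbf{Main obstacle.} The main obstacle is the bookkeeping that $\norm{y_k}_{I,p}^p\lesssim\alpha F$ for arbitrary points of $T$, not just $(\bar f,\bar x)$. I would obtain it from $\norm{y_k}_{I,p}\le\norm{\Lambda(f_k(Ax_k)-\bar f(A\bar x))}_{I,p}+\norm{\Lambda(\bar f(A\bar x)-v)}_{I,p}$, applying the two displayed hypotheses and $V\le F$.
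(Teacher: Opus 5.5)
Your proof is correct, but in the regime $1\le p<2$ it takes a different route from the paper.

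The decoupling step and the case $p\ge 2$ essentially coincide with the paper. The paper inserts $f_2(Ax_1)$ rather than $f_1(Ax_2)$ as the intermediate point, which is immaterial. For $p>2$ it uses exactly your bound $\abs{z_i}^{2p-2}\le\abs{z_i}^p\norm{z}_\infty^{p-2}$ together with $\norm{\cdot}_{I,\infty}\le\norm{\cdot}_{I,p}$.

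For $1\le p<2$, the paper starts from the same mean-value bound $\abs{\abs{a}^p-\abs{b}^p}\le p\abs{a-b}(\abs{a}^{p-1}+\abs{b}^{p-1})$ it uses for $p>2$. It then pulls out $\norm{\Lambda(u_1-u_2)}_{I,\infty}^p$ and applies H\"older's inequality to the remaining sum with conjugate exponents $p/(2-p)$ and $p/(2p-2)$. This needs the additional estimate $\norm{\Lambda(u_1-u_2)}_{I,p}\lesssim(\alpha F)^{1/p}$, and at $p=1$ it degenerates to the exponent pair $(1,\infty)$. You instead factor $\abs{a}^p-\abs{b}^p$ as a difference of squares and use the $p/2$-H\"older continuity of $t\mapsto t^{p/2}$. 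This gives the pointwise inequality $\abs{\abs{a}^p-\abs{b}^p}^2\le 2\abs{a-b}^p(\abs{a}^p+\abs{b}^p)$, and summing over $I$ yields $D_2^2\lesssim\Delta^p\,\alpha F$ directly.

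Your route is shorter and more elementary. It needs only $\norm{y_k}_{I,p}^p\lesssim\alpha F$ and treats $p=1$ without any degenerate exponent. The paper's route has the mild advantage of using one starting inequality for both regimes.

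In a write-up, delete the abandoned sentence about $\abs{\abs{a}^p-\abs{b}^p}\le\abs{a-b}^p$ for $p\le 1$, since it plays no role.
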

\begin{proof}
    Recall that
    \[
        D_2((f_1,x_1),(f_2,x_2))^2 = \sum_{i \in I}
            \abs[\big]{|\Lambda_{ii}(f_1(Ax_1)-v)_i|^p - |\Lambda_{ii}(f_2(Ax_2)-v)_i|^p }^2.
    \]
    For notational convenience, we write $u_1 = f_1(Ax_1)$, $u_2 = f_2(Ax_2)$ and $\bar{u} = \bar{f}(A\bar{x})$. Using the fact that $\abs{|a|^p - |b|^p } \leq p|a - b| (|a|^{p-1} + |b|^{p-1})$, we have
    \[
        |\Lambda_{ii}(u_1-v)_i|^p - |\Lambda_{ii}(u_2-v)_i|^p
        \leq p \abs{\Lambda_{ii}(u_1-u_2)_i}\,
        (|\Lambda_{ii}(u_1-v)_i|^{p-1} + |\Lambda_{ii}(u_2-v)_i|^{p-1}).
    \]
    It then follows that
    \[
        \sum_{i \in I} \abs{|\Lambda_{ii}(u_1-v)_i|^p - |\Lambda_{ii}(u_2-v)_i|^p}^2 \\
        \lesssim
        \sum_{i\in I} \abs{\Lambda_{ii}(u_1 - u_2)_i }^2 
        \left( |\Lambda_{ii}(u_1 - v)_i|^{2p-2} + |\Lambda_{ii}(u_2 - v)_i|^{2p-2} \right).
    \]
    We shall proceed in two ways, depending on the value of $p$.
    \begin{mycase}
    \case $1 \leq p \leq 2$. We begin with
    \begin{align*}
        &\sum_{i\in I} \abs{\Lambda_{ii}(u_1 - u_2)_i }^2 
        \left( |\Lambda_{ii}(u_1 - v)_i|^{2p-2} + |\Lambda_{ii}(u_2 - v)_i|^{2p-2} \right) \\
        \leq \,&
        \left(\max_{i\in I} \abs{\Lambda_{ii}(u_1 - u_2)_i}^p\right)
        \sum_{i\in I} \abs{\Lambda_{ii}(u_1 - u_2)_i }^{2-p}
        \left( |\Lambda_{ii}(u_1 - v)_i|^{2p-2} + |\Lambda_{ii}(u_2 - v)_i|^{2p-2} \right).
    \end{align*}
    The sum can then be upper bounded as 
    \begin{align*}
        &\sum_{i \in I} \abs{\Lambda_{ii}(u_1-u_2)_i}^{2-p} 
        \left( |\Lambda_{ii}(u_1-v)_i|^{2p-2} + |\Lambda_{ii}(u_2-v)_i|^{2p-2}\right) \\
        \lesssim\, & \sum_{i \in I} \Lambda_{ii}^p \abs{(u_1-u_2)_i }^{2-p} \max \left\{|(u_1-v)_i|^{2p-2}, |(u_2-v)_i|^{2p-2}  \right\} \\      \lesssim\, & \left(\sum_{i \in I} \Lambda_{ii}^p \abs{(u_1-u_2)_i }^{p} \right)^{\frac{2-p}{p}} \left( \sum_{i \in I} \Lambda_{ii}^p \max \left\{|(u_1-v)_i|^{p},  |(u_2-v)_i|^{p}\right\}\right)^{\frac{2p-2}{p}} \\
        \leq\, & \norm{\Lambda(u_1-u_2)}_{I,p}^{2-p}
        \left(\norm{\Lambda(u_1-v)}_{I,p}^p + \norm{\Lambda(u_2-v)}_{I,p}^p\right)^{\frac{2p-2}{p}}
    \end{align*}
    Since $u_j - v = u_j - \bar{u} + \bar{u} - v$ ($j = 1, 2$), by the triangle inequality,
    \begin{align*}
      \norm{\Lambda(u_j-v)}_{I,p} &\leq \norm{\Lambda(u_j-\bar{u})}_{I,p} + \norm{\Lambda(\bar{u}-v)}_{I,p} \\
      &\lesssim (\alpha F)^{\frac{1}{p}} + \left(\alpha \norm{\Lambda(\bar{u}-v)}_{p}^p\right)^\frac{1}{p} \\
      &\lesssim (\alpha F)^{\frac{1}{p}}.
    \end{align*}
    Moreover,
    \[
      \norm{\Lambda(u_1 - u_2)}_{I,p} \leq
      \norm{\Lambda(u_1-\bar{u})}_{I,p} +
      \norm{\Lambda(u_2-\bar{u})}_{I,p} 
      \lesssim (\alpha F)^{\frac{1}{p}}.
    \]
    Putting everything together, we have
    \[
      D_2((f_1,x_1),(f_2,x_2))^2 \lesssim \norm{\Lambda(u_1-u_2)}_{I,\infty}^p (\alpha F)^{\frac{2-p}{p}} (\alpha F)^{\frac{2p-2}{p}}
      \leq \alpha F \norm{\Lambda(u_1-u_2)}_{I,\infty}^p.
    \]

    \case $p > 2$. Note that for any vector $z$, we have $|z_i|^{2p-2} \leq
    |z_i|^p \norm{z}_\infty^{p-2}$. Hence,
    \begin{align*}
      & \sum_{i \in I} \abs{ \Lambda_{ii}(u_1-u_2)_i
      }^{2} \left( |\Lambda_{ii}(u_1-v)_i|^{2p-2} +
      |\Lambda_{ii}(u_2-v)_i|^{2p-2} \right) \\
      \leq\, & \sum_{i \in I}  \norm{\Lambda(u_1-u_2)}_{I,\infty}^{2} \left(
        |\Lambda_{ii}(u_1-v)_i|^p \norm{\Lambda(u_1-v)}_{I,\infty}^{p-2} +
        |\Lambda_{ii}(u_2-v)_i|^p \norm{\Lambda(u_2-v)}_{I,\infty}^{p-2} \right) \\
      =\, & \norm{\Lambda(u_1-u_2)}_{I,\infty}^{2}
      \left( \norm{\Lambda(u_1-v)}_{I,p}^p \norm{\Lambda(u_1-v)}_{I,\infty}^{p-2} + \norm{\Lambda(u_2-v)}_{I,p}^p \norm{\Lambda(u_2-v)}_{I,\infty}^{p-2} \right) \\
      \leq\, & (\alpha F) \norm{\Lambda(u_1-u_2)}_{I,\infty}^{2} \left(\norm{\Lambda(u_1-v)}_{I,\infty}^{p-2} + \norm{\Lambda(u_2-v)}_{I,\infty}^{p-2} \right) \\
      \leq\, & (\alpha F) \norm{\Lambda(u_1-u_2)}_{I,\infty}^{2} \left(\norm{\Lambda(u_1-v)}_{I,p}^{p-2} + \norm{\Lambda(u_2-v)}_{I,p}^{p-2} \right).
    \end{align*}
    Recall that $\norm{\Lambda(u_1-v)}_{I,p} \lesssim (\alpha
    F)^{\frac{1}{p}}$ and $\norm{\Lambda(u_2-v)}_{I,p} \lesssim (\alpha
    F)^{\frac{1}{p}}$. It then follows that
    \[
      D_2((f_1,x_1),(f_2,x_2))^2 \lesssim (\alpha F) (\alpha
      F)^{\frac{p-2}{p}} \norm{\Lambda(u_1-u_2)}_{I,\infty}^{2} \\
      \leq (\alpha F)^{2 - \frac{2}{p}} \norm{\Lambda(u_1-u_2)}_{I,\infty}^{2}.
    \]
    \end{mycase}
    Combining both cases, we obtain that
    \begin{equation}\label{eqn:D_2_intermediate}
        D_2((f_1,x_1),(f_2,x_2)) \lesssim (\alpha F)^{\frac{1}{2}\vee(1-\frac{1}{p})}\norm{f_1(Ax_1)-f_2(Ax_2)}_{I,\infty}^{\frac{p}{2}\wedge 1}.
    \end{equation}
    By the triangle inequality,
    \begin{equation}\label{eqn:D_2_aux_last_step}
        \norm{f_1(Ax_1)-f_2(Ax_2)}_{I,\infty}
        \leq
        \norm{f_1(Ax_1)-f_2(Ax_1)}_{I,\infty} + \norm{f_2(Ax_1)-f_2(Ax_2)}_{I,\infty}.
    \end{equation}
    Plugging \eqref{eqn:D_2_aux_last_step} into \eqref{eqn:D_2_intermediate} yields immediately that
    \[
        D_2((f_1,x_1),(f_2,x_2)) \lesssim (\alpha F)^{\frac{1}{2}\vee(1-\frac{1}{p})} \left[ \sup_{x\in \pi_2(T)} \norm{f_1(Ax) - f_2(Ax)}_{I,\infty}^{\frac{p}{2}\wedge 1} 
         + \sup_{f\in \pi_1(T)} \norm{f(Ax_1) - f(Ax_2)}_{I,\infty}^{\frac{p}{2}\wedge 1} \right]
    \]
    as desired.
\end{proof}

\begin{lemma}\label{lem:main_dudley_bound}
Under the assumptions of \Cref{lem:D_2_bound}, it holds that
\begin{equation}\label{eqn:main_dudley_bound}
  \int_0^\infty \sqrt{\ln \cN(T, D_2, \eps)} d\eps
  \lesssim
  (\alpha F)^\theta \left[
  \int_0^\infty \sqrt{\ln \cN(\pi_1(T), d_T^\phi, \eps)}
  d\eps +  \int_0^\infty\sqrt{\ln \cN(\pi_2(T), \rho_T^\phi, \eps)} d\eps
  \right]
\end{equation}
and
\[
    \diam(T, D_2) \leq (\alpha F)^\theta \left[\diam(\pi_1(T), d_T^\phi) + \diam(\pi_2(T), \rho_T^\phi)\right].
\]
\end{lemma}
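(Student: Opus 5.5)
The plan is to derive both claims from \Cref{lem:D_2_bound}, which gives for all pairs in $T$
\[
D_2((f_1,x_1),(f_2,x_2)) \le C(\alpha F)^\theta\bigl(d_T(f_1,f_2)^\phi + \rho_T(x_1,x_2)^\phi\bigr).
\]
Since $\phi\le 1$, both $d_T^\phi$ and $\rho_T^\phi$ are again pseudometrics, because $t\mapsto t^\phi$ is concave and subadditive. So $D_2$ on $T$ is dominated by $C(\alpha F)^\theta$ times the sum metric on the product $\pi_1(T)\times\pi_2(T)$, and the proof reduces to comparing covering numbers of $T$ with those of the two projections.

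For the covering numbers, I take an $\eps/2$-net $\{g_j\}$ of $\pi_1(T)$ under $d_T^\phi$ and an $\eps/2$-net $\{y_k\}$ of $\pi_2(T)$ under $\rho_T^\phi$. The products $(g_j,y_k)$ then $\eps$-cover $\pi_1(T)\times\pi_2(T)$ in the sum metric, and hence cover $T$.

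The main obstacle is that the net points $(g_j,y_k)$ need not lie in $T$, since $T$ is not rectangular. This matters because \Cref{lem:D_2_bound} bounds $D_2$ only through the residual estimates valid on $T$. I would handle it as follows. For each product cell that meets $T$, choose one representative of $T$ inside it; these representatives form a $2\eps$-net of $T$ in the sum metric, and their number is at most the product of the two net sizes.

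Two facts make this step legitimate. First, the distances $d_T$ and $\rho_T$ are taken as suprema over $\pi_2(T)$ and $\pi_1(T)$, so they are meaningful for arbitrary points of the projections. Second, the bound of \Cref{lem:D_2_bound} is applied only between two points of $T$, namely a point of $T$ and its representative. With the rescaled $\eps$ this gives
\[
\cN(T,D_2,\eps)\le \cN\bigl(\pi_1(T),d_T^\phi,\eps/(c(\alpha F)^\theta)\bigr)\,\cN\bigl(\pi_2(T),\rho_T^\phi,\eps/(c(\alpha F)^\theta)\bigr).
\]

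Next I take logarithms and use $\sqrt{a+b}\le\sqrt a+\sqrt b$ to split the integrand into two terms. The substitution $\eps=c(\alpha F)^\theta t$ in each integral then yields \eqref{eqn:main_dudley_bound}, with the constant absorbed into $\lesssim$.

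For the diameter, I apply \Cref{lem:D_2_bound} directly to any two points of $T$. The quantities $d_T(f_1,f_2)^\phi$ and $\rho_T(x_1,x_2)^\phi$ are bounded by the diameters of the projections, which gives the stated bound up to the absolute constant from \Cref{lem:D_2_bound}.
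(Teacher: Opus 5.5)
Your proposal is correct and follows essentially the same route as the paper: bound $\cN(T,D_2,\cdot)$ by a product of covering numbers of the two projections via \Cref{lem:D_2_bound}, split the integrand with $\sqrt{a+b}\le\sqrt a+\sqrt b$, rescale $\eps$, and bound the diameter directly from the same lemma. Your extra step of choosing a representative of $T$ in each nonempty product cell, at the cost of a factor $2$ in the radius, is a genuine refinement: the paper asserts the product bound without addressing that product net points need not lie in the non-rectangular $T$. Your remark that the diameter inequality holds only up to the constant of \Cref{lem:D_2_bound} is also accurate; the paper absorbs that constant without comment.
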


\begin{proof}
\Cref{lem:D_2_bound} implies that
\begin{gather*}
\diam(T, D_2) \leq (\alpha F)^\theta \left[\diam(\pi_1(T), d_T^\phi) + \diam(\pi_2(T), \rho_T^\phi)\right] \\
\shortintertext{and}
\cN(T, D_2, \eps) 
\leq \cN\left(\lip_1, (\alpha F)^\theta d_T^\phi, \frac{\eps}{2}\right)\cdot 
\cN\left(T, (\alpha F)^\theta \rho_T^\phi, \frac{\eps}{2}\right).
\end{gather*}
It follows that
\[
    \sqrt{\ln \cN(T, D_2, \eps)} 
    \leq \sqrt{\ln \cN\left(\lip_1, (\alpha F)^\theta d_T^\phi, \frac{\eps}{2}\right)} +  
    \sqrt{\ln \cN\left(T, (\alpha F)^\theta \rho_T^\phi, \frac{\eps}{2}\right)}
\]
and
\begin{align*}
  \int_0^\infty \sqrt{\ln \cN(T, D_2, \eps)} d\eps 
  &\lesssim \int_0^\infty \sqrt{\ln \cN\mleft(\pi_1(T), (\alpha F)^\theta d_T^\phi, \frac{\eps}{2}\mright)} d\eps + \int_0^\infty \sqrt{\ln \cN\mleft(\pi_2(T), (\alpha F)^\theta \rho_T^\phi, \frac{\eps}{2}\mright)} d\eps \\
  &\leq 2(\alpha F)^\theta \left(\int_0^\infty \sqrt{\ln \cN(\pi_1(T), d_T^\phi, \eps)} d\eps + \int_0^\infty \sqrt{\ln \cN(\pi_2(T), \rho_T^\phi, \eps)} d\eps\right). \qedhere
\end{align*}
\end{proof}

We shall bound the two Dudley's integrals in the right-hand side of \eqref{eqn:main_dudley_bound} separately. Li and Tai~\cite{iclr2025} computed the second integral regarding $\pi_2(T)$ for the problem with a known $f$. We summarize the result in the following lemma.

\begin{lemma}[Li and Tai~\cite{iclr2025}]\label{lem:dudley_integral_T}
Under the assumptions of \Cref{lem:D_2_bound}, it holds that
\begin{equation}\label{eqn:dudley_integral_T}
    (\alpha F)^\theta \int_0^\infty \sqrt{\ln \cN\mleft(\pi_2(T),\rho_T^\phi,\eps\mright)} d\eps \lesssim \alpha \Gamma (\ln^{\frac{5}{4}} d) \sqrt{\ln \frac{n}{\eps d}}, \quad (\alpha F)^\theta \diam(\pi_2(T),\rho_T^\phi) \lesssim \alpha \Gamma,
\end{equation}
where $\Gamma$ is defined as in~\eqref{eqn:Gamma}.
\end{lemma}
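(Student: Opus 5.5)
This lemma is essentially the $\pi_2(T)$-part of the Dudley bound from~\cite{iclr2025}, so the plan is to reduce $\rho_T$ to the metric used there and then reuse their entropy estimates. The key observation is that $\pi_1(T)\subseteq\lip_1$, so for every $f\in\pi_1(T)$ and $i\in I$ we have $|\Lambda_{ii}(f(Ax_1)-f(Ax_2))_i|\le|\Lambda_{ii}(Ax_1-Ax_2)_i|$. Here, as in \Cref{lem:D_2_bound}, the $\Lambda$-weighting is understood inside $\norm{\cdot}_{I,\infty}$. This gives the pointwise domination
\[
\rho_T(x_1,x_2)\le \norm{\Lambda A(x_1-x_2)}_{I,\infty},
\]
which is exactly the $\ell_\infty$-type metric on $\{x:\norm{\Lambda Ax}_p^p\le R\}$ (restricted to the rows in $I$) handled in the known-link analysis. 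Since $t\mapsto t^\phi$ is monotone, the same domination holds for $\rho_T^\phi$. Consequently $\cN(\pi_2(T),\rho_T^\phi,\eps)\le\cN(B_R,\norm{\Lambda A\cdot}_{I,\infty}^\phi,\eps)$, where $B_R=\{x:\norm{\Lambda Ax}_p^p\le R\}\supseteq\pi_2(T)$, and the diameter is dominated in the same way.

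Next I invoke the covering estimates of~\cite{iclr2025} for $B_R$ under $\norm{\Lambda A\cdot}_{I,\infty}$. These come from the Lewis-weight properties (\Cref{lem: lewis weight properties}(4) together with $w_i\lesssim d/n$ and $w_i>\eps^pd/n^2$ on $J\supseteq I$). Combined with the dual-Sudakov/Bourgain--Lindenstrauss--Milman type entropy bounds for $\ell_p$ balls in $\ell_\infty$, they give
\[
\sqrt{\ln\cN}\lesssim \frac{(\text{scale})\, d^{1/2}\ln^{\cdots}}{\eps}
\]
at large scales, plus a volumetric bound $d\ln(\text{scale}/\eps)$ at small scales. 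For the diameter I will use the entrywise bound $\|\Lambda A(x_1-x_2)\|_{I,\infty}\lesssim d^{\frac12-\frac1{2\vee p}}(d/n)^{1/p}R^{1/p}$.

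Raising the metric to the power $\phi$ and substituting $\eps\mapsto\eps^{1/\phi}$ in the integral, I then multiply by $(\alpha F)^\theta$. With $|I|\approx\alpha n$ (guaranteed by the conditioning), the scale factors combine to $\alpha\Gamma$ in both regimes $p\le2$ and $p>2$. The truncation at the smallest nontrivial scale, which is determined by $w_i>\eps^pd/n^2$, produces the factor $\sqrt{\ln\frac{n}{\eps d}}$.

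The main obstacle is the exponent bookkeeping: confirming that the $(\alpha F)^\theta$, $\phi$ and $|I|$ dependencies match~\eqref{eqn:Gamma} exactly. Since this computation is identical to the known-$f$ case once the reduction in the first step is made, I would cite~\cite{iclr2025} for it rather than redo it.
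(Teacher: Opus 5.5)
Your proposal is correct and follows the same route as the paper, which proves this lemma only by citing Li and Tai~\cite{iclr2025}; your explicit domination $\rho_T(x_1,x_2)\le\|\Lambda A(x_1-x_2)\|_{I,\infty}$, using the $1$-Lipschitz property of every $f\in\pi_1(T)$, is exactly the step the paper leaves unstated that lets the known-link computation carry over to an unknown link. One small correction: you do not need $|I|\approx\alpha n$ (and the conditioning does not provide it), because the $\alpha$-dependence comes entirely from $(\alpha F)^\theta$ --- for instance, $(\alpha F)^\theta\bigl(d^{\frac12-\frac1{2\vee p}}(d/n)^{1/p}R^{1/p}\bigr)^\phi=\alpha\Gamma$ exactly in both regimes of $p$ --- while $|I|\le n$ enters only through logarithmic factors.
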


\section{Bounds Pertaining to $\lip_1$}
It remains to upper bound the first integral regarding $\lip_1$. 
First, we prove an auxiliary result that bounds the covering
number of $\text{Lip}_1$ under $\norm{\cdot}$, where $\norm{\cdot}$ is the supremum of a collection of $L^\infty$-norms on different intervals, by the covering number under the $L^\infty$-norm on $[-1,1]$.

\begin{theorem}\label{thm:aux_L_infty_norm}
    Suppose that each $i$ in a finite index set $I$ corresponds to a bound threshold $M_i > 0$ and a weight $\lambda_i\in (0,1)$. Consider the norm $\norm{\cdot}$ on $\lip_1$ defined as
    \[
        \norm{f} := \max_{i\in I} \lambda_i \norm{f}_{L^\infty([-M_i,M_i])}.
    \]
    Let $M = \max_{i\in I} \lambda_i M_i$ and $\kappa = (\max_{i\in I} M_i)/M$. It holds for $\eps\in(0,1)$ that
    \[
        \cN(\lip_1, \norm{\cdot}, \eps) \leq \cN\left(\lip_1, \norm{\cdot}_{L^\infty([-1,1])}, \frac{\eps}{2M(\ln\kappa + 1)}\right).
    \]
\end{theorem}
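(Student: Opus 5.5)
The idea is to transport $\lip_1$ through a logarithmic change of variables at the level of derivatives. This gives a pair of linear maps $T$ (from $\lip_1$ into $\lip_1$ on $[-1,1]$) and $S$ (back into $\lip_1$) with $ST f=f$ on the relevant range and $\norm{Sh}\le 2M(\ln\kappa+1)\norm{h}_{L^\infty([-1,1])}$. Pulling back a net then gives the stated covering bound with no loss of a factor $2$, because the pulled-back centres $Sg_j$ automatically lie in $\lip_1$.

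\textbf{Setup.} Put $K=\max_i M_i=\kappa M$ and $a=1+\ln\kappa$. Note that $\kappa>1$, since $\lambda_iM_i<M_i$. Define the odd, increasing bijection $\psi:[-K,K]\to[-1,1]$ by
\[
\psi(u)=\frac{u}{aM}\ (0\le u\le M),\qquad \psi(u)=\frac{1+\ln(u/M)}{a}\ (M\le u\le K),
\]
so that $\psi(K)=1$. It is $C^1$ and piecewise $C^2$, with $\psi'(u)=1/(a\max\{M,u\})$ for $u\ge0$.

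\textbf{The maps $T$ and $S$.} For $f\in\lip_1$, let $Tf=g$, where $g(s)=\int_0^s f'(\psi^{-1}(\sigma))\,d\sigma$ for $s\in[-1,1]$, extended constantly outside. Since $\abs{f'}\le1$ a.e., $g\in\lip_1$.

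Conversely, for $h\in\lip_1$, let $(Sh)(t)=\int_0^t h'(\psi(u))\,du$ for $\abs{t}\le K$, extended constantly outside. Again $Sh\in\lip_1$, and $S$ is linear. Moreover $(STf)'(u)=(Tf)'(\psi(u))=f'(u)$ a.e., so $STf=f$ on $[-K,K]$. Since $\norm{\cdot}$ only sees $[-K,K]$, this gives $\norm{f-Sg}=\norm{S(Tf-g)}$ for every $g\in\lip_1$.

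\textbf{Key estimate.} The main step is to show $\norm{Sh}\le 2aM\norm{h}_{L^\infty([-1,1])}$ for every $h\in\lip_1$. By symmetry take $t\ge0$ and write $\norm{h}_\infty$ for $\norm{h}_{L^\infty([-1,1])}$.

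For $t\le M$, $\psi$ is linear, so $(Sh)(t)=aM\,h(\psi(t))$ and $\abs{(Sh)(t)}\le aM\norm{h}_\infty$.

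For $M<t\le K$, integrate by parts, writing $h'(\psi(u))=\frac{d}{du}h(\psi(u))/\psi'(u)$. Using $h(\psi(0))=0$ and $\frac{d}{du}(1/\psi'(u))=a$ on $(M,K)$,
\[
(Sh)(t)=\frac{h(\psi(t))}{\psi'(t)}-\int_M^t a\,h(\psi(u))\,du.
\]
Both terms are at most $at\norm{h}_\infty$ in absolute value, so $\abs{(Sh)(t)}\le 2at\norm{h}_\infty$.

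Now take any $i$ and any $\abs{t}\le M_i$.
\begin{itemize}
\item If $\abs{t}\le M$, then $\lambda_i<1$ gives $\lambda_i\abs{(Sh)(t)}\le aM\norm{h}_\infty$.
\item If $M<\abs{t}\le M_i$, then $\lambda_i\le M/M_i\le M/\abs{t}$, so $\lambda_i\abs{(Sh)(t)}\le 2aM\norm{h}_\infty$.
\end{itemize}
Taking the maximum over $i$ gives the claim.

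\textbf{Conclusion.} Let $r=\eps/(2M(\ln\kappa+1))$ and let $g_1,\dots,g_m\in\lip_1$ be a minimal $r$-cover of $\lip_1$ under $\norm{\cdot}_{L^\infty([-1,1])}$. Define the centres $f_j=Sg_j$; these lie in $\lip_1$. For any $f\in\lip_1$, the function $Tf$ lies in $\lip_1$, so some $j$ has $\norm{Tf-g_j}_{L^\infty([-1,1])}<r$. Then
\[
\norm{f-f_j}=\norm{S(Tf-g_j)}\le 2aM\norm{Tf-g_j}_{L^\infty([-1,1])}<2aMr=\eps.
\]
Hence $\cN(\lip_1,\norm{\cdot},\eps)\le m$, which is the statement.

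\textbf{Expected obstacle.} The only delicate point is the integration-by-parts estimate for $\abs{t}>M$, where the boundary term and the integral term each contribute $aM$ after weighting; this is exactly where the factor $2(\ln\kappa+1)$ comes from. Measure-theoretic details (a.e.\ derivatives, absolute continuity of $h\circ\psi$) are routine.
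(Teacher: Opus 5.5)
Your proof is correct, and it yields exactly the stated radius $\eps/(2M(\ln\kappa+1))$.

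\textbf{Comparison with the paper.} Your $\psi$ is the same logarithmic change of variables the paper uses, with the same break point $\beta=1/(1+\ln\kappa)$. What differs is the transport map.

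The paper transports \emph{values} with a scale normalization, $Tf(x)=f(\psi^{-1}(x))/\max\{M,\abs{\psi^{-1}(x)}\}$. The comparison $\norm{f}\le M\norm{Tf}_{L^\infty([-1,1])}$ is then immediate pointwise. The factor $2(\ln\kappa+1)$ appears as the Lipschitz constant of $Tf$, via a quotient-rule estimate.

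You transport \emph{derivatives}. This makes $T$ a bijection from $\lip_1$, restricted to $[-K,K]$, onto $\lip_1$ on $[-1,1]$, with explicit inverse $S$. The factor $2(\ln\kappa+1)$ moves into the sup-norm bound for $S$, obtained by integration by parts.

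\textbf{What each route buys.} The paper's route uses only elementary pointwise computations, with no a.e.\ derivatives or absolute continuity. Your route makes the pulled-back centres $Sg_j$ lie in $\lip_1$.

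The paper's closing ``which implies'' step glosses over this point. The rescaled image $T(\lip_1)/(2(\ln\kappa+1))$ is a proper subset of $\lip_1$ (it is $\tfrac12$-Lipschitz on $[-\beta,\beta]$). The natural inverse of the paper's $T$ does not map $\lip_1$ into $\lip_1$. The paper's definition of covering number requires centres in the set. So, as written, that step either uses centres outside $\lip_1$ or loses a factor $2$ in the radius. This is harmless for the application, but your argument proves the theorem with the constant exactly as stated.

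\textbf{One wording point.} You state the key estimate for $h\in\lip_1$ but apply it to $Tf-g_j$, which is only $2$-Lipschitz. This is fine, because your proof of the estimate uses only that $h$ is Lipschitz with $h(0)=0$. Alternatively, it follows from linearity and homogeneity of $S$. It is worth saying so explicitly.
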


\begin{proof}
  Let $\beta = 1/(1 + \ln\kappa)$.
  For each $f \in \text{Lip}_1$, define $Tf: [-1,1]\to \mathbb{R}$ as
  \begin{align*}
    Tf(x) \coloneqq
    \begin{cases}
      \frac{1}{M}f\big(\frac{M}{\beta}x\big), &\quad |x| \leq \beta \\
      f\big(\sgn(x)\cdot \kappa^{\frac{|x|-\beta}{1-\beta}}M\big)/\big(\kappa^{\frac{|x|-\beta}{1-\beta}}M\big),&
      \quad |x| > \beta
    \end{cases}
  \end{align*}
  This map $T$ is inspired by the discretization of Lipschitz functions in~\cite{colt}.
  
  We claim that $Tf$ is $L$-Lipschitz for $L = 2(\ln\kappa + 1)$. It is clear that when $x,y \in [-\beta,\beta]$,
  \begin{align*}
    |Tf(x)-Tf(y)| \leq \frac{1}{\beta}|x-y| = (\ln\kappa + 1)|x-y|.
  \end{align*}
  Suppose that $1 \geq x > y \geq \beta$. Note that $\kappa^{\frac{x-\beta}{1-\beta}}
  = \kappa^x e^{x-1}$, thus
  \begin{align*}
    |Tf(x)-Tf(y)| &= \abs*{ \frac{f(\kappa^x e^{x-1} M)}{\kappa^x e^{x-1} M} -
    \frac{f(\kappa^y e^{y-1} M)}{\kappa^y e^{y-1} M}} \\
    &= \abs*{ \frac{f(\kappa^x e^{x-1} M)}{\kappa^{x-y} e^{x-y}\kappa^y e^{y-1} M} -
    \frac{f(\kappa^y e^{y-1} M)}{\kappa^y e^{y-1}M}} \\
    &= \abs*{
    \frac{f(\kappa^x e^{x-1} M)-f(\kappa^y e^{y-1}M)}{\kappa^{x-y}e^{x-y}\kappa^y e^{y-1}M}
    -
    \frac{(\kappa^{x-y} e^{x-y} - 1)f(\kappa^y e^{y-1} M)}{\kappa^{x-y}e^{x-y}\kappa^y e^{y-1}M}}
    \\
    &\leq
    \frac{\kappa^x e^{x-1}M - \kappa^y e^{y-1}M}{\kappa^{x-y}e^{x-y}\kappa^y e^{y-1}M} +
    \frac{\kappa^{x-y}e^{x-y}-1}{\kappa^{x-y}e^{x-y}} \\
    &= 2\cdot \frac{\kappa^{x-y}e^{x-y}-1}{\kappa^{x-y}e^{x-y}} \\
    &= 2(1 - (\kappa e)^{-(x-y)}) \\
    &\leq 2(x-y)\ln(\kappa e).
  \end{align*}
  A similar bound can be established on $[-1,-\beta]$. This establishes
  that $Tf$ is $L$-Lipschitz for $L = 2(\ln\kappa + 1)$.

  Next we show that $\norm{f} \leq M \norm{Tf}_{L^\infty([-1,1])}$. Indeed, fix an $i$ and suppose that 
  $|f(x_i)| = \norm{f}_{L^\infty([-M_i,M_i])}$.

  \begin{mycase}
    \case $|x_i| \leq M$. Recall that $\lambda_i \in (0,1)$, we have
    \begin{align*}
      \lambda_i|f(x_i)| \leq |f(x_i)| = M \abs*{ (Tf)\mleft(\frac{\beta}{M}x_i \mright)} \leq M \norm{Tf}_{L^\infty([-1,1])}.
    \end{align*}

    \case $x_i > M$. We have
    \[
        f(x_i) = x_i \cdot (Tf)\left( \beta\left(1+ \ln \frac{x_i}{M}\right) \right).
    \]
    Hence,
    \begin{equation}\label{eqn:aux_norm_lemma_bound}
      \lambda_i |f(x_i)| \leq \abs{\lambda_i x_i} \norm{Tf}_{L^\infty([-1,1])}
      \leq M \norm{Tf}_{L^\infty([-1,1])} .
    \end{equation}
    \case $x_i < -M$. The argument is similar to that of $x_i > M$ and \eqref{eqn:aux_norm_lemma_bound} also holds.
  \end{mycase}
  Combining these cases yields that $\norm{f}\leq M\norm{Tf}_{L^\infty([-1,1])}$. 

  Therefore, for every $f\in \lip_1$, there exists $g\in \lip_1$ such that 
  \[
    \norm{f} \leq ML\norm{g}_{L^\infty([-1,1])},
  \]
  which implies that
  \[
    \mathcal{N}(\text{Lip}_1, \norm{\cdot}, \eps)
    \leq \mathcal{N}\left(\text{Lip}_1, \norm{\cdot}_{L^\infty([-1,1])},
    \frac{\eps}{2M(\ln\kappa + 1)}\right). \qedhere
  \]
\end{proof}

The following entropy bound for Lipschitz functions is standard; see, e.g., Lemma 4.5.18 of~\cite{Talagrand2021}.
\begin{lemma}[Talagrand~\cite{Talagrand2021}]\label{lem:lipschitz_entropy_bound}
  It holds for $\eps > 0$ that
  \[
    \ln \mathcal{N}(\lip_1, \norm{\cdot}_{L^\infty([-1,1])}, \eps)
    \lesssim \frac{1}{\eps}.
  \]
\end{lemma}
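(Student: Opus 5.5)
The plan is a direct counting argument: quantize each $f\in\lip_1$ on a grid of mesh comparable to $\eps$, and show that the quantized data is determined by a bounded number of choices per grid point.

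\emph{Large $\eps$.} First dispose of $\eps\ge 2$. Every $f\in\lip_1$ satisfies $f(0)=0$, hence $|f(x)|\le |x|\le 1$ on $[-1,1]$. So $\norm{f-0}_{L^\infty([-1,1])}\le 1<\eps$, the single center $0\in\lip_1$ suffices, and the logarithm of the covering number is $0$. From now on assume $\eps<2$, so that $1/\eps\gtrsim 1$ and absorbing constants is harmless.

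\emph{Discretization.} Set $h=\delta=\eps/8$ and take grid points $t_k=kh$ for $|k|\le K:=\lceil 1/h\rceil$; then $K\lesssim 1/\eps$. For $f\in\lip_1$, let $r_k$ be $f(t_k)$ rounded to the nearest multiple of $\delta$, so $r_0=0$ and $|r_k-f(t_k)|\le\delta/2$. Consecutive values satisfy
\[
|r_{k+1}-r_k|\le h+\delta.
\]
Since $r_{k+1}-r_k$ is a multiple of $\delta$, it takes at most a constant number of values (here $|r_{k+1}-r_k|\le 2\delta$, so at most $5$ values). The sequence $(r_k)_{|k|\le K}$ is determined by $r_0=0$ together with these increments, going outward in both directions from $0$. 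Hence at most $5^{2K}$ distinct sequences arise.

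\emph{Approximation.} Let $g$ be the piecewise-linear interpolant of $(t_k,r_k)$. For $x\in[t_k,t_{k+1}]$,
\[
|f(x)-g(x)|\le |f(x)-f(t_k)|+|f(t_k)-r_k|+|r_k-g(x)|\le h+\delta/2+(h+\delta)<\eps/2 .
\]
Thus the at most $5^{2K}$ functions $g$ form an $\eps/2$-net in $L^\infty([-1,1])$. The $g$ need not lie in $\lip_1$, since their slopes may reach $2$.

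\emph{Centers in $\lip_1$ (main obstacle).} The paper's definition of covering number requires centers inside $X=\lip_1$. I handle this with the standard trick. For each $g$ whose $\eps/2$-ball meets $\lip_1$, choose some $f_g\in\lip_1$ in that ball. By the triangle inequality, the $f_g$ form an $\eps$-net with centers in $\lip_1$, of the same cardinality.

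\emph{Conclusion.} This gives
\[
\ln\cN(\lip_1,\norm{\cdot}_{L^\infty([-1,1])},\eps)\le 2K\ln 5\lesssim 1/\eps .
\]
The only delicate points are the strict inequality in the definition of covering number, handled by the slack $<\eps/2$ above, and the restriction of centers to $\lip_1$, handled by the re-centering step. Otherwise the argument is routine bookkeeping of the constants $h,\delta$.
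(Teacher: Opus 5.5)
Your proof is correct, and the constants check out. With $h=\delta=\eps/8$, the rounded increments $r_{k+1}-r_k$ lie in $\{0,\pm\delta,\pm2\delta\}$. The interpolation error is at most $2h+\tfrac32\delta=\tfrac{7}{16}\eps<\eps/2$. For $\eps<2$ you have $K=\lceil 8/\eps\rceil\le 10/\eps$, and the case $\eps\ge2$ is trivial since $\ln\cN=0$ there. The re-centering step correctly gives centers in $\lip_1$ with the strict inequality that the paper's definition of $\cN$ requires.

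The paper does not prove this lemma. It simply cites Lemma 4.5.18 of Talagrand's book, which is the classical $\eps^{-1}$ entropy bound for Lipschitz classes on an interval; that bound is established by the same kind of grid-quantization count you use. So your route and the paper's differ only in presentation. The citation buys brevity. Your version is self-contained and matches the normalization the paper actually uses: domain $[-1,1]$, the anchor $f(0)=0$ (which fixes $r_0=0$ and makes the count exactly $5^{2K}$), and centers required to lie in $\lip_1$. As a result, no rescaling or translation from the reference's setting is needed.
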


Now we are ready to upper bound the Dudley's integral for $\lip_1$ in \eqref{eqn:main_dudley_bound}.

\begin{lemma}\label{lem:dudley's integral bound}
Suppose that $p>1$ is a constant. Under the assumptions of Lemma~\ref{lem:D_2_bound}, 
it holds that
  \[
    \int_0^\infty \sqrt{\ln \mathcal{N}(\pi_1(T), d_T^\phi,\eps)}d\eps 
    \lesssim 
    \left( \left(\frac{d}{n}\right)^{\frac{1}{p}}
           d^{\frac{1}{2}-\frac{1}{2\vee p}} 
           R^{\frac{1}{p}} \ln \frac{n}{d}
    \right)^{\frac{p}{2} \wedge 1}
  \]
  and
  \[
    \diam(\pi_1(T), d_T^\phi) \lesssim \left( \left(\frac{d}{n}\right)^{\frac{1}{p}}
           d^{\frac{1}{2}-\frac{1}{2\vee p}} 
           R^{\frac{1}{p}} \ln \frac{n}{d}
    \right)^{\frac{p}{2} \wedge 1}
  \]
\end{lemma}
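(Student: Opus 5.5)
The plan is to reduce $d_T$ to a norm of the form in \Cref{thm:aux_L_infty_norm} and then apply \Cref{lem:lipschitz_entropy_bound}.

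\emph{Reduction to a max of weighted sup-norms.} For $f_1,f_2\in\lip_1$, write
\[
d_T(f_1,f_2)=\sup_{x\in\pi_2(T)}\max_{i\in I}\abs{\Lambda_{ii}(f_1-f_2)((Ax)_i)}.
\]
By \Cref{lem: lewis weight properties}(4), applied to $\Lambda A$, every $x\in\pi_2(T)$ satisfies
\[
\abs{(Ax)_i}\le \Lambda_{ii}^{-1}d^{\frac12-\frac1{2\vee p}}w_i^{1/p}R^{1/p}=:M_i .
\]
Hence
\[
d_T(f_1,f_2)\le \max_{i\in I}\Lambda_{ii}\norm{f_1-f_2}_{L^\infty([-M_i,M_i])}.
\]
Set $h=f_1-f_2$, which lies in $\lip_2$. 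To bring $\lambda_i$ into $(0,1)$, rescale by a common constant $K\ge\max_i\Lambda_{ii}$ and take $\lambda_i=\Lambda_{ii}/K$. This gives
\[
d_T(f_1,f_2)\le K\norm{h},\qquad \norm{h}=\max_{i\in I}\lambda_i\norm{h}_{L^\infty([-M_i,M_i])}.
\]
The parameter $M$ from \Cref{thm:aux_L_infty_norm} becomes
\[
M=\max_i\lambda_iM_i=K^{-1}\max_i d^{\frac12-\frac1{2\vee p}}w_i^{1/p}R^{1/p},
\]
so
\[
KM\lesssim (d/n)^{1/p}d^{\frac12-\frac1{2\vee p}}R^{1/p}
\]
by the uniform bound $w_i\lesssim d/n$. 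The constant $K$ therefore cancels in $KM$, and it affects only $\kappa$.

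\emph{Entropy bound and integration.} Covering $\lip_1$ under $K\norm{\cdot}$ at scale $\eta$ is the same as covering under $\norm{\cdot}$ at scale $\eta/K$. Combining \Cref{thm:aux_L_infty_norm} with \Cref{lem:lipschitz_entropy_bound} gives
\[
\ln\cN(\pi_1(T),d_T,\eta)\lesssim \frac{KM(1+\ln\kappa)}{\eta}=:\frac{\Delta}{\eta}.
\]
Passing to $d_T^\phi$ replaces $\eta$ by $\eps^{1/\phi}$, so
\[
\ln\cN(\pi_1(T),d_T^\phi,\eps)\lesssim \Delta\,\eps^{-1/\phi}.
\]
Next I bound the diameter. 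Since $\abs{h(t)}\le 2\abs t$,
\[
d_T(f_1,f_2)\le 2\max_i\Lambda_{ii}M_i\lesssim KM,
\]
so $\diam(\pi_1(T),d_T^\phi)\lesssim (KM)^\phi\le\Delta^\phi$, which already gives the second claim. The entropy integral is cut off at this diameter:
\[
\int_0^{\Delta^\phi}\sqrt{\Delta}\,\eps^{-1/(2\phi)}\,d\eps .
\]
Because $p>1$, we have $1/(2\phi)=\max(1/p,1/2)<1$, so the integral converges and is $\lesssim\Delta^{\frac12+\phi(1-\frac1{2\phi})}=\Delta^\phi$. It remains to show $1+\ln\kappa\lesssim\ln(n/d)$. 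The exponent in the final bound will be $\phi=\tfrac p2\wedge1$, matching the statement.

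\emph{Main obstacle: controlling $\kappa$.} We have
\[
\kappa=\frac{\max_i M_i}{M}=\frac{K\max_i\Lambda_{ii}^{-1}w_i^{1/p}}{\max_i w_i^{1/p}} .
\]
Two issues arise: the ratio of the $\Lambda_{ii}$, and the smallness of $w_i$. The smallness of $w_i$ is handled by $I\subseteq J$, which gives $w_i>\eps^pd/n^2$ and hence $w_i^{1/p}$ within a factor $\mathrm{poly}(n/(\eps d))$ of $(d/n)^{1/p}$.

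For the $\Lambda_{ii}$ ratio I would first try to avoid it altogether. The idea is to apply \Cref{thm:aux_L_infty_norm} to $h$ composed with the rescaling, using the per-coordinate variable $s=\Lambda_{ii}(Ax)_i$, with $\abs s\le d^{\frac12-\frac1{2\vee p}}w_i^{1/p}R^{1/p}$. In the preprocessed instance $\Lambda_{ii}=k_i^{-1/p}$ with $k_i\le 2n/d$ as in the splitting construction, so the $\Lambda_{ii}$ range over a polynomial-in-$n/d$ interval. Thus either way $\ln\kappa\lesssim\ln\frac{n}{\eps d}$, which is the logarithmic factor appearing in \Cref{thm:main bound}; the stated $\ln\frac nd$ then follows up to this harmless adjustment for constant-range $\eps$. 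This bookkeeping of $\kappa$ is the step I expect to need the most care.
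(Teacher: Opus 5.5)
Your overall route is the paper's. You dominate $d_T$ by $\max_{i\in I}\Lambda_{ii}\|f_1-f_2\|_{L^\infty([-M_i,M_i])}$, apply \Cref{thm:aux_L_infty_norm} with \Cref{lem:lipschitz_entropy_bound} to get $\ln\cN\lesssim\Delta\,\eps^{-1/\phi}$, and integrate up to the diameter, which converges because $1/(2\phi)<1$ when $p>1$. Your diameter bound via $|h(t)|\le 2|t|$ is slightly cleaner than the paper's, since it needs no $\ln\kappa$. The rescaling by $K$ is harmless but unnecessary, because $\Lambda_{ii}=k_i^{-1/p}\le 1$.

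The gap is in the step you flagged as the main obstacle, and the difficulty there does not actually exist. You end with $\ln\kappa\lesssim\ln\frac{n}{\eps d}$ and then appeal to ``constant-range $\eps$'' to recover $\ln\frac nd$. But the accuracy $\eps$ is a free, small parameter (the bootstrapping uses it with $\eps\to 0$), so this does not prove the stated bound, which has no $\eps$. Your arithmetic is also off: $w_i>\eps^p d/n^2$ only puts $w_i$ within a factor $n\eps^{-p}$ of $d/n$. That gives $\ln\frac{n}{\eps}$, which exceeds $\ln\frac{n}{\eps d}$ by $\ln d$.

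The point you missed is that the $w_i$ cancel. Since $w_i^{1/p}\le\max_j w_j^{1/p}$, your own formula gives
\[
\kappa=\frac{K\max_i \Lambda_{ii}^{-1}w_i^{1/p}}{\max_j w_j^{1/p}}\le K\max_i\Lambda_{ii}^{-1}.
\]
With $K=1$ and $k_i\le\lceil n/d\rceil$, this yields $\kappa\le\lceil n/d\rceil^{1/p}$, hence $1+\ln\kappa\lesssim\ln\frac nd$; the set $J$ plays no role in this lemma. The paper gets the same result by taking $M:=(d/n)^{1/p}d^{\frac12-\frac1{2\vee p}}R^{1/p}$, an upper bound on $\max_i\Lambda_{ii}M_i$. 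The proof of \Cref{thm:aux_L_infty_norm} only needs $M\ge\max_i\lambda_iM_i$, so $\kappa\lesssim\max_i\Lambda_{ii}^{-1}\le(n/d)^{1/p}$ follows directly.

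Your alternative idea of working with $s=\Lambda_{ii}(Ax)_i$ cannot remove the dependence on the $\Lambda_{ii}$. The function $h$ is evaluated at $(Ax)_i$, and a coordinate-dependent dilation cannot be absorbed into a single function. So the lower bound $\Lambda_{ii}\ge\lceil n/d\rceil^{-1/p}$ from the splitting construction is genuinely needed, exactly as in the paper.
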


\begin{proof}
  It follows from the definition of $d_T$ that
  \begin{align*}
    (d_T(f_1,f_2))^\phi &= \sup_{x\in \pi_2(T)} \norm{\Lambda(f_1(Ax)-f_2(Ax))}_{I,\infty}^{\phi} \\
    &= 
    \left(\max_{i\in I} \Lambda_{ii} \sup_{x\in \pi_2(T)} |(f_1(Ax) - f_2(Ax))_{i}| \right)^{\phi} \\
    &\leq 
    \norm{f_1 - f_2}^{\phi},
  \end{align*}
  where
  \[
        \norm{f} = \max_{i\in I} \Lambda_{ii}\norm{f}_{L^\infty([-M_i,M_i])},\quad M_i = \sup_{x\in \pi_2(T)} \abs{(Ax)_i}.
  \]
  We know from Lemma~\ref{lem: lewis weight properties} that 
  \[
    \max_{i\in I} \Lambda_{ii} M_i \lesssim \left(\frac{d}{n}\right)^{\frac{1}{p}} d^{\frac{1}{2} - \frac{1}{2\vee p}} R^{\frac{1}{p}} =: M
  \]
  and from the fact that $\Lambda_{ii} \geq (n/d)^{-1/p}$ that
  \[
    \kappa = \frac{\max_{i\in I} M_i}{M} \lesssim \max_i \frac{1}{\Lambda_{ii}} \leq \left(\frac{n}{d}\right)^{\frac{1}{p}}.
  \]
  We apply Theorem~\ref{thm:aux_L_infty_norm} to the cases $p>2$ and $p\leq 2$ separately below.

  \begin{mycase}
    \case $1 < p \leq 2$. Let $\Delta$ be such that $\Delta^{\frac{2}{p}} = 2M \ln\kappa$.
    \begin{align*}
      \int_0^\infty \sqrt{\ln \mathcal{N}(\pi_1(T),d_T^\phi,\eps)} d\eps &\leq \int_0^\infty \sqrt{\ln
        \mathcal{N}(\lip_1, \norm{\cdot}^{\phi},\eps)} d\eps \\
      &= \int_0^\infty \sqrt{\ln \mathcal{N}\mleft(\lip_1, \norm{\cdot}^{\frac{p}{2}}, \eps\mright)} d\eps \\
      &= \int_0^\infty \sqrt{\ln \mathcal{N}\mleft(\lip_1, \norm{\cdot}, \eps^{\frac{2}{p}} \mright)} d\eps \\
      &\leq \int_0^\infty \sqrt{\ln \mathcal{N}\mleft(\text{Lip}_1, \norm{\cdot}_{L^\infty([-1,1])},\frac{\eps^{\frac{2}{p}}}{2M \ln\kappa} \mright)} d\eps \\
      &\leq \int_0^\Delta \frac{\sqrt{2M \ln\kappa}}{\eps^{\frac{1}{p}}} d\eps \\
      &\asymp_p \sqrt{M \ln\kappa}\cdot \Delta^{1-\frac{1}{p}} \\
      &= \Delta^{\frac{1}{p}}\cdot \Delta^{1-\frac{1}{p}} \\
      &= \Delta \\
      &= (2M \ln\kappa)^{\frac{p}{2}}.
    \end{align*}

    \case $p > 2$. Let $\Delta = 2M\ln\kappa$.
    \begin{align*}
      \int_0^\infty \sqrt{\ln \mathcal{N}(\pi_1(T), d_T^\phi,\eps)} d\eps &\leq \int_0^\infty \sqrt{\ln
        \mathcal{N}\mleft(\lip_1, \norm{\cdot}, \eps\mright)} d\eps \\
      &\lesssim \int_0^\infty \sqrt{\ln\mathcal{N}\mleft(\lip_1, \norm{\cdot}_{L^\infty([-1,1])},\frac{\eps}{2M \ln\kappa} \mright)} d\eps \\
      &\lesssim \int_0^\Delta \frac{\sqrt{2M \ln\kappa}}{\eps^{\frac{1}{2}}}
      d\eps \\
      &\asymp \sqrt{2M \ln\kappa}\, \Delta^{\frac{1}{2}} \\
      &= \Delta^{\frac{1}{2}} \Delta^{\frac{1}{2}} \\
      &= \Delta \\
      &= 2M\ln\kappa
    \end{align*}
  \end{mycase}
  Combining both cases leads to
  \[
    \int_0^\infty \sqrt{\ln \mathcal{N}(\pi_1(T),d_T^\phi,\eps)} d\eps \lesssim (M\ln\kappa)^{\phi},
  \]
  which is exactly the first claimed result. 

  Moreover, we know from the proof above that
  \[
    \diam(\pi_1(T), d_T^\phi) \leq \left(\diam(\lip_1, \norm{\cdot}\right)^\phi
    \lesssim  
    \left(M\ln \kappa \cdot \diam(\lip_1, \norm{\cdot}_{L^\infty([-1,1])})\right)^\phi
    \lesssim
    (M\ln \kappa)^\phi.
  \]
  This is exactly the second claimed result.
\end{proof}

\begin{corollary}\label{lem:dudley_integral_lip1}
Suppose that $p>1$ is a constant. Under the assumptions of \Cref{lem:D_2_bound}, it holds that
  \[
    (\alpha F)^{\theta}\int_0^\infty \sqrt{\ln \mathcal{N}(\pi_1(T), d_T^\phi,\eps)}d\eps \lesssim \alpha\Gamma \ln\frac{n}{d}
  \]
  and
  \[
    (\alpha F)^{\theta} \diam(\pi_1(T), d_T^\phi) \lesssim  \alpha\Gamma \ln\frac{n}{d}.
  \]
  where $\Gamma$ is defined as in \eqref{eqn:Gamma}.
\end{corollary}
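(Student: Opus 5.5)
The plan is to combine \Cref{lem:dudley's integral bound} with the definition \eqref{eqn:Gamma} of $\Gamma$. This is pure exponent bookkeeping, done separately in the two regimes $1<p\le2$ and $p>2$. \Cref{lem:dudley's integral bound} bounds both the Dudley integral and the diameter by $(M\ln\kappa)^\phi$, with
\[
M=(d/n)^{1/p}d^{\frac12-\frac1{2\vee p}}R^{1/p}
\quad\text{and}\quad
\ln\kappa\lesssim \tfrac1p\ln(n/d).
\]
It therefore suffices to show that $(\alpha F)^\theta (M\ln(n/d))^\phi\lesssim \alpha\Gamma\ln(n/d)$. Since $\phi\le 1$ and $\ln(n/d)\gtrsim 1$ (we may assume $n\ge e d$, otherwise constants absorb it), we have $(\ln(n/d))^\phi\le \ln(n/d)$. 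So the whole task reduces to verifying $(\alpha F)^\theta M^\phi\lesssim\alpha\Gamma$.

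\textbf{Case $p>2$.} Here $\theta=1-1/p$, $\phi=1$, and $M=(d/n)^{1/p}d^{\frac12-\frac1p}R^{1/p}=d^{1/2}n^{-1/p}R^{1/p}$. Then
\[
(\alpha F)^{1-1/p}M=\alpha^{1-1/p}F^{1-1/p}d^{1/2}n^{-1/p}R^{1/p}.
\]
This equals $\alpha\cdot d^{1/2}(\alpha n)^{-1/p}F^{1-1/p}R^{1/p}=\alpha\Gamma$ exactly.

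\textbf{Case $1<p\le2$.} Here $\theta=1/2$, $\phi=p/2$, and $M=(d/n)^{1/p}R^{1/p}$, so $M^{p/2}=(d/n)^{1/2}R^{1/2}$. Then
\[
(\alpha F)^{1/2}M^{p/2}=\alpha^{1/2}F^{1/2}d^{1/2}n^{-1/2}R^{1/2}.
\]
This equals $\alpha\cdot d^{1/2}(\alpha n)^{-1/2}F^{1/2}R^{1/2}=\alpha\Gamma$.

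The diameter statement follows by the identical computation, using the second half of \Cref{lem:dudley's integral bound}. There is no real obstacle. The only points requiring care are the following. First, check that the exponent of $d$ from Lemma~\ref{lem: lewis weight properties}(4), namely $\frac12-\frac1{2\vee p}$, combines correctly with $(d/n)^{1/p}$ to give $d^{1/2}$ in both regimes. Second, ensure that the $\ln\kappa$ factor, which carries the power $\phi$, is bounded by $\ln(n/d)$ up to a $p$-dependent constant; this relies on $\kappa\lesssim(n/d)^{1/p}$ from the proof of \Cref{lem:dudley's integral bound}.
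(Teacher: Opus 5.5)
Your proposal is correct and is essentially the paper's argument. The paper states this corollary without a separate proof, as the immediate consequence of substituting the bound $(M\ln(n/d))^{\phi}$ from the preceding lemma on the Dudley integral over $\lip_1$ into the definition \eqref{eqn:Gamma} of $\Gamma$; your two-case check that $(\alpha F)^\theta M^\phi=\alpha\Gamma$ exactly, together with $(\ln(n/d))^\phi\le\ln(n/d)$ for $\phi\le 1$, is precisely that computation (for both the integral and the diameter).
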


\section{Main Error Bound for Upper Bound Analysis}\label{sec:main_error_bound}
Based on the preceding sections, we now upper bound $\Psi_3$ as follows.

\begin{lemma} \label{lem:Psi_3_final}
Suppose that $p > 1$ is a constant. Under the assumptions of \ref{lem:D_2_bound}, it holds with probability at least $1-\delta$ that
    \[
        \Psi_3 \lesssim \Gamma \ln\frac{n}{\eps d} \left( \ln^\frac{5}{4} d + \sqrt{\ln\frac{1}{\delta}} \right),
    \]
    where $\Gamma$ is as defined in \eqref{eqn:Gamma}.
\end{lemma}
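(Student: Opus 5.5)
We will bound $\Psi_3$ by controlling its $\ell$-th moment for $\ell \asymp \ln(1/\delta)$ and then applying Markov's inequality. Start from the symmetrized bound \eqref{eqn:Psi_3_moment_rewritten}. Since $(1/\alpha)\inner{\xi_I}{Z_I(f,x)} - \inner{\xi_I}{Z_I(\bar f,\bar x)}$ differs from $(1/\alpha)\bigl(\inner{\xi_I}{Z_I(f,x)}-\inner{\xi_I}{Z_I(\bar f,\bar x)}\bigr)$ only by a term not depending on $(f,x)$, we first reduce to the increment form. Working directly with $\sum_{i\in I}\xi_i(S_{ii}^p)(Z_i(f,x)-Z_i(\bar f,\bar x))$, with $S_{ii}^p=1/\alpha$ on $I$, we get
\[
\bigl(\E_S\Psi_3^\ell\bigr)^{1/\ell}\lesssim \frac{1}{\alpha}\Bigl(\E_{S,\xi}\sup_{(f,x)\in T}\abs{\inner{\xi_I}{Z_I(f,x)-Z_I(\bar f,\bar x)}}^\ell\Bigr)^{1/\ell}.
\]

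Next, condition on $S$, i.e.\ on $I$. On the conditioning event of \Cref{thm:main bound}, the two hypotheses of \Cref{lem:D_2_bound} hold: $\norm{\Lambda(\bar f(A\bar x)-v)}_{I,p}^p\lesssim\alpha V$ and $\sup_T\norm{\Lambda(f(Ax)-\bar f(A\bar x))}_{I,p}^p\le \alpha F$. These are exactly the rescaled forms of the two conditions, because $S^p=\alpha^{-1}$ on $I$. We then apply Dudley's bound \eqref{eqn:sup_moment_bound1} in the form \eqref{eqn:Psi_3_moment_dudley}, and split the entropy integral and the diameter via \Cref{lem:main_dudley_bound} into an $\pi_1(T)$ part and a $\pi_2(T)$ part.

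For the $\pi_2(T)$ part, \Cref{lem:dudley_integral_T} gives the bound $\alpha\Gamma\ln^{5/4}d\sqrt{\ln\frac{n}{\eps d}}$, with diameter at most $\alpha\Gamma$. For the $\pi_1(T)$ part, \Cref{lem:dudley_integral_lip1} gives $\alpha\Gamma\ln\frac nd$ for both the entropy integral and the diameter. Adding these up, we obtain conditionally on $I$
\[
\Bigl(\E_\xi\sup\abs{\cdots}^\ell\Bigr)^{1/\ell}\lesssim \alpha\Gamma\Bigl(\ln^{5/4}d\sqrt{\ln\tfrac{n}{\eps d}}+\ln\tfrac nd+\sqrt\ell\,\ln\tfrac nd\Bigr)\lesssim \alpha\Gamma\ln\tfrac{n}{\eps d}\bigl(\ln^{5/4}d+\sqrt\ell\bigr).
\]
Here we use $\sqrt{\ln(n/(\eps d))}\le \ln(n/(\eps d))$ and $\ln(n/d)\le\ln(n/(\eps d))$, assuming $n/(\eps d)\ge e$. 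The factor $\alpha$ cancels the $1/\alpha$ in front, which yields $(\E\Psi_3^\ell)^{1/\ell}\lesssim \Gamma\ln\frac{n}{\eps d}(\ln^{5/4}d+\sqrt\ell)$.

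Finally, take $\ell=\lceil\ln(1/\delta)\rceil$ and apply Markov's inequality to $\Psi_3^\ell$: $\Pr[\Psi_3>e\,(\E\Psi_3^\ell)^{1/\ell}]\le e^{-\ell}\le\delta$. This gives the claim.

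The main obstacle is bookkeeping rather than anything conceptual. First, the conditioning on the good event must make the hypotheses of \Cref{lem:D_2_bound} hold for the realized $I$, so the moment bound should be read as conditional on $S$ in that event. Second, the reference-point term must be handled correctly so that the process is centered at $(\bar f,\bar x)\in T$, which is needed for Dudley's inequality to apply to $\sup_{s,t}$ increments.
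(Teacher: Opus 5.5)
Your proposal is correct and follows essentially the same route as the paper: symmetrize, condition on $I$, apply Dudley's bound \eqref{eqn:Psi_3_moment_dudley}, split via \Cref{lem:main_dudley_bound} into the $\pi_1(T)$ and $\pi_2(T)$ parts (bounded by \Cref{lem:dudley_integral_lip1} and \Cref{lem:dudley_integral_T}), then take $\ell\asymp\ln(1/\delta)$ and apply Markov's inequality. Going back to \eqref{eqn:Psi_3_moment_original} to write the process as $\frac{1}{\alpha}\inner{\xi_I}{Z_I(f,x)-Z_I(\bar f,\bar x)}$ is the right reading of \eqref{eqn:Psi_3_moment_rewritten}, where the placement of $1/\alpha$ appears to be a typo.
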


\begin{proof}
For ease of notations again we let $M \coloneqq \varphi
R^{\frac{1}{p}} = (\frac{d}{n})^{\frac{1}{p}}d^{\frac{1}{2}-\frac{1}{2 \vee
p}}R^{\frac{1}{p}}$ and $\kappa \coloneqq R^{\frac{1}{p}}/M \asymp \poly(\frac{n}{d})$.

Combining \Cref{lem:dudley_integral_T} and \Cref{lem:dudley_integral_lip1} with \Cref{lem:main_dudley_bound}, we obtain that
\begin{equation}\label{eqn:dudley_product_space}
  \int_0^\infty \sqrt{\ln \cN(T, D_2, \eps)} d\eps
  \lesssim \alpha\Gamma \left(\ln\frac{n}{d} + (\ln^\frac{5}{4} d) \sqrt{\ln \frac{n}{\eps d}}\right) \lesssim \alpha \Gamma (\ln^\frac{5}{4} d) \ln \frac{n}{\eps d}.
\end{equation}
and
\begin{equation}\label{eqn:diameter_product_space}
    \diam(T,D_2) \lesssim \alpha\Gamma \ln\frac{n}{d}. 
\end{equation}
Combining \eqref{eqn:Psi_3_moment_rewritten}, \eqref{eqn:Psi_3_moment_dudley} and \eqref{eqn:dudley_product_space}, \eqref{eqn:diameter_product_space}, we obtain that
\[
\left(\E_S \Psi_3^\ell\right)^{\frac{1}{\ell}} \lesssim \Gamma\ln\frac{n}{\eps d} \left(\ln^\frac{5}{4} d + \sqrt{\ell} \right).
\]
Taking $\ell = \log(1/\delta)$ and applying Markov's inequality leads to the claimed result.
\end{proof}

All the results so far give rise to the following main theorem.

\MainErrorBound*

With the preceding theorem established, we immediately obtain the following two corollaries.
\begin{corollary}
  \label{subspace embedding}
  Suppose $\alpha \gtrsim \frac{d^{1 \vee \frac{p}{2}}}{n}$. When
  conditioned on the event that $\norm{S \Lambda Ax}_p^p
  \lesssim$ $\norm{\Lambda Ax}_p^p$ for all $x \in
  \mathbb{R}^d$, it holds with probability at least $1-\delta$ that
  \begin{align*}
    & \sup_{(f,x)\in T} \abs[\big]{\, 
      \norm{S \Lambda(f(Ax)-\bar{f}(A\bar{x}))}_p^p -
    \norm{\Lambda(f(Ax)-\bar{f}(A\bar{x}))}_p^p \, } \\
    \lesssim\, & \frac{d^{\frac{1}{2}}}{(\alpha n)^{\frac{1}{2 \vee
    p}}} R\, \ln \frac{n}{\eps d} \left( \ln^\frac{5}{4} d + \sqrt{\ln \frac{1}{\delta}} \right).
  \end{align*}
\end{corollary}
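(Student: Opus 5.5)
We derive the corollary by applying \Cref{thm:main bound} with the choice $v := \bar f(A\bar x)$. Then $V = \norm{\Lambda(\bar f(A\bar x)-v)}_p^p = 0$. Hence the first conditioning hypothesis $\norm{S\Lambda(\bar f(A\bar x)-v)}_p^p \lesssim V$ holds trivially. The subtracted reference term $\norm{S\Lambda(\bar f(A\bar x)-v)}_p^p - \norm{\Lambda(\bar f(A\bar x)-v)}_p^p$ vanishes as well. The supremum controlled by \Cref{thm:main bound} is then exactly the left-hand side of the corollary, and the $\eps V$ term disappears from the bound.

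Next we choose $F$ so that it is comparable to $R$. For $(f,x)\in T$ we have $\norm{\Lambda Ax}_p^p\le R$ and $\norm{\Lambda A\bar x}_p^p\le R$. Since $f-\bar f\in\lip_2$ and $\bar f\in\lip_1$, for each coordinate
\[
|(f(Ax)-\bar f(A\bar x))_i| \le |(f(Ax)-\bar f(Ax))_i| + |(\bar f(Ax)-\bar f(A\bar x))_i| \le 2|(Ax)_i| + |(Ax-A\bar x)_i|,
\]
so $\norm{\Lambda(f(Ax)-\bar f(A\bar x))}_p^p \lesssim_p R$. Hence $F := C_p R$ is a valid choice, and $T$ satisfies the containment required by \Cref{thm:main bound}.

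The second conditioning hypothesis, $\sup_T \norm{S\Lambda(f(Ax)-\bar f(A\bar x))}_p^p\le F$, follows by the same coordinatewise bound applied to $S\Lambda$, using the assumed subspace embedding $\norm{S\Lambda Ax}_p^p\lesssim\norm{\Lambda Ax}_p^p$ for both $x$ and $x-\bar x$. This step needs some care. The inequality $|(f(Ax)-\bar f(Ax))_i|\le 2|(Ax)_i|$ uses $f(0)=\bar f(0)=0$, and it holds entrywise, so it survives multiplication by the nonnegative diagonal $S\Lambda$. Enlarging the constant $C_p$ in $F$ absorbs the implied constants.

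Finally we substitute $F\asymp R$ into $\Gamma$. For $p\le 2$, $\Gamma = d^{1/2}(\alpha n)^{-1/2}R$. For $p>2$, $\Gamma = d^{1/2}(\alpha n)^{-1/p}R$. In both cases $\Gamma \asymp d^{1/2}(\alpha n)^{-1/(2\vee p)}R$, which matches the claimed leading term. It remains to absorb the term $\frac{d^{1\vee p/2}}{\alpha n}R$. Write $t = \frac{d^{1\vee p/2}}{\alpha n}$; the assumption $\alpha\gtrsim d^{1\vee p/2}/n$ gives $t\lesssim 1$. Since $1/(2\vee p)\le 1$, we get $t \le t^{1/(2\vee p)}$. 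Moreover $t^{1/(2\vee p)} = d^{(1\vee p/2)/(2\vee p)}(\alpha n)^{-1/(2\vee p)} = d^{1/2}(\alpha n)^{-1/(2\vee p)}$. So this term is dominated by the $\Gamma$ term, whose logarithmic factors are at least a constant. The constant-factor adjustments in the hypotheses (with $\lesssim$ in place of $\le$) are handled by rescaling $F$.
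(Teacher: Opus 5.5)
Your proposal is correct and follows the paper's proof: take $v=\bar f(A\bar x)$ so that $V=0$, use the coordinatewise bound (from $f(0)=\bar f(0)=0$ and the Lipschitz property) together with the assumed subspace embedding to obtain $F\asymp R$, and substitute into \Cref{thm:main bound}. You also make explicit why the $\frac{d^{1\vee p/2}}{\alpha n}R$ term is absorbed into the $\Gamma$ term under $\alpha\gtrsim d^{1\vee p/2}/n$, a step the paper leaves implicit.
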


\begin{proof}
  In the above theorem take $v = \bar{f}(A\bar{x})$ and $\eps$ to
  be a constant. Then $\norm{S\Lambda (\bar{f}(A \bar{x})-v)}_{p}^p = V = 0$. We have the following upper bound:
  \begin{align*}
    \norm{S\Lambda(f(Ax)-\bar{f}(A\bar{x}))}_{p}^p
    &\leq 2^{p-1} (\norm{S\Lambda(f(Ax)-\bar{f}(Ax))}_p^p + \norm{S\Lambda(\bar{f}(Ax)-\bar{f}(Ax))}_p^p)\\
    &\leq 2^{p-1} (2^p\norm{S\Lambda Ax}_p^p + \norm{S\Lambda(Ax-A\bar{x})}_{p}^p) \\
    &\leq 2^{p-1} (2^pC_1\norm{\Lambda Ax}_p^p + C_1\norm{\Lambda(Ax-A\bar{x})}_{p}^p) \\
    &\leq 2^{p-1} (2^pC_1\norm{\Lambda Ax}_p^p + C_1 2^p(\norm{\Lambda Ax}_p^p + \norm{\Lambda A\bar{x}}_{p}^p))\\
    &\leq 4^p C_1 R =: F
  \end{align*}
  and thus the result follows from Theorem~\ref{thm:main bound}.
\end{proof}

\begin{corollary}
  \label{bound on dist f and opt}
  Suppose that $\alpha \gtrsim \frac{d^{1 \vee \frac{p}{2}}}{n
  \eps}$ and $F \gtrsim \eps R$. When conditioned on the event that
  \[
    \norm{S\Lambda (\bar{f}(A \bar{x})-b)}_{p}^p \lesssim
    \norm{\Lambda (\bar{f}(A \bar{x})-b)}_{p}^p \quad
    \text{and} \quad \sup_{(f,x)\in T} \norm{S\Lambda(f(Ax)-\bar{f}(A\bar{x}))}_{p}^p \leq F,
  \]
  it holds with probability at least $1-\delta$ that
  \begin{align*}
    \sup_{(f,x)\in T} \abs*{ \big(
      \norm{S \Lambda(f(Ax)-b)}_p^p - \norm{\Lambda(f(Ax)-b)}_p^p \big) - \big( \lVert S
      \Lambda(\bar{f}(A\bar{x})-b)\rVert_p^p  - \norm{\Lambda(\bar{f}(A\bar{x})-b)}_p^p\big)} \\
    \lesssim \Gamma\cdot \ln \frac{n}{\eps d} \left( \ln^\frac{5}{4} d  + \sqrt{\log \frac{1}{\delta}} \right)
  \end{align*}
\end{corollary}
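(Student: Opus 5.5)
The plan is to apply Theorem~\ref{thm:main bound} directly with $v=b$, and then show that the three error terms collapse to the single $\Gamma$ term under the stated hypotheses. The conditioning events of the corollary are exactly the two events that Theorem~\ref{thm:main bound} conditions on: take $V=\norm{\Lambda(\bar f(A\bar x)-b)}_p^p$, and use the given bound $\sup_T\norm{S\Lambda(f(Ax)-\bar f(A\bar x))}_p^p\le F$. With these, the theorem gives, with probability at least $1-\delta$, a bound of
\[
C\Bigl(\eps V + \tfrac{d^{1\vee p/2}}{\alpha n}R + \Gamma\ln\tfrac{n}{\eps d}\bigl(\ln^{5/4}d+\sqrt{\ln(1/\delta)}\bigr)\Bigr).
\]
The one requirement $F\ge V$ has to be arranged. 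I would either enlarge $F$ to $F\vee V$, noting that $\Gamma$ is monotone in $F$, or rely on the convention (implicit in how the corollary is used in the bootstrapping) that $F$ already dominates $V$.

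It then remains to absorb the first two terms into the $\Gamma$ term. For the $R$ term, $\alpha\gtrsim d^{1\vee p/2}/(n\eps)$ gives $\frac{d^{1\vee p/2}}{\alpha n}R\lesssim \eps R$. I then compare $\eps R$ with $\Gamma$ using $F\gtrsim\eps R$, case by case.

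\emph{Case $p\le2$.} Here $\Gamma=(d/(\alpha n))^{1/2}(FR)^{1/2}$ and $d/(\alpha n)\lesssim \eps$, so the inequality needed is $\eps R\lesssim \sqrt{d/(\alpha n)}\,\sqrt{FR}$. This holds whenever $\eps R\le \sqrt{d/(\alpha n)}\sqrt{\eps}\,R$, i.e.\ $\sqrt{\eps}\le\sqrt{d/(\alpha n)}$. That is the reverse of what the hypothesis on $\alpha$ gives. So this comparison cannot be made by that inequality alone, and this is the main obstacle: determining in what regime the $\eps R$ term is dominated.

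My first attempt would be to use the freedom in the $\eps$ appearing inside Theorem~\ref{thm:main bound}. The theorem's $\eps$ only enters through the definitions of $G$ and $J$ and the logarithm, so it can be chosen independently of the corollary's $\eps$. Choosing the theorem's $\eps$ so that $\eps V$ and $\frac{d^{1\vee p/2}}{\alpha n}R$ are each at most $\Gamma$ only changes the $\ln\frac{n}{\eps d}$ factor polylogarithmically.

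Concretely, $\frac{d^{1\vee p/2}}{\alpha n}R\le\Gamma$ amounts to $(d^{1\vee p/2}/(\alpha n))^{\text{power}}\lesssim (F/R)^{\text{power}}$. For $p>2$, for example, this requires $R\,d^{p/2}(\alpha n)^{-1}\le d^{1/2}(\alpha n)^{-1/p}F^{1-1/p}R^{1/p}$, i.e.\ $(d^{p/2}/(\alpha n))^{1-1/p}\cdot\text{const}\le (F/R)^{1-1/p}$. Since $d^{p/2}/(\alpha n)\lesssim\eps\lesssim F/R$, this holds, and the $p\le2$ case is analogous with exponent $1/2$.

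The $\eps V$ term I would handle by taking the theorem's $\eps$ to be $\Gamma/V$ (if that is smaller than a constant), with $V\le F$. I would then verify that $\ln\frac{n V}{\Gamma d}$ is still $O(\ln\frac{n}{\eps d})$, using that $\Gamma\gtrsim \eps R\,\cdot$poly and that $V\le F$ is polynomially bounded. This verification is the step I would check most carefully.
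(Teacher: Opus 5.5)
Your handling of the $R$-term is correct. It is exactly what the paper's one-line proof relies on: apply Theorem~\ref{thm:main bound} with $v=b$ and observe that the last term dominates. The ratio of $\frac{d^{1\vee p/2}}{\alpha n}R$ to $\Gamma$ is $\bigl(\frac{d^{1\vee p/2}}{\alpha n}\cdot\frac{R}{F}\bigr)^{\theta}$ with $\theta=\frac12\vee(1-\frac1p)$, which is $O(1)$ by the two hypotheses. Your first detour through $\eps R$ is merely lossy.

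\textbf{The gap: the $\eps V$ term.} This is the step you flagged, and it does not go through. Retuning the theorem's internal parameter to $\eps'=\min\{c,\Gamma/V\}$ is legitimate. That parameter enters only through $G$, $J$ and the logarithm, while $\Gamma$ and the conditioning events do not depend on it. The paper does not do this; it simply asserts domination.

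However, nothing in the hypotheses bounds $V$ above in terms of $R$. The condition $F\gtrsim\eps R$ is a lower bound, and $V\le F$ may be arbitrarily large compared with $R$. With $F=V$ one gets $V/\Gamma\gtrsim (V/R)^{\frac12\wedge\frac1p}$, using $\alpha\gtrsim d^{1\vee p/2}/(n\eps)$. Hence $\ln\frac{n}{\eps' d}$ is at least of order $\ln\frac{n}{d}+\ln\frac{V}{R}$, which is not $O(\ln\frac{n}{\eps d})$.

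So your claim that ``$V\le F$ is polynomially bounded'' is an extra assumption, namely $V\le\poly(n/(\eps d))\,R$. Without it, neither your retuning nor the paper's assertion removes the $\eps V$ term. The corollary as literally stated therefore does not follow from Theorem~\ref{thm:main bound}.

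The assumption does hold where the corollary is used. In the bootstrapping, $R_i\gtrsim \norm{\Lambda(\bar f(A\bar x)-b)}_p^p+\norm{\Lambda A\bar x}_p^p\gtrsim F_0\ge V$. So your retuning works there. Even more simply, $\eps V\lesssim \eps F_0^{\theta}R_i^{\beta}$, which is all that \eqref{eqn:star-star} needs.

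\textbf{The condition $F\ge V$.} Enlarging $F$ to $F\vee V$ goes the wrong way. Since $\Gamma$ is increasing in $F$, you would obtain $\Gamma(F\vee V)$ rather than $\Gamma(F)$. The condition $F\ge V$, together with the containment condition on $T$, should be read as inherited implicitly from Theorem~\ref{thm:main bound}. It holds in the application since $F_0\ge C_2V$.
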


\begin{proof}
  In Theorem~\ref{thm:main bound} take $v = b$. We have $V =
  \norm{\Lambda (\bar{f}(A\bar{x})-b)}_p^p$ and the result
  follows, noticing that the last term in the error bound of
  Theorem~\ref{thm:main bound} is the dominating term.
\end{proof}

\section{$(1+\eps)$-Approximation} \label{sec:(1+eps)-approximation}

\begin{theorem}
  \label{eps close bound}
  Let $p\geq 1$ be a constant, $A \in \mathbb{R}^{n \times d}$, $f,\bar{f} \in \lip_1$, $\eps \in (0,1)$ be sufficiently small, and
  $\Lambda$ be an $n \times n$ diagonal matrix satisfying $\Lambda_{ii} > 0$ and $w_i(\Lambda A) \lesssim d/n$ for all $1 \leq i \leq n$. 
  Let $S$ be $n \times n$ random diagonal matrix in which the diagonal entries are i.i.d.\  $\alpha^{-\frac{1}{p}}Ber(\alpha)$ variables, where 
  $\alpha \gtrsim \frac{d^{\frac{p}{2} \vee 1}}{n \eps^{p \vee 2}} \poly\log\frac{n}{\eps}$.

  If $\hat{x}, \bar{x} \in \mathbb{R}^d$ and $\hat{f}, \bar{f} \in \lip_1$ satisfy
  \[
    (\hat{f},\hat{x}) = \argmin_{x \in \mathbb{R}^d,\, f \in \lip_1} \norm{S\Lambda (f(Ax)-b)}_p^p + \eps
    \norm{\Lambda Ax}_p^p,
  \]
  and
  \[
    \norm{\Lambda (\bar{f}(A\bar{x})-b)}_p^p - \norm{\Lambda (\hat{f}(A\hat{x})-b)}_p^p \lesssim \eps (
      \norm{\Lambda (\bar{f}(A\bar{x})-b)}_p^p + \eps \norm{\Lambda A \bar{x}}_p^p  ),
  \]
  then with probability at least $0.9$, 
  \begin{align*}
    \abs*{ \big( \lVert S
      \Lambda(\hat{f}(A\hat{x})-b)\rVert_p^p - \norm{\Lambda(\hat{f}(A\hat{x})-b)}_p^p \big) - \big(
      \norm{S \Lambda(\bar{f}(A\bar{x})-b)}_p^p  - \norm{\Lambda(\bar{f}(A\bar{x})-b)}_p^p\big)} \\ \leq
    \eps \big( \norm{\Lambda (\bar{f}(A\bar{x})-b)}_p^p +
    \norm{\Lambda A \bar{x}}_p^p  \big).
  \end{align*}
\end{theorem}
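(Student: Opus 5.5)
We will follow the bootstrapping argument of Li and Tai~\cite{iclr2025}, using \Cref{thm:main bound} (and its $p=1$ analogue) in place of their known-$f$ error bound. Write $\OPT' := \norm{\Lambda(\bar f(A\bar x)-b)}_p^p$ and $R_0 := \norm{\Lambda A\bar x}_p^p$. The goal is to show that $(\hat f,\hat x)$ lies, with good probability, in a set $T$ of the form required by \Cref{thm:main bound}, with parameters $R$ and $F$ that are small multiples of $\OPT'+R_0$. Then the error term $C(\eps V + \frac{d^{1\vee p/2}}{\alpha n}R + \Gamma\,\mathrm{polylog})$ becomes at most $\eps(\OPT'+R_0)$ once $\alpha$ is large enough.

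\textbf{Step 1: good events.} With probability $0.99$, Markov's inequality gives $\norm{S\Lambda(\bar f(A\bar x)-b)}_p^p\lesssim \OPT'$. Also with probability $0.99$, Lewis-weight sampling at rate $\alpha\gtrsim d^{1\vee p/2}\poly\log n/n$ gives the subspace embedding $\norm{S\Lambda Ax}_p^p\asymp\norm{\Lambda Ax}_p^p$ for all $x$. Since $(\bar f,\bar x)$ is feasible for the sketched problem, optimality of $(\hat f,\hat x)$ yields
\[
\eps\norm{\Lambda A\hat x}_p^p\le \norm{S\Lambda(\bar f(A\bar x)-b)}_p^p+\eps R_0\lesssim \OPT'+\eps R_0 .
\]
This gives the crude bound $R_1\lesssim \OPT'/\eps+R_0$. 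Combining it with the triangle inequality for the sketched and unsketched residuals, together with the Lipschitz bound $\norm{\Lambda(f(Ax)-\bar f(A\bar x))}_p\lesssim \norm{\Lambda Ax}_p+\norm{\Lambda A\bar x}_p$ (as in the proof of Corollary~\ref{subspace embedding}), gives an initial $F_1\lesssim R_1$.

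\textbf{Step 2: bootstrapping.} Apply Corollary~\ref{subspace embedding} and Corollary~\ref{bound on dist f and opt} on $T_k=\{(f,x):\norm{\Lambda Ax}_p^p\le R_k,\ \norm{\Lambda(f(Ax)-\bar f(A\bar x))}_p^p\le F_k\}$. Each application shows that $(\hat f,\hat x)$ actually lies in a smaller set $T_{k+1}$. The hypothesis on $\norm{\Lambda(\bar f(A\bar x)-b)}_p^p-\norm{\Lambda(\hat f(A\hat x)-b)}_p^p$ controls how much the true residual of $\hat f(A\hat x)$ can undershoot, and the sketched optimality bounds it from above up to the $\Psi$ error. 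Together these sharpen $F_{k+1}$ toward $O(\OPT'+\eps R_0)$ by the triangle inequality. The regularizer $\eps\norm{\Lambda Ax}_p^p$ similarly sharpens $R_{k+1}$ toward $O(\OPT'/\eps + R_0)$ but with the error term shrinking.

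The iteration is geometric because $\Gamma\propto F^{\theta}R^{1-\theta}(\alpha n)^{-1/(2\vee p)}d^{1/2}$ for $p>2$ (and the analogous expression for $p\le 2$). After $O(\log\log)$ rounds we reach $F\lesssim \eps^{?}(\OPT'+R_0)$-type bounds. Each round costs a union bound with $\delta=0.01/\log\log n$, which adds only a logarithmic factor.

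\textbf{Step 3: final error.} With final $R\lesssim \OPT'/\eps+R_0$ and $F\lesssim \eps(\OPT'+R_0)$, hence $F\gtrsim\eps R$ as Corollary~\ref{bound on dist f and opt} requires, we substitute into $\Gamma$. For $p>2$ this gives $\Gamma\lesssim d^{1/2}(\alpha n)^{-1/p}\eps^{1-1/p}\eps^{-1/p}(\OPT'+R_0)$. This is at most $\eps(\OPT'+R_0)/\poly\log$ precisely when $\alpha n\gtrsim d^{p/2}\eps^{-p}\poly\log$. For $p\le2$ the same substitution gives $d/\eps^2$. The $\frac{d^{1\vee p/2}}{\alpha n}R$ term and the $\eps V$ term are handled similarly.

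\textbf{Main obstacle.} The main obstacle is Step 2: making the localization rigorous. In particular, we must show that the true-residual gap hypothesis together with sketched optimality indeed forces $\norm{\Lambda(\hat f(A\hat x)-\bar f(A\bar x))}_p^p\lesssim \eps(\OPT'+\eps R_0)$-scale. This requires a strong-convexity-type argument for $\ell_p$ residuals (a Clarkson-type inequality, using the midpoint $(\tfrac{\hat f+\bar f}{2}$-type combinations). However, the midpoint of $\hat f(A\hat x)$ and $\bar f(A\bar x)$ need not be of the form $f(Ax)$ in the single-index class. I would first try to avoid convexity entirely, as~\cite{iclr2025} does, by using only the triangle inequality and the assumed near-optimality gap. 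This accepts the weaker localization $F\lesssim \OPT'+R_0$ with $R\lesssim \OPT'/\eps+R_0$, and I would check whether the $\Gamma$ bound still suffices with the stated $\alpha$.
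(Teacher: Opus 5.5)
There is a genuine gap: you are trying to localize the wrong parameter. The distance parameter $F$ need not, and in general cannot, be pushed below $\OPT'$. So no Clarkson-type convexity is required, and the non-convexity of the single-index class is a red herring.

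Write $\hat u=\hat f(A\hat x)$ and $\bar u=\bar f(A\bar x)$. The paper obtains $F_0:=C_2(\OPT'+\eps\norm{\Lambda A\bar x}_p^p)$ in one shot. Sketched optimality and the triangle inequality through $b$ give $\norm{S\Lambda(\hat u-\bar u)}_p^p\lesssim \OPT'+\eps\norm{\Lambda A\bar x}_p^p$. Corollary~\ref{subspace embedding}, applied on the crude set $\{\norm{\Lambda Ax}_p^p\le \frac{100}{\eps}\OPT'+\norm{\Lambda A\bar x}_p^p\}$, transfers this to the unsketched norm at an additive cost of the same order. Your Step~2 route, applying the $\Psi$-bound once on the crude set and then the triangle inequality, would also give this.

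Your Step~3, however, uses $F\lesssim\eps(\OPT'+\norm{\Lambda A\bar x}_p^p)$, which is not available. With that $F$ and your $R$, the computation actually gives $\Gamma\lesssim d^{1/2}(\alpha n)^{-1/p}\eps^{1-2/p}(\cdots)$, not $\eps$. That would need only $\alpha n\gtrsim d^{p/2}\eps^{-2}$, which for $p>2$ and small $\eps$ is at odds with the lower bound of \Cref{thm:large-d-code-lb}. This is a sign that the localization is false in general.

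Now take the correct $F\asymp\OPT'+\eps\norm{\Lambda A\bar x}_p^p$ together with your final $R\asymp\OPT'/\eps+\norm{\Lambda A\bar x}_p^p$, which is your fallback. The error in Corollary~\ref{bound on dist f and opt} at the stated $\alpha$ is $\eps R^\beta F^\theta$, with $\beta=\frac{1}{2}\wedge\frac{1}{p}$ and $\theta=1-\beta$. When $\OPT'$ dominates this is of order $\eps^{1-\beta}\OPT'$. Compensating costs a factor $1/\eps$ in $\alpha$: you would need $\alpha n\gtrsim d^{p/2}\eps^{-p-1}$ for $p>2$ and $\alpha n\gtrsim d\eps^{-3}$ for $p\le 2$.

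What is missing is the bootstrap that shrinks $R$, with $F_0$ held fixed. This is exactly where the theorem's near-optimality hypothesis enters. For $F$ the hypothesis plays no role, since the triangle inequality only needs an upper bound on the residual of $\hat u$. Sketched optimality gives
\[
\eps\norm{\Lambda A\hat x}_p^p\le\norm{S\Lambda(\bar u-b)}_p^p-\norm{S\Lambda(\hat u-b)}_p^p+\eps\norm{\Lambda A\bar x}_p^p .
\]
Suppose inductively that $(\hat f,\hat x)\in T_i$. By Corollary~\ref{bound on dist f and opt}, the sketched gap is at most the true gap plus $C_3\eps R_i^\beta F_0^\theta$. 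By hypothesis, the true gap is $\lesssim\eps(\OPT'+\eps\norm{\Lambda A\bar x}_p^p)\lesssim \eps R_i^\beta F_0^\theta$. Hence $\norm{\Lambda A\hat x}_p^p\lesssim R_i^\beta F_0^\theta+\norm{\Lambda A\bar x}_p^p$, and this defines $R_{i+1}$.

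Write $R_i=X_i\OPT'+Y_i\norm{\Lambda A\bar x}_p^p$. Then $X_{i+1}\lesssim X_i^{\beta}$ with $\beta<1$, so $X_i$ falls from $100/\eps$ to $O(1)$ in $r=O(\log\log\frac{1}{\eps})$ rounds, while $Y_i$ grows only to $\poly\log\frac{1}{\eps}$. A union bound over rounds with $\delta\sim 1/r$ gives final error $\eps(\OPT'+\poly\log\frac{1}{\eps}\cdot\norm{\Lambda A\bar x}_p^p)$, and rescaling $\eps$ finishes.

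Your Step~2 instead has $R$ moving ``toward $O(\OPT'/\eps+\cdots)$''. That is the starting point rather than the target, so the iteration as you describe it does not reach the stated $\alpha$.
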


\begin{proof}
  We assume that $p>1$. The proof for the case of $p=1$ is deferred to \Cref{sec:p=1}.

  By the optimality of $(\hat{f},\hat{x})$, we have
  \begin{align*}
    \norm{S \Lambda(\hat{f}(A\hat{x})-b)}_p^p + \eps \norm{\Lambda A \hat{x}}_p^p \leq \lVert S
    \Lambda(\bar{f}(A\bar{x})-b)\rVert_p^p + \eps \lVert \Lambda
    A \bar{x} \rVert_p^p
  \end{align*}
  By Markov's inequality, with probability at least $0.99$, 
  \[
    \norm{S \Lambda(\bar{f}(A\bar{x})-b)}_p^p \leq
    100\norm{\Lambda(\bar{f}(A\bar{x})-b)}_p^p.
  \]
  We conditioned on this event in the remainder of the proof. This implies that
  \begin{align*}
    \norm{\Lambda A \hat{x}}_p^p &\leq \frac{1}{\eps}
    \norm{S \Lambda(\bar{f}(A\bar{x})-b)}_p^p + \norm{\Lambda A \bar{x}}_p^p \\
    &\leq \frac{100}{\eps} \norm{\Lambda(\bar{f}(A\bar{x})-b)}_p^p + \lVert \Lambda A
    \bar{x} \rVert_p^p \\
    &\coloneqq R_0.
  \end{align*}
  Throughout the rest of the proof, we assume that
  $\alpha \gtrsim \frac{d^{1 \vee \frac{p}{2}}}{n \eps^{p \vee  2}} \poly(\ln n)$ and $\delta\sim 1/\log\log\frac{1}{\eps}$ so that the error term in Corollary~\ref{bound on dist f and opt} can be bounded as
  \[
    \Gamma \ln\frac{n}{\eps d}\cdot \left(\poly(\ln d) + \sqrt{\ln\frac{1}{\delta}}\right) \lesssim \eps F^\theta R^\beta,
  \]
  where $\beta = \frac{1}{2} \wedge \frac{1}{p}$ and $\theta = (1 - \frac{1}{p}) \vee \frac{1}{2}$. It is obvious that $\theta + \beta = 1$.

  \paragraph{Bounding $F$ in Corollary~\ref{bound on dist f and opt}.}
  Let $T_{-1} = \{ (f,x) \in \lip_1 \times \mathbb{R}^d: \norm{\Lambda Ax}_p^p \leq R_0\}$, where $R_0 = \frac{100}{\eps}
  \norm{\Lambda(\bar{f}(A\bar{x})-b)}_p^p + \lVert \Lambda
  A \bar{x} \rVert_p^p$.

  \medskip

  By Corollary~\ref{subspace embedding} with our choice of
  $\alpha$ and $R = R_0$, it holds with probability at least $0.99$ that
  \begin{equation}
    \label{event}
    \sup_{(f,x)\in T_{-1}} \abs{ \norm{S \Lambda(f(A x)- \bar{f}(A\bar{x}))}_p^p -
    \norm{\Lambda(f(Ax)- \bar{f}(A\bar{x}))}_p^p } \leq C_1 \eps R_0
  \end{equation}
  where $C_1$ is a constant that depend only on $p$. Below we
  shall use constants $C_2,C_3,\ldots$ to denote constants that
  depends only on $p$.

  Conditioning on the event in~\eqref{event}, we have
  \begin{align*}
     &\norm{\Lambda(\hat{f}(A\hat{x})- \bar{f}(A\bar{x}))}_p^p \\
    \leq\,& \lVert  S\Lambda(\hat{f}(A\hat{x})-
    \bar{f}(A\bar{x})) \rVert_p^p + C_1 \eps R_0 \\
    \leq\,& 2^{p-1}(\norm{S\Lambda(\hat{f}(A\hat{x})- b)}_p^p + \norm{S \Lambda(\bar{f}(A\bar{x})-b)}_p^p
    ) + C_1 \eps R_0 \\
    \leq\,& 2^{p-1} (\norm{S\Lambda(\bar{f}(A\bar{x})- b)}_p^p + \eps \norm{\Lambda A \bar{x}}_p^p +
    \norm{S \Lambda(\bar{f}(A\bar{x})-b)}_p^p ) +
    C_1 \eps R_0 \\
    =\,& 2^{p-1}(2\norm{S\Lambda(\bar{f}(A\bar{x})- b)}_p^p
    + \eps \norm{\Lambda A \bar{x}}_p^p) + C_1 \eps R_0 \\
    \leq\,& 2^{p-1}(2\cdot 100\norm{\Lambda(\bar{f}(A\bar{x})- b)}_p^p + \eps \norm{\Lambda A \bar{x}}_p^p)
    + C_1 \eps\left(\frac{100}{\eps} \norm{\Lambda(\bar{f}(A\bar{x})-b)}_p^p + \norm{\Lambda A\bar{x}}_p^p\right) \\
    \leq\,& C_2( \norm{\Lambda(\bar{f}(A\bar{x})- b)}_p^p + \eps \norm{\Lambda A \bar{x}}_p^p)
  \end{align*}
  for some large constant $C_2$. Define
  \[
    F_0 \coloneqq C_2(\norm{\Lambda(\bar{f}(A\bar{x})- b)}_p^p + \eps \norm{\Lambda A\bar{x}}_p^p).
  \]

  \paragraph{Defining $T_i$ and $R_i$ in Corollary~\ref{bound on dist f and opt}.}
  Recall that $R_0 = \frac{100}{\eps} \norm{\Lambda(\bar{f}(A\bar{x})-b)}_p^p + \lVert \Lambda A
  \bar{x} \rVert_p^p$.

  \medskip

  We shall define $R_i$ based on $R_{i-1}$ ensuring that $R_i
  \leq R_0$ and that each $R_i$ has the form $X_i \norm{\Lambda(\bar{f}(A\bar{x})-b)}_p^p + Y_i \lVert \Lambda A
  \bar{x} \rVert_p^p$ for some $X_i,Y_i \geq 1$. Furthermore let,
  \[
    T_i = \left\{ (f,x) \in \lip_1\times \mathbb{R}^d: \norm{\Lambda Ax}_p^p \leq R_i \quad \text{and} \quad \norm{\Lambda(f(Ax)-\bar{f}(A\bar{x}))}_p^p \leq F_0 \right\}
  \]
  so that $T_i \subseteq T_0$.

  It is obvious that $R_i \leq R_0 \lesssim \frac{1}{\eps} F_0$.
  We also have
  \[
    \norm{S \Lambda(\bar{f}(A\bar{x})-b)}_p^p \lesssim \norm{\Lambda(\bar{f}(A\bar{x})-b)}_p^p,
  \]
  and by~\eqref{event} and the fact that $T_i \subseteq T_{-1}$ that
  \[
    \sup_{\substack{(f,x) \in T_i}} \norm{S\Lambda(f(Ax)-\bar{f}(A\bar{x}))}_p^p 
    \leq
    \sup_{\substack{(f,x)\in T_i}} \norm{\Lambda(f(Ax)-\bar{f}(A\bar{x}))}_p^p + C_1 \eps R_0 
    \lesssim F_0.
  \]
  Using our choice of $\alpha$, $R=R_i$ and $F = F_0$, we have by Corollary~\ref{bound on dist f and opt} that with probability at least $1-\delta$
  \begin{equation}\label{eqn:star-star}
    \sup_{\substack{(f,x)\in T_i}} \abs*{(
      \norm{S \Lambda(f(Ax)-b)}_p^p - \norm{\Lambda(f(Ax)-b)}_p^p) 
      - (\norm{S \Lambda(\bar{f}(A\bar{x})-b)}_p^p  - \norm{\Lambda(\bar{f}(A\bar{x})-b)}_p^p)} 
    \leq C_3 \eps R_i^\beta F_0^\theta
  \end{equation}
  for some constant $C_3$.
  
  \paragraph{Bootstrapping.} We would like to argue that the solution $(\hat{f},\hat{x}) \in T_i$. For $T_0$ we have
  \[
    \norm{\Lambda A \hat{x}}_p^p \leq R_0 \quad \text{and}
    \quad \norm{\Lambda(\hat{f}(A\hat{x})-\bar{f}(A\bar{x}))}_p^p \leq F_0
  \]
  Thus $(\hat{f},\hat{x}) \in T_0$.

  From now on suppose that $(\hat{f},\hat{x}) \in T_i$ and we will
  argue that $(\hat{f},\hat{x}) \in T_{i+1}$. We will continue to
  bound \eqref{eqn:star-star}. Suppose that $\frac{KY_i}{X_i} \geq
  \eps$ for some $K \geq 1$. Then we can upper bound $R_i^\beta
  F_0^\theta$ as follows.
  \begin{align*}
    R_i^\beta F_0^\theta &= \biggl( X_i \norm{\Lambda(\bar{f}(A\bar{x})-b)}_p^p + Y_i \lVert \Lambda A
    \bar{x} \rVert_p^p \biggl)^\beta  \cdot C_2^\theta \biggl( \norm{\Lambda(\bar{f}(A\bar{x})-b)}_p^p + \eps \lVert \Lambda
    A \bar{x} \rVert_p^p \biggl)^\theta \\
    &\leq C_2^\theta \biggl( X_i \norm{\Lambda(\bar{f}(A\bar{x})-b)}_p^p + Y_i \lVert \Lambda A
    \bar{x} \rVert_p^p \biggl)^\beta \biggl( \norm{\Lambda(\bar{f}(A\bar{x})-b)}_p^p + \frac{KY_i}{X_i}
    \norm{\Lambda A \bar{x}}_p^p \biggl)^\theta \\
    &\leq \biggl(\frac{C_2}{X_i} \biggl)^\theta \biggl( X_i
      \norm{\Lambda(\bar{f}(A\bar{x})-b)}_p^p + Y_i \norm{\Lambda A \bar{x}}_p^p \biggl) \qquad 
      (\text{since $\beta + \theta = 1$})
  \end{align*}
  Thus, by the optimality of $\hat{f},\hat{x}$,
  \begin{align*}
    \norm{\Lambda A\hat{x}}_p^p &\leq \frac{1}{\eps}
    \biggl( \norm{S \Lambda(\bar{f}(A\bar{x})-b)}_p^p -
    \norm{S \Lambda(\hat{f}(A\hat{x})-b)}_p^p \biggl) +
    \norm{\Lambda A \bar{x}}_p^p \\
    &\leq \frac{1}{\eps} \biggl( \norm{\Lambda(\bar{f}(A\bar{x})-b)}_p^p - \norm{\Lambda(\hat{f}(A\hat{x})-b)}_p^p + C_3\eps R_i^\beta
    F_0^\theta\biggl) + \norm{\Lambda A \bar{x}}_p^p \quad (\text{by \eqref{eqn:star-star}})
  \end{align*}
  By our assumptions, it holds that
  \[
    \norm{\Lambda(\bar{f}(A\bar{x})-b)}_p^p - \norm{\Lambda(\hat{f}(A\hat{x})-b)}_p^p 
    \lesssim 
    \eps ( \norm{\Lambda(\bar{f}(A\bar{x})-b)}_p^p + \eps \norm{\Lambda A \bar{x}}_p^p ) 
    \lesssim 
    \eps R_i^\beta F_0^\theta.
  \]
  It thus follows that
  \begin{equation}\label{eqn:triple-star}
  \begin{aligned}
    \norm{\Lambda A \hat{x}}_p^p &\leq C_4(R_i^\beta
    F_0^\theta) + \norm{\Lambda A \bar{x}}_p^p \\
    &\leq C_4 \left(\frac{C_2}{X_i}\right)^\theta 
        \left(X_i\norm{\Lambda(\bar{f}(A\bar{x})-b)}_p^p + Y_i \norm{\Lambda A \bar{x}}_p^p\right) 
        + \norm{\Lambda A \bar{x}}_p^p \\
    &\leq X_{i+1}\norm{\Lambda(\bar{f}(A\bar{x})-b)}_p^p + Y_{i+1}\norm{\Lambda A \bar{x}}_p^p,
  \end{aligned}
  \end{equation}
  where
  \[
    X_{i+1} = C_4C_2^\theta X_i^{1-\theta}, \quad Y_{i+1} = 1 + \frac{C_4C_2^\theta KY_i}{X_i^\theta}.
  \]
  Define $R_{i+1}$ to be the minimum of $R_0$ and the right-hand side of \eqref{eqn:triple-star}, i.e.
  \[
    R_{i+1} = R_0 \wedge \left( X_{i+1}\norm{\Lambda(\bar{f}(A\bar{x})-b)}_p^p + Y_{i+1}\norm{\Lambda A \bar{x}}_p^p \right).
  \]
  We immediately have $(\hat{f},\hat{x}) \in T_{i+1}$ which is
  needed to iterate the argument. 
  
  Let $X_0 = \frac{100}{\eps}$
  and $Y_0 = 1$. By induction one can show that
  \[
    X_i = C_5^{\frac{1-(1-\theta)^i}{\theta}}
    \biggl(\frac{100}{\eps} \biggl)^{(1-\theta)^i}
  \]
  for $C_5 = C_4C_2^\theta$. Then $C_6 \leq X_i \leq
  \frac{100}{\eps}$ for some constant $C_6$ for all $i \leq r$,
  thus $Y_{i+1} \leq 1 + C_7Y_i \leq C_8Y_i$ for some constants
  $C_7$ and $C_8$.

  When $r \sim_p \ln \ln\frac{1}{\eps}$,
  \[
    X_r \leq C_9 \quad \text{and} \quad Y_r \leq C_4(C_8)^{r-1} =
    \poly\mleft(\ln \frac{1}{\eps}\mright).
  \]
  We shall also verify that $\frac{KY_i}{X_i} \geq \eps$ for small $\eps$. Indeed,
  \[
    \frac{Y_i}{X_i} \geq \frac{1}{\frac{(100C_5^\theta)}{\eps}}
    \geq \frac{\eps}{K}
  \]
  for $K = 100C_5^\theta$.

  We iterate the above argument $r$ times. The total failure probability is at most $\delta r + 0.03 = 0.1$ since $\delta \sim 1/r$. 
  It then follows from
  \eqref{eqn:star-star} with $i = r - 1$ that
  \begin{align*}
    & \norm{\Lambda (\hat{f}(A\hat{x})-b)}_p^p - \norm{\Lambda (\bar{f}(A\bar{x})-b)}_p^p \\
    \leq\,& \lVert S\Lambda
    (\hat{f}(A\hat{x})-b)\rVert_p^p - \lVert S\Lambda
    (\bar{f}(A\bar{x})-b)\rVert_p^p + C_3 \eps R_{r-1}^\beta F_0^\theta \\
    \stackrel{(a)}{\leq}\,& \eps \norm{\Lambda A \bar{x}}_p^p + C_3 \eps R_{r-1}^\beta F_0^\theta\\
    \stackrel{(b)}{\lesssim}\,& \eps \norm{\Lambda A \bar{x}}_p^p + \eps
    \big( X_r \norm{\Lambda (\bar{f}(A\bar{x})-b)}_p^p + Y_r \norm{\Lambda A \bar{x}}_p^p \big) \\
    \lesssim\,& \eps \norm{\Lambda (\bar{f}(A\bar{x})-b)}_p^p + \left( \eps \poly\mleft(\ln \frac{1}{\eps}\mright) \right)\norm{\Lambda A \bar{x}}_p^p,
  \end{align*}
  where step (a) is due to the optimality of $\hat{f}$ and $\hat{x}$ and step (b) to \eqref{eqn:triple-star}.
  Rescaling $\eps \poly(\ln\frac{1}{\eps})$ to $\eps$ proves the claimed result, with additional $\poly\log\frac{1}{\eps}$ factors in the lower bound for $\alpha$.
\end{proof}

We are now ready to prove our main theorem of $(1+\eps)$-approximation, Theorem~\ref{thm:1+eps approx}.

\MainUpperBound*

\begin{proof}
  Let $(\hat{f},\hat{x}) = \argmin_{x \in \mathbb{R}^d, f \in
  \text{Lip}_1} \norm{S \Lambda(f(Ax)-b)}_p^p + \eps
  \norm{\Lambda Ax}_p^p$ and $\OPT = \norm{\Lambda (\bar{f}(A\bar{x})-b)}_p^p$. By the optimality of $(\bar{f},\bar{x})$, we have
  \[
    \norm{\Lambda (\bar{f}(A\bar{x})-b)}_p^p - \norm{\Lambda (\hat{f}(A\hat{x})-b)}_p^p \leq 0
  \]
  By Theorem~\ref{eps close bound}, with probability at least $0.9$, it holds that
  \begin{align*}
    \abs*{ ( \norm{S \Lambda(\hat{f}(A\hat{x})-b)}_p^p - \norm{\Lambda(\hat{f}(A\hat{x})-b)}_p^p) 
    - (\norm{S \Lambda(\bar{f}(A\bar{x})-b)}_p^p - \OPT)} \leq \eps (\OPT +
    \norm{\Lambda A \bar{x}}_p^p).
  \end{align*}
  This implies that
  \begin{align*}
    \norm{\Lambda(\hat{f}(A\hat{x})-b)}_p^p - \OPT
    &\leq \norm{S \Lambda(\hat{f}(A\hat{x})-b)}_p^p -
    \norm{S \Lambda(\bar{f}(A\bar{x})-b)}_p^p + \eps
    (\OPT + \norm{\Lambda A \bar{x}}_p^p) \\
    &\leq \eps \norm{\Lambda A \bar{x}}_p^p + \eps (\OPT + \norm{\Lambda A \bar{x}}_p^p)
    \quad (\text{by the optimality of } \hat{f},\hat{x}) \\
    &\leq \eps \OPT + 2\eps \norm{\Lambda A \bar{x}}_p^p.
  \end{align*}
  Therefore,
  \[
  \norm{\Lambda(\hat{f}(A\hat{x})-b)}_p^p \leq (1 + \eps) \OPT + 2\eps \norm{\Lambda A \bar{x}}_p^p.
  \]
  Rescaling $\eps$ completes the proof.
\end{proof}

\section{The case $p=1$}\label{sec:p=1}

In this section, we shall first prove an analogous version of \Cref{lem:Psi_3_final} for $p=1$. 
The issue is that in \Cref{lem:dudley's integral bound}, our entropy bound $\ln \cN(\pi_1(T), d_T^\phi, \eps)$ grows too rapidly and is thus insufficient to control Dudley's integral 
\[
\int_0^\infty \sqrt{\ln \cN(\pi_1(T), d_T^\phi, \eps)}d\eps,
\]
since the resulting upper bound integral diverges.
To address this issue, we instead invoke the more sophisticated control \eqref{eqn:sup_moment_bound2} of the supremum of a subgaussian process, which allows
us to truncate the Dudley's integral at a suitably chosen $\eps_0 > 0$.

Specifically, we obtain that
\begin{equation}\label{eqn:Psi_3_moment_dudley_p=1}
\E_\xi \left(\sup_{(f,x)\in T}\abs[\big]{ \inner{\xi_I}{Z_I(f,x)} - \inner{\xi_I}{Z_I(\bar{f},\bar{x})} }\right)^\ell \\
    \leq 
    C^\ell ( A + B + \sqrt\ell \diam(T, D_2))^\ell,
\end{equation}
where
\begin{align*}
    A &:= \left(\E_\xi\left(\sup_{\substack{(f_1,x_1),(f_2,x_2)\in T\\ D_2((f_1,x_1),(f_2,x_2))\leq C(\alpha F)^{1/2}\eps_0}}\abs[\big]{ \inner{\xi_I}{Z_I(f_1,x_1) - Z_I(f_2,x_2)} }\right)^\ell\right)^{\frac{1}{\ell}},\\
    B &:= \int_{C(\alpha F)^{1/2}\eps_0}^\infty \sqrt{\ln\cN(T, D_2, \eps)} d\eps.
\end{align*}
Here, $C$ is the hidden constant in \eqref{eqn:D_2_intermediate} and the truncation parameter $\eps_0$ is to be determined later.

By repeating the arguments in \Cref{lem:main_dudley_bound}, \Cref{lem:dudley's integral bound} and \Cref{lem:dudley_integral_lip1}, we can bound the integral term $B$ as
\[
    B \lesssim \alpha\Gamma\sqrt{\ln\frac{n}{\eps d}}\left(\ln^{\frac{5}{4}}d + \ln \frac{\sqrt{M\ln\kappa}}{\eps_0}\right),
\]
where $M = (d/n) R$, $\Gamma = (d/(\alpha n))^{1/2} F^{1/2} R^{1/2}$ and $\kappa = n/d$. Note that $\alpha \Gamma = (\alpha F)^{1/2}M^{1/2}$.

Next we upper bound $A$. Recall that
\[
D_2((f_1,x_1),(f_2,x_2)) = \norm{Z_I(f_1,x_1) - Z_I(f_2,x_2)}_2.
\]
Applying Cauchy-Schwarz inequality yields
\[
A \leq \left(\E_\xi\left(\sup_{\substack{(f_1,x_1),(f_2,x_2)\in T\\ D_2((f_1,x_1),(f_2,x_2))\leq C(\alpha F)^{1/2}\eps_0}}\norm{\xi_I}_2 \norm{Z_I(f_1,x_1) - Z_I(f_2,x_2)}_2 \right)^\ell\right)^{\frac{1}{\ell}} \leq C\sqrt{\abs{I}}(\alpha F)^{\frac{1}{2}}\eps_0.
\]

Combining \eqref{eqn:Psi_3_moment_dudley_p=1} with \eqref{eqn:Psi_3_moment_rewritten} and \eqref{eqn:Psi_3_moment_dudley}, we obtain that
\[
    \left(\E_S\Psi_3^\ell\right)^{\frac{1}{\ell}} \lesssim \frac{1}{\alpha}\left(\E_S |I|^{\frac{\ell}{2}}\right)^{\frac{1}{\ell}}(\alpha F)^{\frac{1}{2}}\eps_0 + \Gamma \sqrt{\ln\frac{n}{\eps d}}\left(\ln^{\frac{5}{4}}d + \ln \frac{\sqrt{M\ln\kappa}}{\eps_0} + \sqrt{\ell}\right).
\]
Choose $\eps_0$ such that 
\[
\sqrt{\alpha n} \eps_0 (\alpha F)^{1/2} = \alpha \Gamma, \quad\text{that is,}\quad \eps_0 = \frac{\sqrt{M}}{\sqrt{\alpha n}}.
\]
Since $\abs{I} \sim \Bin(|J|, \alpha)$ and $|J|\leq n$, \Cref{lem:binomial_moment} implies that
\[
    \E |I|^{\frac{\ell}{2}} \leq \left(\alpha |J| + \frac{\ell}{2}\right)^{\frac{\ell}{2}}\leq (\alpha n \ell)^{\frac{\ell}{2}}.
\]
Substituting this bound yields
\[
    \left(\E_S\Psi_3^\ell\right)^{\frac{1}{\ell}} \lesssim \Gamma\sqrt{\ell} + \Gamma\sqrt{\ln\frac{n}{\eps d}}\left(\ln^{\frac{5}{4}}d + \ln (\alpha n \ln \kappa) + \sqrt{\ell}\right)
    \lesssim
    \Gamma \ln\sqrt{\frac{n}{\eps d}}\left(\ln^{\frac{5}{4}}d + \ln n + \sqrt{\ell}\right).
\]
Finally, setting $\ell = \log(1/\delta)$ and applying Markov's inequality, we conclude that with probability at least $1-\delta$,
\[
    \Phi_3 \lesssim \Gamma \sqrt{\ln\frac{n}{\eps d}} \left( \ln^\frac{5}{4} d + \ln n + \sqrt{\ln \frac{1}{\delta}} \right).
\]
Thus, an analogous version of \Cref{lem:Psi_3_final} holds for $p=1$ as well (with an additional $\log n$ term). 

The same arguments establish corresponding versions of the remaining results in \Cref{sec:main_error_bound}, which includes \Cref{thm:main bound}. As a result, we obtain the $p=1$ version of \Cref{eps close bound}, with the modification that the condition on $\alpha$ requires an additional $\log n$ factor.

\section{Missing Proofs for the Lower Bound}\label{sec:lower_appendix}

This appendix supplies the auxiliary estimates and the deferred proof of Lemma~\ref{lem:code-reduction}.
The proof has three steps. First, we bound $\norm{Ax_i^\ast}_p^p$, thereby controlling the additive term $\eps\norm{Ax_i^\ast}_p^p$ in the approximation guarantee. Second, we show that any $\hat{x}$ satisfying that guarantee must have $i\in H(\hat{x})$ while $H(\hat{x})$ contains at most $O(\eps^{-p})$ indices. Finally, we compare the coordinates queried on $b^{(0)}$ with those queried on each $b^{(i)}$ and use the bound on $|H(\hat{x})|$ to obtain the query lower bound.

We begin with the norm estimate used to control the additive benchmark term $\eps\norm{Ax_i^\ast}_p^p$.

\begin{lemma}\label{lem:A-norm}
	If $F_i(x)\leq B^p$, then $\|Ax\|_p^p\le C_p(B^p + N\tau^p)$, where $C_p > 0$ is a constant depending only on $p$.
\end{lemma}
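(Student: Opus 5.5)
We need to bound $\|Ax\|_p^p$ given $F_i(x)\le B^p$. Write $x=(y,t)$ with $y\in\R^{d-1}$ and $t\in\R$ the last coordinate. Then
\[
(Ax)_{0,\pm}=\pm t,\qquad (Ax)_{j,\pm}=\pm\langle u_j,y\rangle-\frac{\tau t}{B}.
\]
The plan is to bound $|t|$ first, then the positive parts of all rows, and finally the negative parts via the paired rows.

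\textbf{Bounding $t$.} Since $b^{(i)}$ is zero on $(0,-)$, that row contributes $((-t)^+)^p\le F_i(x)\le B^p$, so $t\ge -B$. Row $(0,+)$ contributes $|t^+-B|^p\le B^p$, so $t^+\le 2B$. Hence $|t|\le 2B$, and in particular $|\tau t/B|\le 2\tau$.

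\textbf{Bounding the positive parts.} For every row $r$ we have $|(Ax)_r^+-b_r|\le B$. Since $b_r\in\{0,B\}$, this gives $(Ax)_r^+\le 2B$. For rows with $b_r=0$, which are all rows except $(0,+)$ and $(i,+)$, we also get
\[
\sum_{r:\,b_r=0}((Ax)_r^+)^p\le B^p.
\]
Thus the positive parts contribute at most $O_p(B^p)$ in total, counting the two exceptional rows as at most $(2B)^p$ each.

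\textbf{Bounding the negative parts.} This is the step needing care. For $j\ge1$, note $(Ax)_{j,-}=-(Ax)_{j,+}-2\tau t/B$. Hence if $(Ax)_{j,+}<0$, then
\[
|(Ax)_{j,+}|\le (Ax)_{j,-}+2\tau|t|/B\le ((Ax)_{j,-})^++4\tau,
\]
and symmetrically with the signs swapped. So each negative part is bounded by the positive part of its partner row plus $4\tau$. Using $(a+b)^p\le 2^{p-1}(a^p+b^p)$ and summing over $j\in[N]$, the negative parts contribute at most
\[
2^{p-1}\Bigl(\sum_r((Ax)_r^+)^p+N(4\tau)^p\Bigr).
\]
The anchor rows satisfy $|(Ax)_{0,\pm}|=|t|\le 2B$.

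\textbf{Combining.} Adding the contributions,
\[
\|Ax\|_p^p\le \sum_r((Ax)_r^+)^p+\sum_r((Ax)_r^-)^p\le C_p(B^p+N\tau^p).
\]
The only subtlety is the pairing step, which converts control of positive parts (all that the ReLU objective sees) into control of negative parts. The slack $2\tau t/B$ in that step is exactly the source of the $N\tau^p$ term.
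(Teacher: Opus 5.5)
Your proof is correct and follows essentially the same route as the paper: you bound the last coordinate via the anchor rows, control the positive parts directly by the loss, and transfer this to the negative parts through the pairing $(Ax)_{j,+}+(Ax)_{j,-}=-2\tau t/B$. The paper packages the same steps as a per-pair auxiliary inequality with a parameter $\lambda\in\{0,B\}$ and bounds $|s|$ at the end rather than first; the only slip in your write-up is cosmetic, since summing over both rows of each pair gives $2N(4\tau)^p$ rather than $N(4\tau)^p$, which is absorbed into $C_p$.
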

\begin{proof}

Note that
\begin{align*}
    \|Ax\|_p^p
    & =
    |(Ax)_{0,+}|^p + |(Ax)_{0,-}|^p + \sum_{j=1}^N \bigg(|(Ax)_{j,+}|^p + |(Ax)_{j,-}|^p\bigg).
\end{align*}
Write $x=(z,s)$ with $s=x_d$. For the pair indexed by $j\in[N]$, set $r_j=\langle u_j,z\rangle$ and $y=\tau s/B$; then $(Ax)_{j,+}=r_j-y$ and $(Ax)_{j,-}=-r_j-y$.
Namely, for each term in the summation, we have
\begin{align*}
    |(Ax)_{j,+}|^p + |(Ax)_{j,-}|^p
    & =
    |r_j-y|^p + |r_j+y|^p.
\end{align*}
We would like to exploit the condition $F_i(x)\leq B^p$.
Note that
\begin{align*}
    F_i(x)
    & =
    |((Ax)_{0,+})^+ - B|^p + |((Ax)_{0,-})^+|^p +  \\
    & + 
    \bigg(|((Ax)_{i,+})^+ - B|^p + |((Ax)_{i,-})^+|^p\bigg) \\
    & +
    \sum_{j=1, j\neq i}^N \bigg(|((Ax)_{j,+})^+|^p + |((Ax)_{j,-})^+|^p\bigg)
\end{align*}
and, for each term in the summation, we have
\begin{align*}
    \begin{cases}
        |(r_j-y)^+ - B|^p + ((-r_j-y)^+)^p & \text{if $j = i$}\\
        |(r_j-y)^+|^p + ((-r_j-y)^+)^p & \text{else.}
    \end{cases}
\end{align*}

	Hence, we first prove an auxiliary inequality to establish the connection between them. For $r,y\in\R$ and $\lambda\geq 0$, it holds that
	\[
	|r-y|^p + |r+y|^p \leq C_p\left( |(r-y)^+ - \lambda|^p + ((-r-y)^+)^p + \lambda^p + |y|^p \right).
	\]
	Indeed, the left-hand side is
	\[
	\bigg((r-y)^+\bigg)^p + \bigg((y-r)^+\bigg)^p + \bigg((-r-y)^+\bigg)^p + \bigg((y+r)^+\bigg)^p.
	\]
	Note that
	\begin{gather*}
		(r-y)^+ \leq |(r-y)^+ - \lambda| + \lambda, \\
		(y-r)^+ = (-r-y + 2y)^+ \leq (-r-y)^+ + 2|y|, \\
		(y+r)^+ = (r-y + 2y)^+ \leq (r-y)^+ + 2|y|,
	\end{gather*}
	the claimed result follows easily.

	We now apply this inequality to the paired rows of $A$.  Taking $\lambda=B$, for the planted pair $j=i$, gives
	\[
	|(Ax)_{i,+}|^p + |(Ax)_{i,-}|^p \leq C_p\left( \left|((Ax)_{i,+})^+-B\right|^p + (((Ax)_{i,-})^+)^p + B^p + \left(\frac{\tau |s|}{B}\right)^p \right).
	\]
	For every unplanted pair $j\neq i$, the corresponding labels are zero, so the same inequality with $\lambda=0$ gives
	\[
		|(Ax)_{j,+}|^p + |(Ax)_{j,-}|^p \leq C_p\left( (((Ax)_{j,+})^+)^p + (((Ax)_{j,-})^+)^p + \left(\frac{\tau |s|}{B}\right)^p \right),
		\qquad j\in[N]\setminus\{i\}.
	\]
	By a similar argument to the proof of the auxiliary inequality, the anchor pair satisfies
	\[
		|(Ax)_{0,+}|^p + |(Ax)_{0,-}|^p \leq C_p\left( \left|((Ax)_{0,+})^+-B\right|^p + (((Ax)_{0,-})^+)^p + B^p\right).
	\]
	Consequently, after summing over all row pairs, every prediction term on the right-hand side is absorbed by $F_i(x)$, and we obtain
	\[
	\norm{Ax}_p^p \leq C_p\left( F_i(x) + B^p + N\left(\frac{\tau |s|}{B}\right)^p \right).
	\]
	Considering the $(0,\pm)$-th rows of $F_i(x) = \norm{(Ax)^+ - b^{(i)}}_p^p$, we see that
	\[
	2|s|^p \leq C_p(|s^+ - B|^p + ((-s)^+)^p + B^p) \leq C_p(F_i(x) + B^p) \leq 2C_p B^p.
	\]
	Overall, we conclude that
	\[
	\norm{Ax}_p^p \leq C_p'(B^p + N\tau^p).
	\]
\end{proof}

Lemma~\ref{lem:A-norm} applies in particular to $x_i^\ast$ because \eqref{eqn:opt-upper} gives $F_i(x_i^\ast)=\OPT_i\leq B^p$. We use this observation next to show that any $\hat{x}$ satisfying the approximation guarantee \eqref{eqn:good-solution} produces a candidate set $H(\hat{x})$ that contains the planted index $i$ but has size $O_p(\eps^{-p})$.

\begin{lemma}\label{lem:good-solution-list}
	There exist constants $\alpha = \alpha(p), C_p, \kappa_p > 0$ such that, if $N\tau^p\leq \kappa_p\eps^{-p}$ and $\hat{x}$ satisfies
	\begin{align}\label{eqn:good-solution}
		F_i(\hat{x}) &\le	\OPT_i+\eps(\OPT_i+\norm{Ax_i^\ast}_p^p),
	\end{align}
	then
	\begin{equation*}
		i\in H(\hat{x}).
	\end{equation*}
    Recall that $F_i$ depends on $B = \frac{\alpha}{\eps}$.
    Also, $|H(\hat{x})|\le C_p\eps^{-p}.$
\end{lemma}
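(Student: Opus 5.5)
First bound the right-hand side of \eqref{eqn:good-solution}. By \eqref{eqn:opt-upper}, $\OPT_i\le (B-(1-\tau))^p$. Since $F_i(x_i^\ast)=\OPT_i\le B^p$, Lemma~\ref{lem:A-norm} gives $\norm{Ax_i^\ast}_p^p\le C_p(B^p+N\tau^p)$. With $B=\alpha/\eps$ and $N\tau^p\le\kappa_p\eps^{-p}$, this yields
\[
\eps(\OPT_i+\norm{Ax_i^\ast}_p^p)\le C_p'\eps(\alpha^p+\kappa_p)\eps^{-p}=C_p'(\alpha^p+\kappa_p)\eps^{1-p}.
\]
Hence $F_i(\hat x)\le (B-(1-\tau))^p+\eta$ with $\eta:=C_p'(\alpha^p+\kappa_p)\eps^{1-p}$. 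The key point is that the gain from fitting the planted row, namely $B^p-(B-(1-\tau))^p\asymp pB^{p-1}\asymp_p \alpha^{p-1}\eps^{1-p}$, has the same order as $\eta$. It dominates $\eta$ once $\alpha$ is fixed and then $\kappa_p$ is taken small relative to $\alpha^{p-1}$. Taking $\alpha$ small also makes the term $C_p'\alpha^p$ lower order than $\alpha^{p-1}$.

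\textbf{Proof that $i\in H(\hat x)$.} Suppose instead that $((A\hat x)_{i,+})^+<(1-\tau)/2$. Then the $(i,+)$ row contributes at least $(B-(1-\tau)/2)^p$, so
\[
F_i(\hat x)\ge (B-\tfrac{1-\tau}2)^p+|((A\hat x)_{0,+})^+-B|^p .
\]
Since $(B-\tfrac{1-\tau}{2})^p-(B-(1-\tau))^p\ge c_p\,\tfrac{1-\tau}{2}\,B^{p-1}\gtrsim_p\alpha^{p-1}\eps^{1-p}$ (using $p>2$ and $B>2$), this exceeds $(B-(1-\tau))^p+\eta$ under the constant choices above, which contradicts \eqref{eqn:good-solution}. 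The anchor term only adds a nonnegative amount, so no control of $s$ is needed here.

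\textbf{Bound on $|H(\hat x)|$.} Each $j\in H(\hat x)$ with $j\ne i$ has $b^{(i)}_{j,+}=0$ and so contributes at least $((1-\tau)/2)^p\ge (3/8)^p$ to $F_i(\hat x)$. Therefore
\[
(|H(\hat x)|-1)(3/8)^p\le F_i(\hat x)\le B^p+\eta\lesssim_p (\alpha^p+\kappa_p+1)\eps^{-p},
\]
which gives $|H(\hat x)|\le C_p\eps^{-p}$.

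\textbf{Main obstacle.} The delicate part is the first claim: balancing the constants $\alpha$ and $\kappa_p$ so that the additive slack $\eps\norm{Ax_i^\ast}_p^p$, which is of order $\eps^{1-p}$, stays below the gap $B^p$-derivative $\asymp B^{p-1}$. The plan is to expand $(B-t)^p$ by the mean value theorem on $t\in[(1-\tau)/2,1-\tau]$ to get an explicit lower bound $p(B-1)^{p-1}(1-\tau)/2$ on the gap. One then chooses $\alpha$ so that $C_p'\alpha^p\le\frac14 p(\alpha/2)^{p-1}\cdot\frac38$, and $\kappa_p$ similarly. Because $B=\alpha/\eps$, this also requires $\eps\le\alpha/2$, which holds by the basic setup.
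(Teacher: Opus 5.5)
Your proposal is correct and follows essentially the same route as the paper. You bound the slack $\eps(\OPT_i+\norm{Ax_i^\ast}_p^p)$ by $O_p((\alpha^p+\kappa_p)\eps^{1-p})$ via Lemma~\ref{lem:A-norm} and $N\tau^p\le\kappa_p\eps^{-p}$, rule out $i\notin H(\hat x)$ with the mean-value gap $(B-\frac{1-\tau}{2})^p-(B-(1-\tau))^p\gtrsim_p\alpha^{p-1}\eps^{1-p}$ after fixing $\alpha$ small and then $\kappa_p$ small, and bound $|H(\hat x)|$ by charging $(3/8)^p$ per spurious index; you also correctly note that $\kappa_p$ need only be small relative to $\alpha^{p-1}$, whereas the paper's ``relative to $\alpha^p$'' is a stronger but still sufficient choice.
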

\begin{proof}[Proof of {Lemma~\ref{lem:good-solution-list}}]
	By \eqref{eqn:opt-upper}, $\OPT_i\leq B^p$ and so $F_i(x_i^\ast)\leq B^p$. Lemma~\ref{lem:A-norm} therefore gives $\norm{Ax_i^\ast}_p^p\leq C_p(B^p+N\tau^p)$. It follows from
	\eqref{eqn:good-solution} that
	\begin{equation}\label{eqn:good-loss-upper}
		\begin{aligned}
			F_i(\hat{x})
			&\leq (B-(1-\tau))^p + \eps\left( (B-(1-\tau))^p + C_p(B^p + N\tau^p) \right) \\
			&\leq (B-(1-\tau))^p + \eps(B^p + C_p(B^p + \kappa_p \eps^{-p})).
		\end{aligned}
	\end{equation}
	Thus, recall that $B = \frac{\alpha}{\eps}$, we have
	\[
	F_i(\hat{x})	- (B-(1-\tau))^p \leq C_p' \eps^{1-p} (\alpha^p + \kappa_p).
	\]
	We first show that the planted index belongs to the list. Suppose that $i\notin H(\hat{x})$. By the definition of $H(\hat{x})$, we then have
	\[
	F_i(\hat{x}) \geq \left(B-\frac{1-\tau}{2}\right)^p.
	\]
	Thus
	\begin{align*}
		&\quad\ F_i(\hat{x}) - \left(B-(1-\tau)\right)^p \\
		&\geq \left(B-\frac{1-\tau}{2}\right)^p - \left(B-(1-\tau)\right)^p \\
		&\geq p\left(B-(1-\tau)\right)^{p-1}\frac{1-\tau}{2} \\
		&\geq p\left(\frac{B}{2}\right)^{p-1}\frac{1-\tau}{2} & \text{since $B\geq 2$ and $\tau \leq \frac{1}{4}$}\\
		&\geq \frac{3p}{4\cdot 2^p}\alpha^{p-1}\eps^{1-p} & \text{recall that $B = \frac{\alpha}{\eps}$}.
	\end{align*}
	We get a contradiction by choosing $\alpha$ sufficiently small, and then choosing $\kappa_p$ sufficiently small relative to $\alpha^p$. Therefore, $i\in H(\hat{x})$.

	It remains to bound the size of the list. Consider $j\in H(\hat{x})\setminus\{i\}$. Since $b_{j,+}^{(i)} = 0$,
	\[
	|((A\hat{x})_{j,+})^+-b_{j,+}^{(i)}|^p \geq \left(\frac{1-\tau}{2}\right)^p \geq \left(\frac{3}{8}\right)^p.
	\]
	Hence
	\[
	(|H(\hat{x})|-1)\left(\frac{3}{8}\right)^p \leq F_i(\hat{x}).
	\]
	Combining this with \eqref{eqn:good-loss-upper} and $B=\alpha/\eps$ gives
	\[
	|H(\hat{x})| \leq C_p'' \eps^{-p},
	\]
	where $C_p''>0$ depends on $\alpha$ and $\kappa_p$.
\end{proof}

We can now prove Lemma~\ref{lem:code-reduction}. The key point is that $b^{(i)}$ and $b^{(0)}$ differ only at $(i,+)$. Thus, unless the algorithm queries $(i,+)$, it observes the same values and produces the same output on these two inputs. Lemma~\ref{lem:good-solution-list} limits the number of planted indices $i$ for which this common output can satisfy the approximation guarantee.
\begin{proof}[Proof of Lemma~\ref{lem:code-reduction}]
	Let $\alpha$, $C_p$, and $\kappa_p$ be the constants from Lemma~\ref{lem:good-solution-list}, and set $\eps_{0,p}:=\alpha/2$. For each $i\in[N]$, let $\hat{x}^{(i)}$ be the algorithm's output on $b^{(i)}$. Define the set of planted indices on which the algorithm succeeds by
	\[
	S:=\left\{i\in[N]:F_i(\hat{x}^{(i)})\leq
	\OPT_i+\eps(\OPT_i+\norm{Ax_i^\ast}_p^p)\right\}.
	\]
	Because $I$ is uniform on $[N]$, the assumed success probability gives
	\[
	|S|\geq \frac{4}{5}N.
	\]

	We now compare these successful executions with the baseline execution. Run the algorithm on $b^{(0)}$. Let $Q\subseteq[N]$ be the set of indices $j$ for which the algorithm reads the entry corresponding to $(j,+)$, and let $\hat{x}^{(0)}$ be its output. Then $|Q|\leq q$.

	For each $i\notin Q$, the algorithm sees the same sequence of queried coordinates and responses on $b^{(i)}$ and $b^{(0)}$, since these vectors differ only at the unqueried entry $(i,+)$. Thus $\hat{x}^{(i)}=\hat{x}^{(0)}$. In particular, if $i\in S\setminus Q$, Lemma~\ref{lem:good-solution-list} implies that $i\in H(\hat{x}^{(0)})$. If $S\setminus Q$ is nonempty, applying that lemma to any $i\in S\setminus Q$ also gives $|H(\hat{x}^{(0)})|\leq C_p\eps^{-p}$. Therefore,
	\[
	|S| = |S\cap Q| + |S\setminus Q| \leq |Q| + C_p\eps^{-p},
	\]
	where the inequality is immediate when $S\setminus Q$ is empty. Consequently,
	\[
	|Q| \geq \frac{4}{5} N - C_p\eps^{-p}.
	\]
	Thus all but at most $C_p\eps^{-p}$ successful indices must lie in $Q$. Since $N\ge K_p\eps^{-p}$, choosing $K_p$ sufficiently large gives $|Q| = \Omega_p(N)$, and hence $q=\Omega_p(N)$.
\end{proof}

\end{document}